\documentclass[12pt,letterpaper]{article}
\usepackage[T1]{fontenc}
\usepackage{lmodern}
\usepackage[utf8]{inputenc}
\usepackage{geometry}
\usepackage{booktabs}
\usepackage{amsmath}
\usepackage{amsthm} 
\usepackage{amsfonts}
\usepackage{dsfont}
\usepackage{cancel}
\usepackage{amssymb}
\usepackage{subcaption}
\usepackage{bm}
\usepackage{mathtools}
\usepackage{changepage}
\usepackage{amsthm}
\usepackage{verbatim}
\usepackage{amsmath,amssymb}
\usepackage{graphicx}
\usepackage{emptypage}
\usepackage{newlfont}
\usepackage[all,cmtip]{xy}
\usepackage[bottom]{footmisc}
\usepackage[titletoc,title]{appendix}
\usepackage{chngcntr}
\usepackage{apptools}
\usepackage{listings}
\usepackage{color}
\usepackage{sgame, tikz} 
\usepackage{kpfonts}  
\usepackage{natbib}
\AtAppendix{\counterwithin{lem}{section}}
\usepackage{caption}
\usepackage{float}
\usepackage{epigraph}
\usepackage{xr-hyper}
\usepackage[colorlinks=true,linkcolor=blue, allcolors=blue]{hyperref}
\usepackage{sgamevar}

\theoremstyle{plain} 

\newtheorem{cor}{Corollary} 
\newtheorem{prop}{Proposition}
\newtheorem{claim}{Claim} 
\newtheorem{theorem}{Theorem}
\newtheorem{observation}{Observation}
\newtheorem{lemma}{Lemma}
\theoremstyle{definition}

\newtheorem{rmk}{Remark}

\makeatletter
\newcommand{\neutralize}[1]{\expandafter\let\csname c@#1\endcsname\count@}
\makeatother

\let\emptyset\varnothing

\DeclareMathOperator{\E}{\mathds{E}}

\newcommand{\R}{\mathds{R}}

\newcommand{\cl}{\text{cl}}

\renewcommand*\d{\mathop{}\!\mathrm{d}}
\renewcommand{\epsilon}{\varepsilon}
\newcommand{\argmax}{\operatornamewithlimits{argmax}}

\usepackage{setspace}

\theoremstyle{definition}

\reversemarginpar

\usepackage[shortlabels]{enumitem} 
\usepackage[nameinlink]{cleveref}
\crefname{manualasm}{assumption}{assumptions}
\crefname{cor}{corollary}{corollaries}
\crefname{observation}{Observation}{Observations}
\crefalias{prop}{proposition}
\crefname{claim}{claim}{claims}
\crefname{ex}{example}{examples}
\crefname{defn}{definition}{definitions}
\crefname{rmk}{remark}{remarks}

\makeatletter
\newcommand{\unlinkedref}[1]{\ref*{#1}}
\makeatother

\theoremstyle{plain}
\newtheorem{innercustomthm}{\customgenericname}

\newtheorem{innercustomprop}{\customgenericname}   %  ← NEW line
\newtheorem{innercustomcor}{\customgenericname}   %  ← NEW line

\providecommand{\customgenericname}{}
\newcommand{\newcustomtheorem}[2]{%
  \newenvironment{#1}[1]{%
    \renewcommand\customgenericname{#2}%
    \crefalias{#1}{innercustomthm}% use the *thm* label‑type
    \renewcommand\theinnercustomthm{\unlinkedref{##1}$'$}%
    \innercustomthm
  }{\endinnercustomthm}}

\newcustomtheorem{customtheorem}{Theorem}

\crefname{innercustomthm}{theorem}{theorems}
\Crefname{innercustomthm}{Theorem}{Theorems}

\newcommand{\newcustomproposition}[2]{%
  \newenvironment{#1}[1]{%
    \renewcommand\customgenericname{#2}%
    \crefalias{#1}{innercustomprop}% use the *prop* label‑type
    \renewcommand\theinnercustomprop{\unlinkedref{##1}$'$}%
    \innercustomprop
  }{\endinnercustomprop}}

\newcustomproposition{customproposition}{Proposition}

\newcommand{\newcustomcorollary}[2]{%
  \newenvironment{#1}[1]{%
    \renewcommand\customgenericname{#2}%
    \crefalias{#1}{innercustomcor}% use the *prop* label‑type
    \renewcommand\theinnercustomcor{\unlinkedref{##1}$'$}%
    \innercustomcor
  }{\endinnercustomcor}}

\newcustomcorollary{customcorollary}{Corollary}

\crefname{innercustomprop}{proposition}{propositions}
\Crefname{innercustomprop}{Proposition}{Propositions}
\Crefname{innercustomcor}{Corollary}{Corollary}

\theoremstyle{plain}% ensure italic body for the inner theorems

\allowdisplaybreaks

\newcommand\citetaliasyearpar[1]{\citetalias{#1} \citeyearpar{#1}}
\newcommand\citealptaliasyearpar[1]{\citetalias{#1} \citeyear{#1}}
	\defcitealias{redistribution}{Dworczak \textcircled{r} al.}
	\defcitealias{followup}{Akbarpour \textcircled{r} al.}
    \defcitealias{simplicity}{Li \textcircled{r} Dworczak}

\begin{document}

\title{Searchable Menus\thanks{We thank Nima Haghpanah, Jason Hartline, Ravi Jagadeesan, Andreas Kleiner, Elliot Lipnowski, Lea Nagel, Axel Niemeyer, Filip Tokarski, and several seminar audiences for helpful comments. Dworczak gratefully acknowledges the support received under the ERC Starting grant IMD-101040122.}}
\author{Frank Yang\thanks{Department of Economics, Stanford University. Email: shuny@stanford.edu} \quad \quad  \textsuperscript{\textcircled{r}}  \quad  Piotr Dworczak\thanks{Department of Economics, Northwestern University; Group for Research in Applied Economics. Email: piotr.dworczak@northwestern.edu}}
\date{\today}
\maketitle
\begin{abstract}
Multidimensional screening is (in)famously intractable. In this paper, we study optimal screening mechanisms subject to a tractability constraint from the \textit{agent}'s perspective. Specifically, we require that the menu of options offered by the designer can be ordered so that, regardless of her preference type, the agent can find a utility-maximizing option via greedy search: any locally optimal choice must also be globally optimal. In one-dimensional screening with the single-crossing property, this  requirement has no bite. In multidimensional environments, however, searchability restricts the set of implementable outcomes. In the multiproduct monopoly problem, the optimal searchable menu is a sparse upgrade menu: higher tiers offer higher allocation probabilities for every good, and the number of tiers is at most the number of goods. In a multidimensional screening problem with money and ordeals, the optimal searchable menu offers the agent a single way to obtain the good. In income taxation with rich multidimensional heterogeneity, a tax schedule is searchable if and only if it is progressive.

\medskip

\noindent\textbf{Keywords:}  Multidimensional screening, menu design, searchable menus, greedy search, upgrade menus, simplicity, bundling, costly screening, taxation
\end{abstract}

\setcounter{page}{1}
\newpage

\section{Introduction}

Multidimensional screening is an important yet notoriously intractable problem. Consider a seller who sells multiple goods to a single buyer with additive values. When there is one good, the optimal mechanism is simply a posted price (\citealt{myerson1981optimal}; \citealt{riley1983optimal}). As soon as there are two goods, however, the seller may optimally offer a menu with infinitely many options (\citealt*{daskalakis2013mechanism}). More starkly, for any fixed finite menu size, there are value distributions for which every menu of that size obtains an arbitrarily small fraction of the optimal revenue (\citealt{hart2013menu}). The optimal menu can also exhibit counterintuitive comparative statics: optimal revenue may decrease when buyer values shift upward in the stochastic-dominance sense (\citealt{hart2015maximal}). In fact, recent ``anything goes'' results show that, in multidimensional screening, virtually any incentive-compatible mechanism can be uniquely optimal for some value distribution (\citealt{lahr2025extreme}).

At the same time, most menus observed in practice are simple: they often contain only a few options and are organized in ways that make them easy to navigate. Complicated menus impose a search burden on agents. A benevolent social planner naturally internalizes the search costs and mistakes induced by a complicated menu. But even a revenue-maximizing seller may prefer a simple menu: a buyer who cannot easily identify a preferred option may act unpredictably, delay the decision, or abandon the purchase altogether. Thus, in practice, the effectiveness of menus as screening devices depends not only on the options they offer, but also on whether agents can navigate them to find utility-maximizing choices.

This paper formalizes this idea through a notion of search simplicity. In our single-agent  framework, any incentive-compatible allocation can be induced by a menu. We constrain the screening problem by requiring this menu to be \textit{\textbf{searchable}}. A menu is searchable if it can be endowed with a total order such that, regardless of the agent's preferences, any locally optimal choice is also globally optimal.\footnote{In the definition, we also require that, whenever an agent has multiple globally optimal options, these options form an interval in the menu order; this assumption makes the statements of some results in the paper more elegant but it does not play a substantial role---for finite menus, this interval property is implied by the local-to-global property under continuity and non-degeneracy assumptions.} Equivalently, the agent can locate a utility-maximizing choice through greedy search. Starting anywhere in the menu, the agent compares the current option with its neighbors and moves to an adjacent option whenever doing so increases utility. Once no local improvement is available, the agent is guaranteed to have reached a global optimum.\footnote{We formalize this intuition by providing a microfoundation for searchable menus via a class of extensive-form search games and the solution concept of one-step simplicity from \citet{pycia2023theory}.}

Our first result characterizes searchable menus in a fully general screening environment with abstract outcome and type spaces (\Cref{thm:main}). A menu is searchable if and only if, within every finite submenu, the incremental comparison sets are nested. That is, if $x,y,z$ are three consecutive options in any submenu, the set of types who prefer $y$ to $x$ must contain the set of types who prefer $z$ to $y$. We refer to this property as the \textit{\textbf{incremental nesting condition}}. We also show that, when the original menu is finite, it suffices to verify the condition for the menu itself (rather than for all its submenus). Intuitively, the incremental nesting condition characterizes searchability because it precisely identifies the menu structures under which greedy search is optimal: the agent can find her utility-maximizing choice by repeatedly comparing adjacent options and stopping as soon as moving to the next option would reduce her payoff.

An immediate consequence of our characterization is that, in any one-dimensional screening problem with single-crossing preferences, searchability imposes no additional restriction on the set of implementable outcomes (\Cref{cor:1d}). Indeed, under the Spence--Mirrlees condition, every implementable allocation rule is monotone in type; moreover, ordering menu options by increasing allocations  makes the incremental comparison sets nested. We view this as a clean benchmark: one-dimensional screening problems are tractable not only for the designer but also for the agent navigating the menu. By contrast, searchability restricts the set of implementable outcomes in multidimensional environments, where the single-crossing property generally fails. In fact, searchability can be seen as preserving the nesting logic of single crossing while dispensing with a primitive type order, endogenizing the menu order, and requiring nesting only among local comparison sets (see \Cref{sec:discussion}).

We apply the abstract characterization result to three canonical multidimensional screening problems. First, we consider the multiple-good monopolist problem. Specifically, the buyer has additive and non-negative values over $n \geq 1$ goods, drawn from a full-support joint distribution. The seller wants to maximize revenue among all searchable menus. Unlike the unconstrained problem, the searchability-constrained problem turns out to yield a drastically simpler class of optimal mechanisms. In particular, we show that,  regardless of the type distribution, there always exists an optimal searchable menu that is a \textit{\textbf{sparse upgrade menu}} (\Cref{thm:multiplegood}): higher tiers give weakly higher allocation probabilities for each good, the top tier is the grand bundle, and the number of priced tiers is at most the number of goods.  

We find two aspects of \Cref{thm:multiplegood} particularly striking. First, the \textit{form} of the optimal searchable menu does not depend on the distribution---paralleling the optimality of posted prices in the one-good monopolist problem (\citealt{myerson1981optimal}). Second, the result contrasts sharply with the unrestricted multidimensional problem, where optimal menus can be essentially arbitrary, possibly with infinitely many options and highly irregular structures. Searchability, an intuitive restriction that asks agents' choice problem to satisfy a local-to-global property, turns out to convert a generally intractable problem into a tractable class of tiered mechanisms. The optimal searchable menus also share intuitive features that make them attractive for practical applications: they are monotone upgrade menus, they use few tiers, and they satisfy revenue monotonicity with respect to stochastic dominance shifts in the value distribution (see \Cref{cor:fosd}).

The proof illustrates the force of the incremental nesting condition in \Cref{thm:main}. Because the agent’s preferences are linear, the incremental comparison sets are half-spaces in a Euclidean space, with boundaries given by indifference hyperplanes. Searchability rules out interior crossings among these hyperplanes. This nested geometry first allows us to transform any searchable menu into a searchable upgrade menu without reducing revenue. Conditional on the nested comparison regions, the remaining design problem can then be expressed as a linear program over weights assigned to the indifference hyperplanes. At each menu upgrade, the corresponding weight proportionally scales the upgrade probabilities for the goods and the upgrade price, thereby preserving the comparison boundary. Because there is one capacity constraint for each good---the allocation probability of a
good cannot exceed one---an extreme point argument implies that at most $n$ increments are needed. This yields an optimal searchable menu with at most $n$ tiers.

Other multidimensional screening problems with linear preferences share this structure. To illustrate, our second application considers a screening problem where a designer wants to allocate a single good but is allowed to use multiple instruments---in addition to money, the designer can also use costly screening devices (``ordeals''). The good has a constant marginal cost and the designer wants to maximize a general welfare function (possibly with redistributive preferences). The agents differ in their values for the good and costs of completing different ordeals.  Even for a profit-maximizing designer, this is a complex screening problem (\citealt{yangCostly}). However, we show that  among searchable menus, regardless of the number of costly instruments, it is always optimal for a designer to use a \textit{\textbf{posted-requirement menu}} (\Cref{thm:ordeal}): an agent receives the good if and only if she pays a posted price and completes a fixed set of ordeal requirements. Moreover, if the designer cares only about the profit, there is an optimal searchable menu that uses no ordeals and hence reduces to a posted price.

Finally, as a canonical example of a screening problem with nonlinear preferences, we consider the influential optimal income taxation problem of \cite{mirrlees1971}. We allow for unrestricted heterogeneity in preferences, imposing only that agents' utility functions are concave in consumption and labor supply (pre-tax earnings).  We call a type space \textit{\textbf{rich}} if it includes all preferences with a constant marginal rate of substitution between consumption and earnings. In any income taxation problem with a rich type space, we show that a menu of consumption-earning pairs is searchable if and only if it corresponds to a \textit{\textbf{progressive tax schedule}} (\Cref{thm:progressive}). Intuitively, if the tax schedule is progressive, disposable income is concave in earnings, and each type’s payoff over earnings is single-peaked; hence, the tax code is searchable. Conversely, for any non-progressive tax schedule, there exists a preference type for which some earnings level is locally optimal but not globally optimal regardless of how the menu is ordered; hence, searchability fails. Even though progressive tax schedules have intuitive appeal, they are not optimal in the classical income taxation framework of \cite{mirrlees1971}, except under restrictive parametric assumptions. \Cref{thm:progressive} provides a simple microfoundation for tax progressivity: increasing marginal tax rates are exactly what is needed to make the worker's problem tractable by guaranteeing that local-optimality conditions are sufficient for global optimality. Searchability also provides a technical benefit: it ensures the validity of  the perturbation method developed by \cite{Saez}---a central approach to solving modern optimal taxation problems---that relies on first-order conditions characterizing workers’ choices. 

Our main notion of searchability assumes that the menu is totally ordered, so that the agent searches along a line. This may be restrictive when products or policies naturally vary across multiple dimensions. If preferences are separable across dimensions, the agent may instead simplify her choice problem by searching over each dimension separately. We capture this possibility by extending searchability to the notion of \textit{\textbf{grid-searchability}}. We allow the designer to decompose the menu into several (totally ordered) component menus. The final allocation is determined by the agent’s choice of one option from each component menu. An allocation is locally optimal if the agent cannot benefit by replacing her choice in any single component menu with an adjacent option. We say that a menu is \textit{\textbf{grid-searchable}} if every locally optimal allocation is globally optimal. In quasilinear environments with rich additive-linear preferences, we show that grid-searchability requires \textit{\textbf{block-additive pricing}} (\Cref{thm:grid}): the price of a joint allocation must decompose across the component menus, and each component menu must itself be searchable.

In the context of monopoly pricing, this leads to a \textit{\textbf{block-separable upgrade menu}}, where the seller first partitions the goods into different brackets and then prices the goods in each bracket separately using a sparse upgrade menu (\Cref{cor:grid-multiplegood}). Depending on the type distribution, the seller would want to optimize over the bracketing structure---intuitively, finer bracketing does not lead to richer pricing because it also increases the dimensionality of the search space and hence introduces potentially more locally optimal choices for the agents. Besides monopoly pricing, we also apply \Cref{thm:grid} to revisit our other two applications---in particular, in the taxation context, grid-searchability provides a simple microfoundation for separability of taxes across income sources (\Cref{cor:grid-tax}).

\paragraph{Related Literature.}\hspace{-2mm}There is a large literature on multidimensional screening and optimal bundling (\citealt{stigler1963united}; \citealt{adams1976commodity}; \citealt*{McAfee1989MultiproductValues}; \citealt{armstrong1996multiproduct}). It is now understood that the unrestricted problem is analytically and computationally intractable and admits little general structure. Given these difficulties, one approach is to impose structure on the type distribution by identifying classes of distributions for which the optimal mechanism is tractable (\citealt{Manelli2007}; \citealt{pavlov2011optimal}; \citealt*{daskalakis2017strong}; \citealt{haghpanah2021pure}; \citealt{ghili2023characterization}; \citealt{yangCostly,yang2023nested}) or to endogenize the relevant type distribution through robustness or learning (\citealt{carroll2017robustness}; \citealt{deb2023}; \citealt{che2021robustly}; \citealt{pernoud2025bundling}). Another approach restricts the admissible mechanism class and studies the performance of simple mechanisms---such as separate sales, pure bundling, or bounded-size menus---relative to the unrestricted optimum (\citealt*{chawla2007algorithmic}; \citealt{babaioff2014simple}; \citealt*{cai2016duality}; \citealt*{babaioff2017menu}; \citealt{hart2017approximate}; \citealt*{frick2026multidimensional}). 

We take a different approach by imposing tractability on the \textit{agent's choice problem}. Rather than prescribing a particular class of simple mechanisms or directly bounding the size of the menu, we impose a restriction on its navigability: the menu must admit greedy search, so that local optimality implies global optimality. Searchability is therefore a restriction on the structure of the agent's induced optimization problem, rather than on the number or ex ante form of the options offered. It generates restrictions on mechanisms without imposing assumptions on the type distribution. Interestingly, several conditions previously identified for the exact optimality of simple mechanisms---including pure bundling (\citealt{haghpanah2021pure}; \citealt{ghili2023characterization}), nested bundling under incremental monotonicity (\citealt{yang2023nested}), and screening frontiers under comonotonic types  (\citealt{yang2026screening})---produce menus that are searchable. In those environments, primitives ensure that searchability and optimality do not conflict. We reverse the exercise: taking searchability as a premise, we study the mechanism structures implied by agent-side tractability.

Our paper therefore also relates to the literature on simplicity and complexity in choice and mechanism design. \citet{camara2025computational} characterizes preferences consistent with computationally tractable rational choice over a rich domain of menus, showing that it must effectively take the form of narrow choice bracketing. We instead characterize the menus that make choice tractable. The notions of tractability differ: \citet{camara2025computational} studies polynomial-time computation, whereas we study greedy searchability. More broadly, our focus on choice complexity distinguishes our paper from work on strategic simplicity, which asks whether agents can identify optimal strategies without sophisticated contingent reasoning or higher-order beliefs (\citealt{li2017obviously};  \citealt{borgers2019strategically};  \citealt{pycia2023theory}; \citealt{nagel2023measure}; \citealptaliasyearpar{simplicity}).\footnote{See also \cite{FefferTokarski2026}, who develop a theory of how agents can make strategic choices in complex mechanisms by recognizing analogies to mechanisms they already understand.} Even after strategic interaction is reduced to choosing from a menu, the resulting choice problem may itself be difficult; searchability addresses this residual source of complexity. 

The remainder of the paper proceeds as follows. \Cref{sec:model} presents the model and the concept of searchable menus; \Cref{sec:char} presents the characterizations of searchable menus; \Cref{sec:application} applies the characterizations to study three multidimensional screening problems; \Cref{sec:grid} generalizes the concept of searchable menus to grid-searchable menus and revisits the three applications; \Cref{sec:discussion} concludes by discussing the connection between searchability and single crossing and suggesting directions for future research. \Cref{app:proof} contains all omitted proofs. 

\section{Model}\label{sec:model}

\paragraph{Preliminaries.}\hspace{-2mm}We consider a framework with a single agent (equivalently, a unit mass of non-interacting agents).  There is a measurable \textit{\textbf{outcome space}} $\mathcal{X}$ and a measurable \textit{\textbf{type space}} $\Theta$. Each type $\theta \in \Theta$ has preferences over the outcomes $x \in \mathcal{X}$ described by a \textit{\textbf{utility function}} $u(x, \theta)$. Types are distributed according to a commonly known \textit{\textbf{type distribution}} $\mu \in \Delta(\Theta)$. We assume that there is an outside option $\underline{x} \in \mathcal{X}$ for which the utilities of all types $\theta$ are normalized to $0$.

\paragraph{Menus.}\hspace{-2mm}A \textit{\textbf{menu}} $(\mathcal{M}, \preceq)$ is any subset of outcomes that includes the outside option, totally ordered by $\preceq$. We require that $\preceq$ is order-embeddable into $\mathbb{R}$, i.e., that $(\mathcal{M}, \preceq)$ can be indexed by $s \in S \subseteq \R$ such that $x^s \preceq x^t$ for all $s \leq t$. We then write $\{x^s\}_{s\in S}$ to represent the menu. We additionally assume that (i) the index set $S$ is compact and (ii) for any $\theta$, $u(x^s,\theta)$ is upper-semicontinuous in $s \in S$. These assumptions guarantee that search over the menu is one-dimensional, and that every type has a well-defined optimal choice.\footnote{Since any set can be totally ordered, assuming that $\preceq$ is a total order does not by itself guarantee that the effective search space is well-behaved. See \cite{debreu1954representation} for primitive conditions under which $\preceq$ is order-embeddable into $\mathbb{R}$.} We say that menu $\mathcal{M}$ \textit{\textbf{implements}} an outcome (function) $\chi:\Theta\to \mathcal{X}$ if $\chi(\theta)\in \text{argmax}\{u(x,\theta):\,x\in \mathcal{M}\}$, for all $\theta\in \Theta$. Note that any outcome implemented by a menu---viewed as a direct mechanism---is incentive-compatible and individually rational for the agent.

\paragraph{Searchability.}\hspace{-2mm}Fix an index set $S\subseteq\mathbb{R}$. For any $s<t$ in $S$, define an \textit{\textbf{interval}} $[s,t]:= \{r\in S:\,s\leq r\leq t\}$; similarly, define an open interval as $(s, t) := \{r\in S:\,s <  r < t\}$. We call an interval $[\underline{s},\bar{s}] $ a \textit{\textbf{neighborhood}} of $s\in S$ if either $\underline{s}< s < \bar{s}$, or $\underline{s}< s=\bar{s}=\max(S)$, or $\min(S)=\underline{s}= s < \bar{s}$. 
We say that $s$ is a \textit{\textbf{local maximum}} of a function $g: S\to\mathbb{R}$ if $s$ maximizes $g$ on some neighborhood of $s$.\footnote{Note that for a finite menu, $s$ is a local maximum if and only if $g(s)$ is at least as high as the value of $g$ at the adjacent indices.} We say that $s\in S $ is a \textit{\textbf{global maximum}} of a function $g: S \to \R$ if $s$ maximizes $g$ on $S$. A menu $(\mathcal{M}, \preceq)$ is \textit{\textbf{searchable}} if for any type $\theta \in \Theta$: (i) 
\[\text{$s^\star \in S$ is a local maximum of } u(x^s, \theta)  \implies \text{$s^\star \in S$ is a global maximum of } u(x^s, \theta)\,\]
and (ii) the set of global maximizers of $u(x^s, \theta)$ is an interval.  The first condition is the substantive restriction; the second makes the exposition more elegant by ruling out the non-generic case in which there are several separated global maxima. In fact, for finite menus, the second condition is implied by the first under continuity and non-degeneracy assumptions defined precisely below (see \Cref{prop:interval} in \Cref{app:interval}).\footnote{Intuitively, this is the case because a violation of the interval property at some type implies that a \textit{nearby} type cannot have the local-to-global property under continuity and non-degeneracy assumptions.}

\paragraph{Microfoundation through Greedy Search.}\hspace{-2mm}Intuitively, searchability guarantees that the agent can find an optimal choice from the menu through greedy search. By ``greedy'' search, we mean that, at each step, the agent need only compare the current option with an adjacent option, without anticipating how the remainder of the search will unfold. This intuition can be formalized using the solution concept of one-step dominance from \citet{pycia2023theory}. Indeed, given a finite menu with options indexed by $\{1, \dots, m\}$, consider an \textit{\textbf{ascending search game}} where the agent searches the menu forward from index $1$ to index $m$: at each index, the agent can decide either to clinch the current option or to continue, and at the last index the agent has a choice over all options that she has seen. Similarly, define the \textit{\textbf{descending search game}} by letting the agent search in the reverse of the menu order. In \Cref{app:one-step}, we show that, under the same continuity and non-degeneracy assumptions, a finite menu is searchable if and only if both search games are one-step simple in the sense of \citet{pycia2023theory}, so that the agent can find an optimal choice using only one-step foresight (see \Cref{prop:search}).

\section{Characterizations of Searchable Menus}\label{sec:char}

For a finite menu $(\mathcal{M}, \preceq)$ of size $m$, without loss of generality, we write the index set $S = \{1, \dots, m\}$, and define its \textit{\textbf{incremental better-off sets}} as
\[
\mathcal{H}^k := \big\{\theta : u(x^{k-1}, \theta) \le u(x^k, \theta) \big\} \qquad \text{ for all $k \in \{2, \dots, m\}$}
\]
and its \textit{\textbf{strict incremental better-off sets}} as
\[
\mathcal{G}^k := \big\{\theta : u(x^{k-1},\theta) < u(x^k,\theta)\big\} \qquad \text{ for all $k \in \{2, \dots, m\}$}\,.
\]
We say that a menu $(\mathcal{M}, \preceq)$ satisfies the \textit{\textbf{incremental nesting condition}} if 
\[
\hspace*{10em}
\mathcal{H}^{k+1}\subseteq \mathcal{H}^k
\qquad\text{and}\qquad
\mathcal{G}^{k+1}\subseteq \mathcal{G}^k
\,\tag{\textbf{Incremental Nesting}}
\]
for all $k \in \{2, \dots, |\mathcal{M}| - 1\}$.

For any menu $(\mathcal{M}, \preceq)$, its \textit{\textbf{submenus}} are defined by its subsets $\mathcal{N} \subseteq \mathcal{M}$ equipped with the same order $\preceq$.  

Our main result in this section provides a characterization of searchable menus.

\begin{theorem}\label{thm:main}
A menu $(\mathcal{M},\preceq)$ is searchable if and only if every finite submenu satisfies the incremental nesting condition. If $\mathcal{M}$ is finite, this is equivalent to the incremental nesting condition for $(\mathcal{M},\preceq)$ itself.
\end{theorem}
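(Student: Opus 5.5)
The plan is to reduce everything to a statement about the sign pattern of utility increments along a finite chain, and then to transfer it to arbitrary menus through finite submenus. Fix a type $\theta$ and a finite menu $x^1\preceq\dots\preceq x^m$. Write $d_k:=u(x^k,\theta)-u(x^{k-1},\theta)$ for $k\in\{2,\dots,m\}$. For this type, the incremental nesting condition at $k$ reads: $d_{k+1}\ge 0\Rightarrow d_k\ge 0$, and $d_{k+1}>0\Rightarrow d_k>0$. First I will show that, if this holds for all $k$, then the sequence of signs of $(d_k)$ has the form $+,\dots,+,0,\dots,0,-,\dots,-$, where each block may be empty.

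The argument is short. A zero can only be preceded by a zero or a plus, because $d_{k+1}=0$ forces $d_k\ge 0$. A plus can only be preceded by a plus. Hence no $+$ or $0$ ever follows a $-$, and no $+$ follows a $0$. So $u(x^k,\theta)$ strictly increases, is then flat, and then strictly decreases. Its local maxima are exactly the plateau, which is the set of global maxima and is an interval. This gives the "if" direction for finite menus.

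For the converse in the finite case, suppose nesting fails for $\theta$ at $k$. Then $x^k$ is a "valley": either $u(x^k)<u(x^{k-1})$ and $u(x^k)\le u(x^{k+1})$, or $u(x^k)\le u(x^{k-1})$ and $u(x^k)<u(x^{k+1})$. Let $p$ be a maximizer of $u$ on $\{1,\dots,k\}$ and $q$ a maximizer on $\{k,\dots,m\}$. In case of ties, choose $p<k$ and $q>k$; this is possible because $x^{k-1}$ and $x^{k+1}$ weakly beat $x^k$.

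Each of $p$ and $q$ is then a local maximum of the whole sequence. The only neighbor of $p$ outside $\{1,\dots,k\}$ would be beyond $k$, which cannot happen since $p<k$; the case of $q$ is symmetric. Moreover at least one of $u(p)$, $u(q)$ strictly exceeds $u(x^k)$. If $p$ or $q$ is not a global maximum, condition (i) of searchability fails. If both are global maxima, then the set of global maximizers contains $p<k<q$ but not $k$, so condition (ii) fails. This settles the finite equivalence.

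For a general menu $\{x^s\}_{s\in S}$, I will prove both directions by passing to three-element submenus. Suppose every finite submenu satisfies nesting, and suppose $s^\star$ is a local maximum on a neighborhood $[\underline s,\bar s]$ but is not global, say $u(x^t)>u(x^{s^\star})$.
\begin{itemize}
\item If $t>\bar s$, then $\bar s>s^\star$ (otherwise $s^\star=\max S$, contradicting $t>s^\star$). The submenu $\{s^\star,\bar s,t\}$ has $u(x^{\bar s})\le u(x^{s^\star})$ and $u(x^t)>u(x^{s^\star})\ge u(x^{\bar s})$, which violates the $\mathcal{G}$-nesting.
\item If $t<\underline s$, the symmetric submenu $\{t,\underline s,s^\star\}$ gives a violation, this time of the $\mathcal H$-nesting.
\end{itemize}
If instead global maximizers $a<b$ surround a non-maximizer $c$, the submenu $\{a,c,b\}$ violates nesting. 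Hence the menu is searchable.

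Conversely, suppose that some finite submenu violates nesting for $\theta$, with a valley at $x$ between consecutive submenu elements $y\prec x\prec z$. I will then mimic the finite argument inside the full index set. Take $p$ to be a maximizer of $u(x^s,\theta)$ over $\{r\in S: r\le s_x\}$ and $q$ a maximizer over $\{r\in S:r\ge s_x\}$. These exist because these sets are compact and $u$ is upper-semicontinuous in $s$. Among maximizers, choose $p<s_x$ and $q>s_x$; this is possible because $y$ and $z$ weakly beat $x$. Then $[\min S,s_x]$ is a neighborhood of $p$ (or $p=\min S$), and $[s_x,\max S]$ is a neighborhood of $q$, so both are local maxima. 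The same dichotomy as in the finite case (not global, or non-interval argmax) then contradicts searchability.

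The main obstacle I expect is exactly this last step: making sure the chosen maximizers are genuine local maxima under the paper's somewhat unusual definition of neighborhood, especially when indices are not isolated. This requires care with ties, in particular the case $u(q)=u(x)$, where I deliberately select $q$ among the maximizers so that $q>s_x$, for example $q=s_z$. It also relies on compactness of $S$ and upper-semicontinuity to guarantee existence of the maximizers.
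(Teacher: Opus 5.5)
Your proof is correct, and it reaches the theorem by a partly different route from the paper's.

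The paper first proves \Cref{lem:char}. That lemma characterizes searchability as semi-strict quasi-concavity of $s\mapsto u(x^s,\theta)$, and also as the three-point condition (c) on all $s<t<r$. The paper then shows that (c) is equivalent to nesting on every finite submenu. It handles finite menus by a telescoping argument showing that nesting of $\mathcal{M}$ is inherited by every submenu.

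You bypass these intermediate characterizations. Your valley argument picks maximizers on $[\min S,s_x]$ and $[s_x,\max S]$, selected strictly on either side of $s_x$. This is essentially the paper's (a)$\Rightarrow$(b) argument, but it treats weak and strict violations in one pass; the paper needs a separate argument for the semi-strict part. Your three-element-submenu argument for the converse merges the paper's (c)$\Rightarrow$(a) step with its reduction to three-element submenus.

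The genuinely new piece is the finite case. There you replace telescoping with the sign pattern $+\cdots+\,0\cdots0\,-\cdots-$ forced by nesting of the full menu. That pattern makes searchability immediate, and the submenu statement then follows from your general direction. Your Step 2 is in fact subsumed by Step 4 applied to $\mathcal{M}$ itself. Your route is shorter and more transparent.

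What the paper's route buys is reusable by-products. \Cref{lem:char} is invoked later, for example in the proofs of \Cref{thm:progressive} and \Cref{prop:search}. Also, the telescoping argument treats the $\mathcal{H}$- and $\mathcal{G}$-families separately, so it shows that weak nesting $\mathcal{H}^{k+1}\subseteq\mathcal{H}^k$ alone is inherited by submenus. That fact is used in the proof of \Cref{cor:payoff}. Your sign-pattern argument uses both families jointly, so it yields heredity only of the full condition, and only by passing through searchability.
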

The full proof of \Cref{thm:main} is in the appendix---the next subsection offers a sketch. An immediate consequence of \Cref{thm:main} is that any submenu of a searchable menu is also searchable. 

\subsection{Proof Sketch of \Cref{thm:main}}

The proof consists of two steps. In \textbf{Step 1}, we observe that searchability of a menu is equivalent to an appropriate notion of quasi-concavity of the agent's payoff over menu options, which is in turn equivalent to certain set relations between types who prefer different options. In \textbf{Step 2}, we show that the set relations derived in Step 1 can be simplified to the incremental nesting condition.

\textbf{Step 1.} A function $g: S \to \R$ is called  \textit{\textbf{semi-strictly quasi-concave}} if  (i) $g$ is quasi-concave, i.e., 
\[g(t) \geq \min\big\{g(s_1), g(s_2)\big\}\]
for all $t \in [s_1, s_2]$ and all $s_1 < s_2 \in S$, and (ii) whenever $g(s_1)\neq g(s_2)$, the above inequality is strict for all $t \in (s_1, s_2)$.

\begin{lemma}\label{lem:char}
For any menu $(\mathcal{M}, \preceq)$, the following are equivalent: 
\begin{itemize}
    \item[(a)] The menu is searchable. 
    \item[(b)] For any $\theta \in \Theta$, $u(x^{s}, \theta)$ is semi-strictly quasi-concave in $s \in S$. 
    \item[(c)] For any $s<t<r$ in $S$, we have
\begin{equation*}\label{eq_nesting}
\big\{\theta : u(x^t,\theta)\le u(x^r,\theta)\big\} \subseteq
\big\{\theta : u(x^s, \theta) < u(x^t, \theta) \big\} \cup
\big\{\theta : u(x^s,\theta) = u(x^t,\theta) = u(x^r,\theta) \big\}\,.
\end{equation*} 
\end{itemize}    
\end{lemma}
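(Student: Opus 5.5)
The plan is to prove the cycle (a)$\Rightarrow$(b)$\Rightarrow$(a) and then (b)$\Leftrightarrow$(c). Fix a type $\theta$ and write $g(s)=u(x^s,\theta)$ on the compact index set $S$, with $g$ upper-semicontinuous. Since $\theta$ is arbitrary, everything reduces to statements about a single function $g$.

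For (b)$\Leftrightarrow$(c), I would unpack semi-strict quasi-concavity on triples $s<t<r$. Quasi-concavity says $g(t)\ge\min\{g(s),g(r)\}$, and the semi-strict part says $g(t)>\min\{g(s),g(r)\}$ whenever $g(s)\ne g(r)$. I will show this is equivalent to the following: if $g(t)\le g(r)$, then either $g(s)<g(t)$ or $g(s)=g(t)=g(r)$. That is exactly the pointwise content of (c).

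Suppose (b) holds and $g(t)\le g(r)$, and assume for contradiction that $g(s)\ge g(t)$ without $g(s)=g(t)=g(r)$. Then $g(t)\le\min\{g(s),g(r)\}$, so quasi-concavity forces $g(t)=\min\{g(s),g(r)\}$. There are two cases. If $g(s)\ne g(r)$, strictness is violated. If $g(s)=g(r)$, all three values are equal, which contradicts the assumption.

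Conversely, suppose (c) holds and take $s<t<r$. If $g(t)\le g(r)$, then (c) gives $g(t)>g(s)$ or all three values are equal, and in either case the required inequality for (b) follows. If $g(t)>g(r)$, the inequality is already strict against $g(r)$, so it holds as well.

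For (b)$\Rightarrow$(a), I would first take a local maximum $s^\star$ with neighborhood $[\underline s,\bar s]$ and suppose some $r$ satisfies $g(r)>g(s^\star)$; say $r>s^\star$. If $s^\star<\bar s$, pick $t\in(s^\star,\min\{\bar s,r\})$, or use $t=\bar s$ when $\bar s<r$. Applying semi-strict quasi-concavity to the triple $s^\star<t<r$ gives $g(t)>\min\{g(s^\star),g(r)\}=g(s^\star)$, which contradicts local maximality. There is one subtlety: the neighborhood may be one-sided with $s^\star=\max S$ or $s^\star=\min S$, in which case $r$ lies on the side where the neighborhood exists. The same argument then runs on that side.

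The interval property of the set of global maximizers also follows from (b). If $s<t<r$ with $s,r$ both maximizers, quasi-concavity gives $g(t)\ge\max g$, so $t$ is a maximizer too.

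For (a)$\Rightarrow$(b), I would argue by contradiction, and I expect this to be the main obstacle. Suppose (b) fails at some $s_1<t<s_2$.

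The first case is $g(t)<\min\{g(s_1),g(s_2)\}$. Restrict $g$ to $[s_1,t]$. This set is compact and $g$ is u.s.c., so a maximizer $s'$ of $g$ on $[s_1,t]$ exists, and $g(s')\ge g(s_1)>g(t)$, so $s'\neq t$. I then want to show that $s'$, or a suitable maximizer on a slightly larger interval, is a local maximum of $g$ on $S$ that is not global, or else that the global maximizers fail to be an interval.

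The care needed is at the right endpoint of the local neighborhood, because $s'<t$ must have a neighborhood contained in $[s_1,t]$. If $s'=s_1$ is the minimum of $S$ this is fine. Otherwise I would instead take the maximizer over $[\min S,t]$. Its neighborhood $[\min S,t]$ or $[\cdot,t]$ contains it strictly inside or at the boundary appropriately. That point is a local max, so by (a) it is global. Symmetrically, the maximizer over $[t,\max S]$ is global. Both are global maximizers lying on opposite sides of $t$, yet $t$ is not a maximizer, which contradicts the interval property (ii).

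The second case is $g(t)=\min\{g(s_1),g(s_2)\}$ with $g(s_1)\ne g(s_2)$, say $g(s_1)<g(s_2)=:M'$, so $g(t)=g(s_1)$. Here I take the maximizer over $[\min S,t]$, or handle it directly. If that maximizer has value above $g(t)$, the global maximizers straddle $t$ as in the first case. If not, then $t$ itself maximizes $g$ over $[\min S,t]$. I would then find $t'\in(t,s_2)$ keeping a local-max structure, using a maximizer of $g$ over $[\min S,t]$ extended slightly. This is where the careful choice of neighborhoods is needed, and I would handle it by also allowing the choice of $t$ to be the largest point of $[s_1,s_2]$ at which $g\le g(s_1)$, using u.s.c.
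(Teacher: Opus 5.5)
Your arguments for (b)$\Leftrightarrow$(c), for (b)$\Rightarrow$(a), and for the first case of (a)$\Rightarrow$(b) are sound. The last of these is exactly the paper's argument, using maximizers over $[\min S,t]$ and $[t,\max S]$. Proving (c)$\Rightarrow$(b) directly is a slightly cleaner route than the paper's cycle (a)$\Rightarrow$(b)$\Rightarrow$(c)$\Rightarrow$(a). One small tidy-up in (b)$\Rightarrow$(a): the set $(s^\star,\min\{\bar s,r\})$ may be empty. If $\bar s\ge r$, then $r$ lies in the neighborhood, so $g(r)\le g(s^\star)$ is an immediate contradiction; otherwise use $t=\bar s$.

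The genuine gap is the last subcase of the semi-strict part of (a)$\Rightarrow$(b): $g(s_1)<g(s_2)$, $g(t)=g(s_1)$, and $\max_{[\min S,t]}g=g(t)$. Here you focus on $t$, but $[\min S,t]$ is not a neighborhood of $t$, so $t$ need not be a local maximum. You then only sketch a plan: pass to some $t'\in(t,s_2)$, or redefine $t$ as the largest point of $[s_1,s_2]$ with $g\le g(s_1)$. That plan does not go through as stated. Upper semicontinuity makes the superlevel sets $\{g\ge c\}$ closed, not the sublevel sets, so such a largest point need not exist. You also never say which point would be a local but non-global maximum.

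The missing observation is that you already have that point. In this subcase $g(s_1)=g(t)=\max_{[\min S,t]}g$, so $s_1$ itself maximizes $g$ on $[\min S,t]$. Since $s_1<t$, this set is a neighborhood of $s_1$ in the sense of the definition: of interior type if $s_1>\min S$, and of left-endpoint type if $s_1=\min S$. Thus $s_1$ is a local maximum, yet $g(s_2)>g(s_1)$, which contradicts (a). The case $g(s_1)>g(s_2)$ is symmetric, using $s_2$ and $[t,\max S]$.

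This is how the paper closes the case. The paper first proves quasi-concavity and uses it to show $g\le g(s)$ on $[\min S,t]$. Your split according to whether $\max_{[\min S,t]}g$ exceeds $g(t)$ makes that step unnecessary.
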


The proof of \Cref{lem:char} is in the appendix. Intuitively, the semi-strict quasi-concavity of $u(x^s, \theta)$ in $s$ says that type $\theta$'s payoff over menu options is single-peaked and moreover, any nontrivial flat interval must be a global-maximizer plateau, i.e., the payoff is ``almost'' strictly quasi-concave except that it may admit a flat interval consisting of global maximizers. Note that such a property is enough for type $\theta$ to determine whether an option is globally optimal by simply comparing it with its adjacent options in the menu. Moreover, combined with the interval structure of global maximizers, the sufficiency of local comparisons becomes equivalent to the semi-strict quasi-concavity property, establishing the equivalence between \textbf{condition $(a)$} and \textbf{condition $(b)$}. To understand \textbf{condition $(c)$}, suppose $S$ is finite and take any three adjacent elements $s-1 < s < s+1$ in $S$. Note that \textbf{condition $(c)$} applied to these three elements says that for every type $\theta$, if the payoff weakly goes up when moving from $s$ to $s+1$, then either the payoff strictly goes up when moving from $s-1$ to $s$ or stays flat from $s-1$ to $s+1$. This rules out precisely any ``valley'' in the payoff function as well as any flat interval that is not globally optimal, recovering the semi-strict quasi-concavity property of  \textbf{condition $(b)$}. 
  
\textbf{Step 2.} We argue that (i) \textbf{condition $(c)$} of \Cref{lem:char} is equivalent to the incremental nesting condition for every finite submenu and (ii) the condition further simplifies when the original menu is finite. Part (i) follows by unpacking \textbf{condition $(c)$} and the definition of incremental better-off sets $\mathcal{H}^k$ and $\mathcal{G}^k$ for the submenu. Indeed, for any finite submenu $\mathcal{N} = \{x^{s_1}, \dots, x^{s_m}\}$ where $s_1 < \cdots < s_m$ are in $S$, applying \textbf{condition $(c)$} to $s_{k-1} < s_k  < s_{k+1}$ gives 
\[\mathcal{H}^{k+1} \subseteq \mathcal{G}^{k} \cup \Big(\big(\mathcal{H}^{k+1} \backslash \mathcal{G}^{k+1}\big)\cap \big(\mathcal{H}^{k} \backslash \mathcal{G}^{k}\big) \Big) \subseteq \mathcal{H}^{k}\,.\]
To show nestedness of $\{\mathcal{G}^k\}_k$, note that the above also gives 
\[\mathcal{G}^{k+1} \subseteq \mathcal{H}^{k+1} \cap \mathcal{G}^{k+1} \subseteq  
\Big[\mathcal{G}^{k} \cup \Big(\big(\mathcal{H}^{k+1} \backslash \mathcal{G}^{k+1}\big)\cap \big(\mathcal{H}^{k} \backslash \mathcal{G}^{k}\big) \Big)\Big] \cap \mathcal{G}^{k+1} \subseteq \mathcal{G}^{k}\,,\]
as desired. Moreover, the proof in the appendix shows that the converse also holds by applying the incremental nesting condition to appropriate finite submenus. 

Part (ii) of the proof shows why, when the original menu is finite, it is sufficient to simply check the incremental nesting condition for the full menu instead of all submenus. The key observation is a telescoping argument. Indeed, let $\mathcal{N}$ be any submenu. We index the options in $(\mathcal{N}, \preceq)$ by $l$ with $s_l \in S=\{1,...,m\}$ being the index in the full menu $(\mathcal{M}, \preceq)$. Let $\{{\mathcal{H}}_\mathcal{N}^l\}_l$ be the incremental better-off sets for the submenu and $\{\mathcal{H}^s\}_s$ be the incremental better-off sets for the full menu. For any $\theta \in {\mathcal{H}}_\mathcal{N}^{l+1}$, we have 
\[u(x^{s_{l+1}}, \theta) - u(x^{s_{l}}, \theta)  = \sum_{s = s_l}^{s_{l+1} - 1} \Big( u(x^{s+1}, \theta) - u(x^{s}, \theta)\Big) \geq 0\,,\]
which implies at least one of the terms is non-negative, and hence $\theta \in \mathcal{H}^{s+1}$ for some $s \ge s_l$. By the incremental nesting condition for the full menu, we have \[\theta \in \mathcal{H}^{s+1} \subseteq \mathcal{H}^{s} \subseteq \cdots \subseteq \mathcal{H}^{s_l}\subseteq \cdots \subseteq \mathcal{H}^{s_{l-1}+1}\,.\]
Then, chaining the implied inequalities from $\mathcal{H}^{s_l}\subseteq \cdots \subseteq \mathcal{H}^{s_{l-1}+1}$, it follows that 
\[u(x^{s_l}, \theta) \geq u(x^{s_{l-1}}, \theta)\,,\]
and hence $\theta \in {\mathcal{H}}_\mathcal{N}^{l}$, as desired. The same telescoping argument also works for the strict incremental better-off sets, and thus the theorem follows.

\subsection{Searchability and One-dimensional Screening}

In general, searchability imposes constraints on the outcomes that a designer can implement. However, an immediate consequence of \Cref{thm:main} is that searchability has no bite if the agent has single-crossing preferences (i.e., in one-dimensional screening environments). Formally, let $\mathcal{X} \subseteq \mathbb R^2$ and $\Theta\subseteq \mathbb R$. We say that $u((q, p), \theta)$ satisfies the \textit{\textbf{strict single-crossing condition}} if, for any $(q,p), (q', p') \in \mathcal{X}$ where $q < q'$, and any $\theta < \theta'$, we have 
\[u((q, p), \theta)\,\, \leq  \,\,u((q', p'), \theta) \implies u((q, p), \theta') \,\, < \,\,u((q', p'), \theta')\,. \]
Under quasilinear preferences, where $u((q, p), \theta) = v(q, \theta) - p$, the strict single-crossing condition reduces to the strict increasing differences condition on $v(q, \theta)$. 

We say that a menu $(\mathcal{M}, \preceq)$ is \textit{\textbf{essential}} if for every $x \in \mathcal{M}$, there exists some $\theta \in \Theta$ such that $x$ is an optimal choice for $\theta$. For implementable outcomes, it is without loss of generality to consider essential menus. 

\begin{cor}[One-dimensional screening is always searchable]\label{cor:1d}
Let $\mathcal{X} \subseteq \mathbb R^2$ and $\Theta\subseteq \mathbb R$. Suppose that $u((q, p), \theta)$ is strictly decreasing in $p$ and satisfies the strict single-crossing condition. Then, for every essential menu, its $q$-ordered menu is searchable. 
\end{cor}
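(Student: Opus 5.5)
The plan is to verify the incremental nesting condition for every finite submenu and then invoke \Cref{thm:main}.

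First I would check that the $q$-order is a total order on an essential menu $\mathcal{M}$. If two distinct options $(q,p),(q,p')$ share the same $q$ with $p<p'$, then strict monotonicity in $p$ means every type strictly prefers $(q,p)$. So $(q,p')$ is never optimal, which contradicts essentiality. Hence $q$ is injective on $\mathcal{M}$, and ordering by $q$ gives a total order embedded in $\mathbb{R}$ via $s=q$. I would take the compactness and upper-semicontinuity requirements from the definition of a menu as given, since they are part of what it means for $\mathcal{M}$ to be a menu under this order.

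Next, fix a finite submenu $x^1,\dots,x^m$ with $q^1<\dots<q^m$. By strict single crossing, each incremental set $\mathcal{H}^k=\{\theta: u(x^{k-1},\theta)\le u(x^k,\theta)\}$ is an upper set in $\Theta$. If $\theta\in\mathcal{H}^k$ and $\theta'>\theta$, then $\theta'\in\mathcal{G}^k$. The same holds for $\mathcal{G}^k$, since $\mathcal{G}^k\subseteq\mathcal{H}^k$. Upper sets of a subset of $\mathbb{R}$ are totally ordered by inclusion, but that alone does not give the required direction $\mathcal{H}^{k+1}\subseteq\mathcal{H}^k$. Essentiality is what supplies the direction.

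The key step is the following argument. Suppose $\theta\in\mathcal{H}^{k+1}\setminus\mathcal{H}^k$, so that $u(x^{k-1},\theta)>u(x^k,\theta)$ and $u(x^k,\theta)\le u(x^{k+1},\theta)$. Since $x^k$ is optimal in $\mathcal{M}$ for some type $\theta_k$, we have $\theta_k\in\mathcal{H}^k$ and $\theta_k\notin\mathcal{G}^{k+1}$. Because $\theta\notin\mathcal{H}^k$ and $\mathcal{H}^k$ is an upper set, $\theta<\theta_k$. But then $\theta\in\mathcal{H}^{k+1}$ and single crossing give $\theta_k\in\mathcal{G}^{k+1}$, a contradiction. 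Therefore $\mathcal{H}^{k+1}\subseteq\mathcal{H}^k$.

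For the $\mathcal{G}$ sets, take $\theta\in\mathcal{G}^{k+1}\setminus\mathcal{G}^k$. Then $u(x^{k-1},\theta)\ge u(x^k,\theta)<u(x^{k+1},\theta)$. Compare $\theta$ with $\theta_k$. If $\theta\le\theta_k$, then $\theta\in\mathcal{H}^{k+1}$ forces $\theta_k\in\mathcal{G}^{k+1}$ when $\theta<\theta_k$, and when $\theta=\theta_k$ it contradicts $x^k$ being optimal for $\theta_k$ directly. If $\theta>\theta_k$, then $\theta_k\in\mathcal{H}^k$ forces $\theta\in\mathcal{G}^k$, again a contradiction. Essentiality refers to optimality in all of $\mathcal{M}$, so it still gives $\theta_k$ weakly preferring $x^k$ to both neighbors within the submenu. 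This means the argument applies to every finite submenu, and \Cref{thm:main} then yields searchability.

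I expect the main obstacle to be the bookkeeping rather than the logic. One issue is that the witness $\theta_k$ works at all indices, including $k=1$ and $k=m$ where only one comparison matters. The other is making sure the weak and strict versions are handled with exactly the right inequalities.
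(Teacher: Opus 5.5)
Your proposal is correct and follows essentially the same route as the paper. You show that $q$ is injective on an essential menu. Then, for each finite submenu, you take the essentiality witness $\theta_k$ for $x^k$, apply strict single crossing to the $(x^k,x^{k+1})$ comparison to place $\theta$ relative to $\theta_k$, apply it again to the $(x^{k-1},x^k)$ comparison, and invoke \Cref{thm:main}. The only difference is presentational: you argue by contradiction through the upper-set property of $\mathcal{H}^k$, whereas the paper proves $\theta \geq \theta_k$ directly, with strict inequality for $\theta \in \mathcal{G}^{k+1}$.
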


\Cref{cor:1d} shows that one-dimensional screening problems are not only simple for the designer  but also for the agent. Intuitively, in one-dimensional problems, the type space is ordered (by assumption) and there is a natural ordering of any menu by increasing allocations. The single-crossing property then guarantees that any type's utility function is semi-strictly quasi-concave over the menu. Searchability follows as a consequence of \Cref{thm:main}. 

In multidimensional problems, there is generally no natural ordering on either the types or the allocations. In light of \Cref{thm:main}, searchability can be seen as restoring some of the structure imposed by single crossing: the designer must choose an ordering over the offered allocations, the nested incremental better-off sets impose a coarse ordering on types, and preferences over adjacent menu options satisfy an across-type restriction. We formalize the connection between searchability and single crossing in \Cref{sec:discussion}.

\subsection{Regular Environments}

The incremental nesting condition in \Cref{thm:main} can be further simplified once $\Theta$ is equipped with a topological structure and nearby types have similar preferences. We say that a menu $(\mathcal{M}, \preceq)$ satisfies the \textit{\textbf{weak incremental nesting condition}} if $\mathcal{H}^{k+1}\subseteq \mathcal{H}^k
$ for all $k \in \{2, \dots, |\mathcal{M}| - 1\}$. We say that an environment is \textit{\textbf{regular}} if (i) $\Theta$ is a topological space, (ii) $u(x, \theta)$ is continuous in $\theta \in \Theta$, and (iii) the following reachability condition holds: for any two outcomes $x \neq x' \in \mathcal{X}$, 
\[\big\{\theta: u(x, \theta) = u(x', \theta)\big\} \subseteq \cl\big({\{\theta: u(x, \theta) > u(x', \theta)\big\}}\big),\]
whenever the strict-comparison set is non-empty (for a set $A\subseteq\Theta$, $\cl(A)$ stands for the closure of $A$ in $\Theta$). The reachability condition assumes that there is no ``thick'' region of types who are indifferent between any two options. We say that a menu is \textit{\textbf{interior}} if every incremental better-off set $\mathcal{H}^k$ is not the full type space. 

\begin{cor}[Regular environments]\label{cor:regular}
In any regular environment, a finite interior menu is searchable if and only if it satisfies the weak incremental nesting condition. 
\end{cor}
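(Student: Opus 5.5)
The plan is to reduce the statement to \Cref{thm:main}. Because $\mathcal{M}$ is finite, \Cref{thm:main} says searchability is equivalent to the (full) incremental nesting condition for $(\mathcal{M},\preceq)$ itself. That condition consists of $\mathcal{H}^{k+1}\subseteq\mathcal{H}^k$ together with $\mathcal{G}^{k+1}\subseteq\mathcal{G}^k$.

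The ``only if'' direction is then immediate: a searchable finite menu satisfies incremental nesting, which in particular contains the weak condition. So the whole content of the corollary is the ``if'' direction. There I will show that, in a regular environment and for an interior menu, weak nesting $\mathcal{H}^{k+1}\subseteq\mathcal{H}^k$ for all $k$ already forces $\mathcal{G}^{k+1}\subseteq\mathcal{G}^k$ for all $k$.

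Fix $k\in\{2,\dots,|\mathcal{M}|-1\}$ and $\theta\in\mathcal{G}^{k+1}$, so that $u(x^k,\theta)<u(x^{k+1},\theta)$.

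\begin{enumerate}
\item Since $\mathcal{G}^{k+1}\subseteq\mathcal{H}^{k+1}\subseteq\mathcal{H}^k$, we have $u(x^{k-1},\theta)\le u(x^k,\theta)$. It remains to exclude equality, so suppose for contradiction that $u(x^{k-1},\theta)=u(x^k,\theta)$.
\item Menu options are distinct elements of $\mathcal{X}$, so $x^{k-1}\neq x^k$.
\item Interiority gives $\mathcal{H}^k\neq\Theta$. Hence the strict-comparison set
\[A:=\{\theta':u(x^{k-1},\theta')>u(x^k,\theta')\}\]
is non-empty.
\item The reachability condition then places $\theta$ in $\cl(A)$.
\item Continuity of $u(x,\cdot)$ in $\theta$ makes $\mathcal{G}^{k+1}$ open, as the set where one continuous function strictly exceeds another. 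So $\mathcal{G}^{k+1}$ is an open neighborhood of $\theta$, and it must intersect $A$ at some type $\theta'$.
\item This $\theta'$ lies in $\mathcal{G}^{k+1}\subseteq\mathcal{H}^{k+1}\subseteq\mathcal{H}^k$, so $u(x^{k-1},\theta')\le u(x^k,\theta')$. This contradicts $\theta'\in A$.
\end{enumerate}

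Therefore $u(x^{k-1},\theta)<u(x^k,\theta)$, i.e.\ $\theta\in\mathcal{G}^k$. Since $k$ was arbitrary, the full incremental nesting condition holds, and \Cref{thm:main} yields searchability.

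The main obstacle is the strict-inclusion step, i.e.\ ruling out a type that is indifferent between $x^{k-1}$ and $x^k$ yet strictly prefers $x^{k+1}$ to $x^k$. It is the only place where the three regularity ingredients and interiority are used. I expect the delicate point to be verifying that the hypotheses of the reachability condition are met: the strict set $A$ must be non-empty, which is exactly what interiority supplies, and the two outcomes must be distinct. Once that is in place, the openness of $\mathcal{G}^{k+1}$ and the chain of inclusions close the argument.
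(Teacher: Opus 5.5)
Your proposal is correct and follows the paper's proof essentially step for step. Necessity comes from \Cref{thm:main}. For sufficiency, you rule out $u(x^{k-1},\theta)=u(x^k,\theta)$ for $\theta\in\mathcal{G}^{k+1}$ by using interiority to make the strict-comparison set $A$ non-empty, reachability to place $\theta$ in $\cl(A)$, and openness of $\mathcal{G}^{k+1}$ to produce a type in $A\cap\mathcal{H}^{k}$. Your explicit check that $x^{k-1}\neq x^k$, which reachability requires, is a point the paper leaves implicit.
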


\Cref{cor:regular} states that in any regular environment the incremental nesting condition can be further simplified to only compare the weak incremental better-off sets $\mathcal{H}^k$.

An important special case of regular environments involves quasilinear preferences whose payoffs are linear in a finite-dimensional allocation. Formally, let 
\[\mathcal{X} = \mathcal{Q} \times \R \quad \text{ and } \quad u(x, \theta) = q \cdot \theta - p\,,\]
where $\mathcal{Q} \subseteq \R^d$ is any set of allocations containing the null
allocation $0$, and where $\Theta \subseteq \R^d$ is the \textit{\textbf{payoff type space}}. This formulation covers lotteries over a finite set $\mathcal{A}$ of social alternatives, by taking $\mathcal{Q}$ to be the set of sub-probability vectors on $\mathcal{A}$ and $\theta_a$ to be the agent's value for alternative $a$. We say that two menus are \textit{\textbf{outcome-equivalent}} if they can implement the same outcome Lebesgue-almost-everywhere on $\Theta$. As a consequence of \Cref{cor:regular}, the following result shows that in this setting searchability is equivalent to the weak incremental nesting condition. 

\begin{cor}[Payoff types]\label{cor:payoff}
Consider a quasilinear environment with a payoff type space $\Theta \subseteq \R^d$. Suppose $\Theta$ is convex with a non-empty interior. Then:
\begin{itemize}
    \item[(i)] Any finite menu satisfying the weak incremental nesting condition, with the outside option optimal for a positive measure of types, is outcome-equivalent to a searchable menu.
    \item[(ii)] Conversely, any finite searchable menu satisfies the weak incremental nesting condition. 
\end{itemize}
\end{cor}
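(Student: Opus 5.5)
Both parts will go through \Cref{cor:regular}. The quasilinear payoff-type environment with convex $\Theta$ of non-empty interior is regular. Indeed, $\Theta\subseteq\R^d$ carries the Euclidean topology and $u((q,p),\theta)=q\cdot\theta-p$ is continuous in $\theta$. For reachability, take $x=(q,p)\neq x'=(q',p')$; the indifference set is $\{\theta\in\Theta:(q-q')\cdot\theta=p-p'\}$.
\begin{itemize}
\item If $q\neq q'$, this is a hyperplane intersected with $\Theta$. For any $\theta$ on it, the strict set $\{(q-q')\cdot\theta>p-p'\}\cap\Theta$ is, when non-empty, an open half-space intersected with a convex set. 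Since the segment from $\theta$ to any point of the strict set lies in $\Theta$ by convexity, $\theta$ is in the closure of the strict set.
\item If $q=q'$, then $p\neq p'$, so the indifference set is empty and reachability holds trivially.
\end{itemize}
So the environment is regular.

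\textbf{Part (ii).} Let $\mathcal{M}$ be a finite searchable menu. By \Cref{thm:main} it satisfies incremental nesting, which in particular contains $\mathcal{H}^{k+1}\subseteq\mathcal{H}^k$. Hence it satisfies weak incremental nesting directly, and regularity is not even needed for this part.

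\textbf{Part (i).} Let $\mathcal{M}=\{x^1,\dots,x^m\}$ satisfy weak incremental nesting, with the outside option optimal for a positive measure of types. The plan is to delete options so that the menu becomes interior, then invoke \Cref{cor:regular}.

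First, a non-interior index $k$ means $\mathcal{H}^k=\Theta$, i.e.\ $x^k$ is weakly preferred to $x^{k-1}$ by every type. Two cases arise.
\begin{itemize}
\item If $q^k=q^{k-1}$, the two options differ only in price. Either they coincide, or one dominates the other for everyone, and the dominated one can be dropped.
\item If $q^k\neq q^{k-1}$, then $(q^k-q^{k-1})\cdot\theta\ge p^k-p^{k-1}$ on all of $\Theta$. Since $\Theta$ has non-empty interior, equality can hold only on a Lebesgue-null hyperplane slice. So $x^{k-1}$ is strictly worse than $x^k$ for a.e.\ type, and it is chosen only on a null set.
\end{itemize}
In either case removing $x^{k-1}$ leaves the implemented outcome unchanged Lebesgue-a.e. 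Note, however, that $x^{k-1}$ must never be the outside option: if it were, every type would weakly prefer $x^k$ to it, which forces the outside option to be chosen only on a null set or tied. This is exactly where the hypothesis that $\underline{x}$ is optimal for a positive measure of types is used.

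Second, removing an option must preserve weak incremental nesting. Removing $x^{k-1}$ merges two increments into one, with new better-off set $\{\theta: u(x^{k-2},\theta)\le u(x^k,\theta)\}$. Since $\mathcal{H}^k=\Theta$, the telescoping argument from Step 2 of the proof of \Cref{thm:main} should show that this set equals $\mathcal{H}^{k-1}$ up to the tie cases. That keeps the chain $\mathcal{H}^{k+1}\subseteq\cdot\subseteq\mathcal{H}^{k-2}\cdots$ nested.

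Third, I iterate these deletions, which terminates because the menu is finite. The result is an interior finite menu satisfying weak incremental nesting and outcome-equivalent to $\mathcal{M}$, hence searchable by \Cref{cor:regular}.

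\textbf{Main obstacle.} The hardest part is the bookkeeping of the deletions in part (i). One must verify both that nesting survives each merge, particularly when $\mathcal{H}^{k+1}$ becomes the merged set, and that deleting dominated options changes choices only on null sets. For the latter I would use that a finite union of hyperplanes has Lebesgue measure zero.
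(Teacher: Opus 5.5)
Your proposal is correct and follows essentially the paper's route: verify regularity, get (ii) directly from \Cref{thm:main}, and for (i) discard options that are optimal only on a Lebesgue-null set to obtain an interior submenu that still contains the outside option and inherits weak incremental nesting, then apply \Cref{cor:regular}. The differences are only bookkeeping. The paper deletes all null-demand options at once and then proves interiority by contradiction, whereas you delete one option at a time. The merge step you hedge on is just the fact, from the telescoping argument in the proof of \Cref{thm:main}, that weak nesting passes to every submenu; in fact $\mathcal{H}^k=\Theta$ together with weak nesting forces $\mathcal{H}^{k-1}=\Theta$, so no tie cases arise.
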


\section{Applications}\label{sec:application}

In this section, we apply our framework to three applications. \Cref{subsec:multiple-good} characterizes optimal searchable menus for a revenue-maximizing multiproduct monopolist; \Cref{subsec:ordeal} characterizes optimal searchable menus for a designer allocating a single good but using both money and ordeals to screen;  \Cref{subsec:taxation} characterizes searchable menus in the income taxation model with rich heterogeneity and shows that they coincide with progressive tax schedules. 

\subsection{Multiple-Good Monopoly Pricing}\label{subsec:multiple-good}

There are $n$ goods. Agents have quasilinear preferences in money and additive values for the goods. Specifically, let $\mathcal{X} = [0, 1]^n \times \R$, $\Theta = \R^n_{+}$, and $u(x, \theta) = q \cdot \theta - p$, where $q \in [0, 1]^n$ and $p \in \R$. We assume that the type distribution $\mu$ has full support with a density and finite first moments. The designer wants to maximize expected revenue. 

A menu $(\mathcal{M}, \preceq)$ is an \textit{\textbf{upgrade menu}} if $q^s \leq q^t$ for all $s < t$. We define the number of \textit{\textbf{tiers}} of such a menu as $|\mathcal{M}| - 1$ (the number of priced options). 

\begin{theorem}\label{thm:multiplegood}
There exists an optimal searchable menu $(\mathcal{M}, \preceq)$ such that $\mathcal{M}$ is an upgrade menu with at most $n$ tiers, where the highest tier is the grand bundle. 
\end{theorem}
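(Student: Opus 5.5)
The plan is to use linearity to turn the search for an optimal searchable menu into a linear program in weights attached to a fixed nested family of indifference hyperplanes, and then read off sparsity from an extreme-point argument. \Cref{cor:payoff}(ii) applies since $\Theta=\R^n_+$ is convex with nonempty interior. By it, a finite searchable menu $\{(q^k,p^k)\}_{k=1}^m$ has nested half-spaces $\mathcal{H}^k=\{\theta: \Delta q^k\cdot\theta\ge \Delta p^k\}$, where $\Delta q^k=q^k-q^{k-1}$ and $\Delta p^k=p^k-p^{k-1}$. Conversely, any finite menu whose half-spaces are nested and whose outside option is chosen by a positive measure of types is outcome-equivalent to a searchable one. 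Since $\mu$ has a density, revenue depends only on outcomes a.e., so I may work with the weak nesting condition throughout.

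\textbf{Step 0: reduction to finite menus.} I would first argue that the supremum over searchable menus is approached by finite ones. Any finite submenu of a searchable menu is searchable by \Cref{thm:main}. Finite submenus obtained by fine discretization of the index set $S$ should approximate revenue, using the finite first moments of $\mu$ for dominated convergence. Once the LP below is shown to have a finite-dimensional optimum uniformly bounded in $m$, a compactness argument gives existence of an optimum.

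\textbf{Step 1: pruning and monotone upgrades (the main obstacle).} Types near $0$ strictly prefer the outside option, since any priced tier must charge a positive price to earn revenue. Options preceding $\underline{x}$ in the order are therefore chosen only by a null set once nesting is imposed, and I would prune them, so the outside option becomes the first element. The crux is then to show that each increment direction can be taken nonnegative, $\Delta q^k\ge 0$, without lowering revenue. If $\Delta q^k$ had a negative coordinate $i$, the half-space $\mathcal{H}^k$ would fail to contain types far out along axis $i$. Nestedness would then force every $\mathcal{H}^{j}$ with $j>k$ to exclude them as well, so those types stop below tier $k$.

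My first attempt would be to replace $\Delta q^k$ by its positive part, together with a price adjustment that keeps the hyperplane fixed on the region of types actually choosing adjacent tiers. I would then verify two things. First, each type's payment weakly rises, using the telescoped revenue formula
\[
\mathrm{Rev}=\sum_{k\ge 2}\Delta p^k\,\mu(\mathcal{H}^k).
\]
Second, nesting survives, because enlarging all sets in a chain by truncation in the same coordinate preserves inclusion. This is where I expect real work. The geometry of non-crossing hyperplanes inside the orthant has to be used carefully, and I am not certain the positive-part modification preserves nesting without also reordering tiers.

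\textbf{Step 2: linear program and extreme points.} Fix the nested family of hyperplanes, normalized as $\{\theta: w_k\cdot\theta=1\}$ with $w_k\ge 0$ (after Step 1, $\Delta p^k>0$ can be assumed for relevant tiers). Write $\Delta q^k=\lambda_k w_k$ and $\Delta p^k=\lambda_k$ with $\lambda_k\ge 0$. Scaling an increment by $\lambda_k$ leaves its indifference hyperplane, and hence nesting, unchanged. Revenue becomes $\sum_k \lambda_k\,\mu(\{w_k\cdot\theta\ge 1\})$, which is linear in $\lambda$. Since $q^1=0$ and $q^m=\sum_k\lambda_k w_k$, the feasibility constraints are exactly the $n$ capacity constraints $\sum_k \lambda_k w_k\le \mathbf{1}$. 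A bounded LP with $n$ such constraints attains its optimum at a vertex with at most $n$ positive $\lambda_k$. Tiers with $\lambda_k=0$ merge with their predecessor, leaving an upgrade menu with at most $n$ tiers. Because the optimal value for the fixed chain does not depend on the number of candidate hyperplanes beyond $n$, this also supplies the uniform bound needed in Step 0.

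\textbf{Step 3: the top tier is the grand bundle.} Finally, if some good $i$ has slack, $q^m_i<1$, I would raise $q^m_i$ to $1$ at unchanged prices. Top-tier buyers pay the same, and the top half-space only enlarges, which moves some types up and weakly raises revenue. The point to check is that $\mathcal{H}^m$ remains inside $\mathcal{H}^{m-1}$. If it does not, I would instead allocate the slack to the top increment proportionally, re-solving the LP with the enlarged direction $w_m$ so that the capacity constraint for good $i$ binds.
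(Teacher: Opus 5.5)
Your overall architecture matches the paper's: reduce to finite menus, monotonize to an upgrade menu, solve an LP in the scaling weights $\lambda(k)$ with $n$ capacity constraints, take extreme points, and put the grand bundle on top. Your Step 2 is essentially the paper's extreme-point step. The genuine gap is Step 1, which you leave open and also set up the wrong way.

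First, you never prove that prices increase along the order, yet your revenue comparison and your normalization $w_k\cdot\theta=1$ both need $\Delta p^k>0$. Your pruning claim is also false as stated. The menu may be listed with $\underline{x}$ last, and a negatively priced option preceding $\underline{x}$ is chosen by all types near $0$. The paper argues as follows:
\begin{itemize}
\item If $\Delta p^k<0$, then type $0\in\mathcal{H}^k\subseteq\mathcal{H}^j$ forces $\Delta p^j\le 0$ for all $j<k$, so prices are quasi-convex along the order.
\item High types force $\underline{x}$ to be an endpoint.
\item Deleting all options with nonpositive price weakly raises every type's payment and preserves searchability, since submenus of searchable menus are searchable.
\item Reversing the order if needed then gives $\Delta p^k>0$ for all $k$.
\end{itemize}

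Second, no price adjustment is needed. Keep $\Delta p^k$ fixed and replace $\Delta q^k$ by its positive part. Since $\theta\ge 0$, every $\mathcal{H}^k$ weakly grows, so $\sum_k\Delta p^k\mu(\mathcal{H}^k)$ weakly rises. Nesting survives not by a general truncation principle but because of the sign structure your axis-$i$ observation already gives. Let $k(i)$ be the first index with $\Delta q^{k(i)}_i<0$. Testing $\theta=Ke_i$ with $K$ large gives $\Delta q^j_i\le 0$ for all $j>k(i)$. So the positive part zeroes coordinate $i$ exactly from $k(i)$ on, and the new cumulative quantities $q_i^{\min(r,k(i)-1)}$ stay in $[0,1]$. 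Now take $j<r$ and $\theta\in\bar{\mathcal{H}}^r$, and let $\theta'$ set $\theta_i=0$ whenever $k(i)\le r$. Then $\theta'\cdot\Delta q^r=\theta\cdot\Delta\bar q^r\ge\Delta p^r$, so $\theta'\in\mathcal{H}^r\subseteq\mathcal{H}^j$. Hence $\theta\cdot\Delta\bar q^j\ge\theta'\cdot\Delta q^j\ge\Delta p^j$, because the extra coordinates (those with $j<k(i)\le r$) have $\Delta q^j_i\ge 0$.

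Your Step 3 modifies the wrong increment, and raising the top increment can break $\mathcal{H}^m\subseteq\mathcal{H}^{m-1}$. Take two goods and tiers $(0,0;0)$, $(0.5,0.1;0.5)$, $(1,0.1;1.5)$. The sets $\mathcal{H}^2=\{0.5\theta_1+0.1\theta_2\ge 0.5\}\supseteq\mathcal{H}^3=\{\theta_1\ge 2\}$ are nested. Raising $q^3_2$ to $1$ at the same price gives $\mathcal{H}^3=\{0.5\theta_1+0.9\theta_2\ge 1\}\ni(0,1.2)\notin\mathcal{H}^2$. Type $(0,1.2)$ then has payoffs $0,-0.38,-0.3$, so the top tier is a local but not global optimum. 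Tilting $w_m$ proportionally also moves the hyperplane, so your fallback does not help.

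The paper instead adds $\varepsilon$ of good $i$ to the \emph{bottom} increment $\Delta q^2$ at unchanged price. This raises $q^k_i$ by $\varepsilon$ in every tier, which is feasible by the slack, but enlarges only $\mathcal{H}^2$, the outermost set of the chain. Nesting is therefore automatic, and with $\Delta p^2>0$ and full support, revenue strictly increases. Hence every optimal upgrade menu ends in the grand bundle.

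A smaller point on Step 0: a fine finite subset of the index set need not approximate the indirect utility, since $u(x^s,\theta)$ is only upper semicontinuous in $s$. The paper instead selects a countable subfamily of options whose affine payoffs have supremum $U$. It then gets a.e. convergence of payments from convergence of gradients of the convex approximants (Dini and Attouch), and only then applies dominated convergence.
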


The proof is in the appendix; we provide an intuitive sketch in the next subsection. As \Cref{thm:multiplegood} demonstrates, searchability---which constrains the seller to offer menus that make the buyer's choice simple---also drastically simplifies the seller's problem.  In particular, \Cref{thm:multiplegood} identifies a simple form that optimal searchable menus take for \textit{any} continuous type distribution. This is in sharp contrast to the ``anything goes'' results in the multidimensional screening literature: it is known that in the multiple-good monopolist problem, the set of mechanisms that are uniquely optimal for some type distribution is dense in the set of all IC and IR mechanisms (\citealt{lahr2025extreme}). In contrast,  \Cref{thm:multiplegood} shows that there always exists an optimal searchable menu whose menu-size complexity is at most the number of goods. We explain the key reasons behind this contrast after sketching the proof (see \Cref{rmk:extreme}). 

The upgrade menus identified in \Cref{thm:multiplegood} are in fact undominated in the sense of \citet{Manelli2007} and hence must be optimal for \textit{some} type distribution even among all menus (searchable or not) by the results of \citet{Manelli2007} and \citet{lahr2025extreme}: 
\begin{prop}\label{prop:undominated}
For any upgrade menu identified in \Cref{thm:multiplegood}, there exists a type distribution with a bounded density under which it is optimal among all menus. 
\end{prop}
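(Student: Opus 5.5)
The plan is to show that every menu of the form in \Cref{thm:multiplegood} is \emph{undominated} in the sense of \citet{Manelli2007}, and then invoke the known converse. Recall the definition: a mechanism is dominated if some other IC and IR mechanism extracts weakly higher revenue from every type and strictly higher revenue from some type. The converse we need is that an undominated mechanism is optimal, among all menus, for some type distribution; \citet{Manelli2007} prove this by a separating-hyperplane argument, and \citet{lahr2025extreme} refine it to distributions with a bounded density on a bounded box.

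First, normalize the menu. Write it as $\{(q^k,p^k)\}_{k=0}^{K}$ with $(q^0,p^0)=(0,0)$, $q^K=\mathbf 1$ and $K\le n$. Any tier that no type strictly prefers can be deleted without changing the implemented outcome almost everywhere. This lets us assume that all priced tiers have $p^k>0$; otherwise $q^k\ge 0$ would make that tier weakly better than the outside option for every type. Let
\[
U(\theta)=\max_k\{q^k\cdot\theta-p^k\}
\]
be the indirect utility. It is convex and piecewise linear, vanishes on a neighborhood of $0$ because all prices are positive, and satisfies $U(t\theta)/t\to \mathbf 1\cdot\theta$ as $t\to\infty$ because the top tier is the grand bundle. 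Revenue from type $\theta$ is $R_U(\theta)=\nabla U(\theta)\cdot\theta-U(\theta)$, which is defined almost everywhere.

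Second, prove undominatedness by a ray argument. Let $V$ be the indirect utility of a competing IC and IR mechanism, so $V$ is convex with $V\ge 0$, $V(0)=0$ (up to normalization), and $\nabla V\in[0,1]^n$. Fix a direction $\theta\in\R^n_+$ and set $f(t)=U(t\theta)$ and $g(t)=V(t\theta)$. Along the ray, revenue equals $t f'(t)-f(t)=t^2\,\tfrac{d}{dt}\big(f(t)/t\big)$. Hence if $V$ dominates $U$, then
\[
h(t):=\frac{g(t)-f(t)}{t}
\]
is nondecreasing in $t$. Two boundary facts pin $h$ down.
\begin{itemize}
\item For small $t>0$ we have $f(t)=0$, so $h(t)=g(t)/t\ge 0$.
\item As $t\to\infty$, $g(t)/t\to\lim \nabla V\cdot\theta\le \mathbf 1\cdot\theta$ while $f(t)/t\to\mathbf 1\cdot\theta$, so $\lim h\le 0$.
\end{itemize}
A nondecreasing function that starts at a nonnegative value and has a nonpositive limit must be identically $0$. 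So $V=U$ on every ray, and the revenues coincide almost everywhere. Thus no mechanism dominates $U$. The argument uses only two features of the menu: positive prices, which make the outside option optimal near the origin, and the top tier being the grand bundle. The upgrade structure and the bound $K\le n$ are not even needed for this step.

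Third, apply the result of \citet{lahr2025extreme} (building on \citet{Manelli2007}) that undominated mechanisms are optimal for some type distribution with bounded density. This yields the proposition.

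I expect the main obstacle to be this last step. One must check that the menu meets the exact hypotheses of the cited theorem. Their type space is typically a compact box $[0,\bar v]^n$, so the argument above has to be rerun on the box. There the limit $t\to\infty$ is replaced by evaluation at the upper boundary. The endpoint comparison still works provided $\bar v$ is large enough that the grand bundle is chosen at the top corner, which we can ensure by choosing $\bar v$ large. We would also need to confirm that their notion of domination (pointwise in revenue, possibly allowing measure-zero exceptions) matches the one verified here.
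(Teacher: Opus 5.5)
Your ray argument is correct on the unbounded type space $\R^n_+$, but that is not what the converse you cite requires. The step from undominatedness to optimality under a bounded density (Lemma 7 of \citet{lahr2025extreme}, building on \citet{Manelli2007}) is applied on a compact box $[0,\bar\theta]^n$. Undominatedness on $\R^n_+$ does not transfer to the box, because a competitor only has to weakly beat $U$ on the box.

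The repair you sketch is to rerun the ray argument on the box, with the limit $t\to\infty$ replaced by the exit point and $\bar v$ chosen so that the grand bundle is bought at the top corner. This fails. On a bounded ray, the endpoint comparison needs the analogue of $f(t)/t\to\mathbf 1\cdot\theta$, namely $\nabla U(t\theta)\cdot\theta\ge\nabla V(t\theta)\cdot\theta$ near the exit point. Against an arbitrary IC--IR competitor, this is guaranteed only if the exit type receives every good $j$ with $\theta_j>0$ with probability one. Rays that leave the box through a face away from the corner violate this for every $\bar v$. Take $n=2$, tiers $((1,0),p^2)$ and $((1,1),p^3)$, and direction $\theta=(1,\epsilon)$ with $\epsilon\bar v<p^3-p^2$ and $\bar v>p^2$. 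The exit type $(\bar v,\epsilon\bar v)$ buys $(1,0)$, so $U$ has slope $1<1+\epsilon$ along the ray. A competitor may be steeper there, and monotonicity of $h$ then yields no upper bound on $h$.

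The missing ingredient is a condition on every upper face, used jointly across rays rather than ray by ray. The paper picks $\bar\theta$ with $\bar\theta(1-q^k_i)>p^m-p^k$ whenever $k<m$ and $q^k_i<1$. Then every type with $\theta_i=\bar\theta$ strictly prefers the grand bundle to any tier with $q^k_i<1$, and so receives $q_i=1$. This is no marginal distortion at the top in each dimension, which is strictly stronger than your top-corner condition. Even this does not rescue the per-ray comparison on the ray above: there the exit type gets $q_1=1$ but $q_2=0$. Together with exclusion of a positive-measure set of low types, Lemma 10 of \citet{lahr2025extreme} gives undominatedness on $[0,\bar\theta]^n$, and Lemma 7 then delivers the bounded-density distribution.

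A smaller point concerns positivity of the prices. It should be taken from Step 1 of the proof of \Cref{thm:multiplegood}, which yields $\Delta p^k>0$. Your deletion argument does not give it, because a tier with $p^k\le 0$ weakly dominates the outside option rather than the reverse. Positivity is genuinely needed: for example, $\{(0,0),(\mathbf 1,0)\}$ is dominated by a small posted price for the grand bundle.
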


\Cref{prop:undominated} implies that the characterization in \Cref{thm:multiplegood} is tight: Without further assumptions on the type distribution, we cannot make a stronger prediction about the form of the optimal mechanism. In particular, it may be optimal to offer intermediate tiers in the upgrade menu (with interior probabilities of obtaining each good) in addition to the grand bundle  (see \Cref{app_example} for a simple example illustrating this possibility).  

Searchability also restores certain natural properties of the optimal mechanism. For a general bundling problem, it is known from \citet{hart2015maximal} that revenue in the optimal mechanism may decrease when the agents' values increase in the stochastic dominance sense. This cannot happen with searchable menus:

\begin{cor}\label{cor:fosd}
    If the seller is restricted to searchable menus, the optimal revenue weakly increases when the buyer's type distribution is increased in the stochastic dominance sense. 
\end{cor}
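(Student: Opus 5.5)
The plan is to show that, for each fixed upgrade menu, revenue weakly increases under a first-order stochastic dominance shift. The corollary then follows by taking the optimum over the class of such menus, which by \Cref{thm:multiplegood} is without loss for every distribution. Concretely, let $\mu \preceq_{FOSD} \mu'$. By \Cref{thm:multiplegood} applied to $\mu$, there is an optimal searchable menu for $\mu$ that is an upgrade menu $\{(q^k,p^k)\}_{k=1}^{m}$, with $q^1=0$ and $p^1=0$ (the outside option), $q^k \le q^{k+1}$, and $m-1\le n$. Any upgrade menu is feasible for $\mu'$, since searchability is a property of the menu and the type space $\Theta=\R^n_+$, not of the distribution. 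So it suffices to show that the revenue of this fixed menu is weakly higher under $\mu'$ than under $\mu$. The optimal searchable revenue under $\mu'$ is at least this amount.

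The first step is to write revenue as an expectation of a monotone function. Since the menu is searchable, \Cref{lem:char} and \Cref{thm:main} give the incremental nesting structure. Each type's choice is therefore determined greedily: type $\theta$ picks the largest index $k$ such that $\theta\in\mathcal{G}^j$ for all $j\le k$ (ties aside, a measure-zero event under a density). By nesting this is just the largest $k$ with $\theta\in\mathcal{G}^k$. Hence the revenue is
\[
\sum_{k=2}^m (p^k-p^{k-1})\,\mu(\mathcal{G}^k),
\]
where $\mathcal{G}^k=\{\theta:(q^k-q^{k-1})\cdot\theta > p^k-p^{k-1}\}$.

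The second step is the monotonicity argument. Because the menu is an upgrade menu, $q^k-q^{k-1}\ge 0$ componentwise, so each $\mathcal{G}^k$ is an upper set in $\R^n_+$. Equivalently, the index chosen by $\theta$ is a nondecreasing function of $\theta$. The main obstacle is the sign of the price increments. I will first restrict to menus with $p^k\ge p^{k-1}$. This is without loss: if $p^k<p^{k-1}$ while $q^k\ge q^{k-1}$, then option $k-1$ is dominated and chosen by no type outside a null set. It can be deleted; the resulting submenu remains searchable and upgrade, and revenue is unchanged almost everywhere. After this pruning, every coefficient $p^k-p^{k-1}$ is nonnegative. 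Revenue is then a nonnegative combination of probabilities of upper sets, and equivalently the payment $p^{k(\theta)}$ is nondecreasing in $\theta$. First-order stochastic dominance, in the multivariate sense that $\mu'(U)\ge\mu(U)$ for every upper set $U$, or that $\E_{\mu'} f \ge \E_\mu f$ for nondecreasing $f$, then gives the result directly. If the paper intends a weaker notion of dominance, such as coordinatewise marginal dominance, the same argument does not obviously go through. I would then read the statement with the standard multivariate notion.

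A remaining technicality is that the conclusion requires $\mu'$ to also satisfy the maintained assumptions (full support, density, finite first moments). This is needed so that \Cref{thm:multiplegood} defines the optimal searchable revenue for $\mu'$, and so that ties occur with probability zero under $\mu'$. Ties are in any case harmless: a tie-breaking rule favoring higher tiers keeps the payment monotone, and choosing the menu's lowest optimal option is a valid implementation for the lower bound.
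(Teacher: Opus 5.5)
Your proposal is correct and follows essentially the paper's route: fix an optimal searchable upgrade menu for $\mu$ with prices weakly increasing along the order. The incremental better-off sets are then upper sets, so the chosen tier, and hence the payment, is nondecreasing in the type, and multivariate first-order dominance finishes the argument. You use the upper-set/increasing-function characterization of dominance where the paper uses the equivalent coupling representation.

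One small wording fix: not every upgrade menu is searchable. The feasibility claim should therefore be that this particular searchable menu remains feasible under $\mu'$, which is what your justification (searchability does not depend on the distribution) actually establishes.
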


Finally, we note that searchability rules out additive pricing. The simplest way to see this is in the two-good case. Under additive pricing, the incremental better-off set for adding good 1 is a vertical half-space, while the incremental better-off set for adding good 2 is a horizontal half-space: These sets cross and are hence not nested. Nevertheless, it seems intuitive that, under additive valuations, consumers should be able to find the optimal bundle by separately deciding how much of each good to buy. We formalize this intuition in \Cref{sec:grid}, where we extend the notion of searchability to \textit{grid-searchability}.

\subsubsection{Proof Sketch of \Cref{thm:multiplegood}}

In this subsection, we sketch the proof of \Cref{thm:multiplegood}. \Cref{fig1} presents a graphical illustration (for the case of two goods) that highlights the role of the incremental nesting condition in the argument.

Take an arbitrary finite menu $\mathcal{M}$ of size $m$.\footnote{The case of an infinite menu is handled in the proof with an approximation argument.} For any $k\in \{2,...,m\}$, define the boundary between adjacent incremental better-off sets as
\[
\mathcal{I}^k := \big\{\theta : u(x^{k-1}, \theta) = u(x^k, \theta) \big\}=\big\{\theta : \theta\cdot \Delta q^{k}=\Delta p^{k}  \big\},
\]
where $\Delta q^k=q^k-q^{k-1}\in[-1,1]^n$ is the change in the allocation and $\Delta p^k=p^k-p^{k-1}\in \mathbb{R}$ is the change in the price  associated with moving from option $k-1$ to option $k$ in the menu $\mathcal{M}$. For a generic menu, these hyperplanes $\mathcal{I}^k$ may intersect in complicated ways, as illustrated in the top left panel in \Cref{fig1}. 

\begin{figure}[t]
    \centering
    \includegraphics[width=1.1\linewidth]{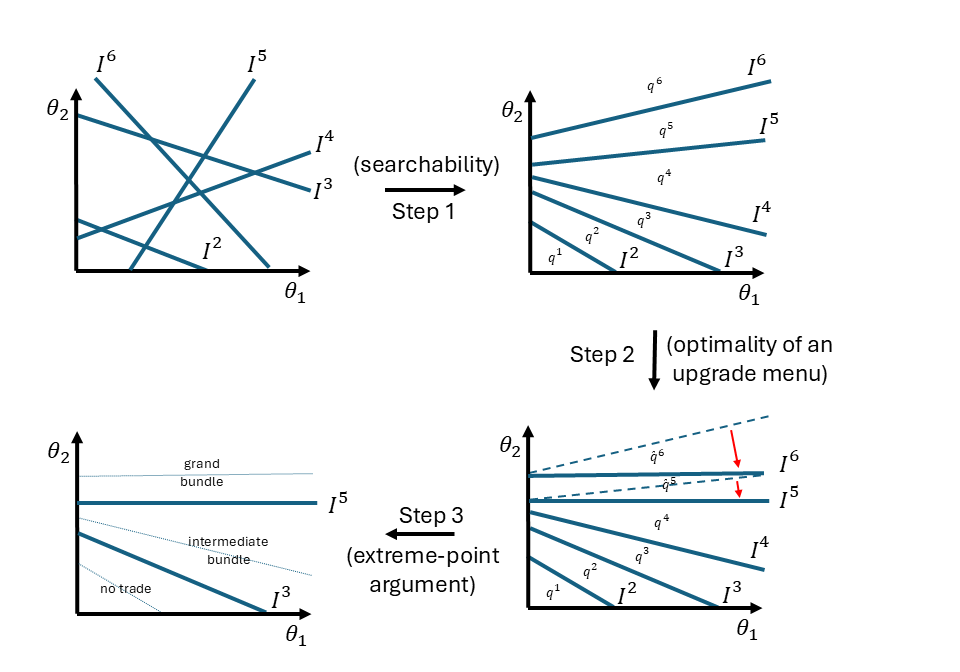}
    \caption{Illustration of the proof of \Cref{thm:multiplegood} for the case of two goods.}
    \label{fig1}
\end{figure}

\paragraph{Step 1.}\hspace{-2mm}By \Cref{thm:main} and \Cref{cor:payoff}, searchability is equivalent to the weak incremental nesting condition. Thus, if $\mathcal{M}$ is searchable, the boundary hyperplanes $\mathcal{I}^k$ \textit{cannot} intersect on the interior of the type space. The top right panel of \Cref{fig1} illustrates. The areas between the boundaries correspond to sets of types choosing the same bundle from the menu. Since $\mathcal{I}^k$'s cannot cross and the type space is assumed unbounded, for any good $i$, if $\Delta q^k_i<0$ (less of good $i$ is given out in a higher bundle), then $\Delta q_i^j\leq 0$ for all $j>k$ (the amount of good $i$ is decreasing above bundle $k$). Finally, up to reversing the order $\preceq$, it can be assumed that prices $p^k$ are increasing. Indeed, prices must be quasi-convex along the order due to the requirement of searchability for low-value types; the zero price associated with the outside option must lie at one of the endpoints of the menu; and, finally, the seller can ``chop off'' all prices that are negative to produce a monotone price schedule with greater profit.

\paragraph{Step 2.}\hspace{-2mm}As the top right panel of \Cref{fig1} illustrates, searchable menus need not be upgrade menus: Positively sloped boundary hyperplanes $\mathcal{I}^k$ (corresponding to non-monotone adjacent bundles in the menu $\mathcal{M}$) are compatible with the incremental nesting condition. In this step of the proof, we show that any non-upgrade menu can be improved for the seller to an upgrade menu. The key property is that prices $p^k$ are increasing (as established in Step 1). Thus, the seller benefits from expanding the set of types consuming a higher bundle. Graphically, as shown in the bottom right panel of \Cref{fig1}, the seller can benefit from rotating positively sloped hyperplanes $\mathcal{I}^k$ until they are flat. This modification of the better-off sets corresponds to replacing any negative increment of good $i$ across adjacent bundles with a zero increment of good $i$. This step results in an upgrade menu, $\Delta q^k\geq 0$ for all $k$, but with a possibly large number of options. 

\paragraph{Step 3.}\hspace{-2mm}The final step in the proof uses an extreme point argument. Fixing the position of the hyperplanes $\mathcal{I}^k$ does not pin down the allocation because the sets $\mathcal{I}^k$ depend only on the \textit{ratio} of $\Delta q^k$ to $\Delta p^k$, not their absolute levels. Thus, we can set up an auxiliary optimization problem of maximizing revenue over a set of non-negative weights $\lambda(k)$ multiplying both $\Delta q^k$ and $\Delta p^k$ (and thus keeping their ratio constant). Crucially, because the menu is an upgrade menu, $\lambda(k)$'s must only satisfy a single capacity constraint for each good, namely, $\sum_{k=2}^m \lambda(k) \Delta q_i^k\leq 1$. Thus, the auxiliary problem is to maximize a linear objective $\sum_{k=2}^m \lambda(k) \Delta p^k \mu(\mathcal{H}^k)$ subject to $n$ linear constraints. By a standard argument, there exists a solution  in which at most $n$ of the $\lambda(k)$'s are non-zero. This gives us an optimal menu with at most $n$ priced options. It is then easy to argue that the top option can be taken to be the grand bundle. The bottom left panel of \Cref{fig1} illustrates. 

\begin{rmk}\label{rmk:extreme}
Our argument proceeds through an extreme-point argument, as in the work showing the complexity of optimal mechanisms in the unrestricted problem (\citealt{Manelli2007}; \citealt{lahr2025extreme}). Curiously, the conclusions are opposite. In the unrestricted problem, the set of extreme points is dense (\citealt{lahr2025extreme})---in particular, the number of menu options cannot be bounded---so extremality alone places essentially no bound on menu complexity. Here, the additional structure comes from the interaction between searchability and optimality: searchability delivers nested incremental better-off sets, while \textbf{Step 1} and \textbf{Step 2} show that, given that constraint, optimality allows us to restrict attention to upgrade menus. Importantly, the perturbations in \textbf{Step 3} preserve searchability, which is what allows the extreme point argument to be run within the class of searchable menus rather than within the set of all menus. 

This connection can be made precise using results in \citet{Manelli2007}. Their Theorem 20 characterizes extremality by asking whether the associated feasible face is a singleton, while their Corollary 24.1 extracts a simpler sufficient condition from a path of market segments connecting the outside option to the grand bundle. In general, this simpler condition is not necessary because a non-trivial solution of the corresponding equality system need not generate a feasible mechanism. In our case, however, incremental nesting and the upgrade structure jointly ensure that the relevant perturbations remain feasible and preserve searchability. This simpler condition therefore becomes exact: extremality reduces to uniqueness of the solution to the system obtained by imposing our binding capacity constraints as equalities. This explains why the extreme-point structure, and hence menu complexity, becomes considerably simpler in our setting.  \qed
\end{rmk}

\subsection{Screening with Money and Ordeals}\label{subsec:ordeal}

Consider allocation of a single good $q \in [0, 1]$ with a constant marginal cost of production $C \geq 0$. Suppose that the designer can use both money $p$ as well as a set of ordeals $(y_1, \dots, y_n)$. The agent's payoffs are given by 
\[v q - p - \sum^n_{i=1} c_i y_i\,,\]
where $(v, c) \in \R^{n+1}_+$ are the types. We assume that the type distribution has a density with full support and bounded first moments. The designer can screen  using  payments $p \in \R_+$ as well as any combinations of ordeals $y \in \R^{n}_+$. The designer wants to maximize a weighted average of agent welfare and profit: 
\[\text{$\alpha$-Welfare}:= \alpha \cdot \E\Big[w \cdot \big(vq(v,c) - p(v,c) - c \cdot y(v,c)\big)\Big] + (1-\alpha) \cdot \E\Big[p(v,c) - C q(v, c)\Big]\,,\]
where the welfare weight $w \geq 0$ is jointly distributed with $(v, c)$, and $\alpha \in [0, 1]$ is the weight on agent welfare. We assume that $w$ has a bounded support. 

We say that a menu is a \textit{\textbf{posted-requirement}} menu if, besides the outside option, it consists only of $(1, p^\star, y^\star)$---the agent gets the object for sure if and only if they pay the posted price $p^\star$ and accomplish all the ordeal requirements $y^\star = (y^\star_1, \dots, y^\star_n)$. 

Note that any posted-requirement menu is a binary menu and hence searchable. It turns out that they are optimal among all searchable menus.

\begin{theorem}\label{thm:ordeal}
There exists an optimal searchable menu that is a posted requirement. Moreover, if the designer cares only about profit (i.e., $\alpha=0$), then any optimal posted-requirement menu cannot use ordeals; hence, there exists an optimal searchable menu that is a posted price. 
\end{theorem}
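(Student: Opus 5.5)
We follow the three-step structure of the proof of \Cref{thm:multiplegood}, treating the ordeal problem as a linear-payoff screening problem. Write an option as $x=(q,p,y)$ and the type as $(v,c)\in\R^{n+1}_+$. The payoff $vq-p-c\cdot y$ is linear in the ``allocation'' $(q,-p,-y)$ with payoff type $(v,c)$, so \Cref{cor:payoff} applies after passing to finite menus by the same approximation argument used for infinite menus in \Cref{thm:multiplegood}. Hence searchability of a finite menu $\{x^1,\dots,x^m\}$ is equivalent to weak incremental nesting of the half-spaces
\[
\mathcal{H}^k=\{(v,c):\, v\Delta q^k \ge \Delta p^k + c\cdot \Delta y^k\}.
\]

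\textbf{Step 1: reduce to monotone upgrade menus in $q$.} Because the type space is the unbounded orthant, the boundaries $\mathcal{I}^k$ cannot cross in the interior. Arguing as in Step 1 of \Cref{thm:multiplegood}, this forces sign-monotonicity of the increments: up to reversing the order, the outside option is at the bottom and $\Delta q^k\ge 0$ along the menu. Any step with $\Delta q^k\le 0$ but a positive total cost $\Delta p^k+c\cdot\Delta y^k$ is chosen by no type and can be deleted. Steps with $\Delta q^k=0$ and negative cost make every type prefer moving up, so they can be merged into the higher option. I expect this to deliver a nested chain in which each step buys extra probability $\Delta q^k>0$ at a nonnegative price/ordeal increment $(\Delta p^k,\Delta y^k)\in\R\times\R^n$.

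\textbf{Step 2: linearize and extreme points.} Fix the hyperplanes $\mathcal{I}^k$, which depend only on the ratios of $(\Delta q^k,\Delta p^k,\Delta y^k)$. Scale each increment by a weight $\lambda(k)\ge0$. The objective, $\alpha$-Welfare, is then linear in $\lambda$: each type in $\mathcal{H}^k$ receives the increment, so the $k$-th term contributes
\[
\lambda(k)\,\E\Big[\mathbf 1_{\mathcal{H}^k}\big(\alpha w(v\Delta q^k-\Delta p^k-c\cdot\Delta y^k)+(1-\alpha)(\Delta p^k-C\Delta q^k)\big)\Big].
\]
The only constraint is the single capacity constraint $\sum_k\lambda(k)\Delta q^k\le1$. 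I also need to keep prices and ordeal levels nonnegative; I would handle this by checking that Step 1 leaves partial sums nonnegative, so that the scaling preserves it whenever all weights are at most their original levels plus the capacity bound. Since there is one linear constraint, an extreme optimum has at most one nonzero weight, and that weight makes the constraint bind, so $\Delta q=1$. This gives a binary menu $\{\underline x,(1,p^\star,y^\star)\}$, i.e., a posted requirement. Existence of an optimal searchable menu then follows from compactness of posted requirements: bounded first moments and bounded $w$ let us restrict attention to bounded $(p^\star,y^\star)$, and the objective is continuous because the density makes indifference sets null.

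\textbf{Step 3: the profit case $\alpha=0$.} Under a posted requirement $(1,p,y)$ the profit is $(p-C)\,\mu(\{v\ge p+c\cdot y\})$. If $y\ne0$, replacing $y$ by $0$ weakly enlarges the buyer set, and since $\mu$ has full support with a density the enlargement has strictly positive measure (types with small $c$ and $v$ slightly below $p+c\cdot y$ are added). So when $p>C$ profit strictly increases, and when $p\le C$ the menu is weakly dominated by the outside option alone, which is a posted price. Hence no optimal posted-requirement menu uses ordeals; the case $p\le C$ needs a small separate check, namely that such a menu cannot be optimal with $y\neq 0$ unless it is tied with an ordeal-free menu. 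I anticipate the main obstacle to be Step 1: ordeal increments can have mixed signs across coordinates, so proving that nesting plus optimality yields a chain with monotone $q$ and feasible nonnegative levels requires care analogous to the ``rotation'' argument in \Cref{thm:multiplegood}.
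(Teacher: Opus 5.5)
Your architecture is the paper's: reduce to a nested chain, rescale increments by weights $\lambda(k)$ under one capacity constraint, take an extreme point, then drop ordeals when $\alpha=0$. But the step that makes it work is missing, and the substitutes you propose would not deliver it.

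\textbf{What the extreme-point step needs.} The argument requires every increment to satisfy $\Delta p^k\ge 0$ and $\Delta y^k\ge 0$ coordinatewise. Only then does an arbitrary rescaling $\lambda\ge 0$ with $\sum_k\lambda(k)\Delta q^k\le 1$ keep all price and ordeal levels nonnegative, so that capacity is the \emph{only} constraint.

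\textbf{Why your Step 1 and Step 2 do not supply it.} Your Step 1 only ``expects'' nonnegative increments. The deletion and merging rules you suggest are not well defined when $\Delta y^k$ has mixed signs: the sign of $\Delta p^k+c\cdot\Delta y^k$ then varies with $c$, so the step is taken by some types and not others. Your Step 2 patch also fails. Nonnegative partial sums are not preserved when increments are rescaled independently. Adding constraints such as $\lambda(k)\le 1$ or partial-sum bounds destroys the ``one constraint, one nonzero weight'' conclusion, because extreme points of the enlarged polytope can have many positive coordinates.

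\textbf{The missing idea.} Mixed signs never arise, and no rotation argument is needed. Your sign-monotonicity observation is half of the argument; the other half is feasibility ($p\ge 0$, $y\ge 0$, outside option $(0,0,0)$ at index $1$).
\begin{itemize}
\item If $\Delta p^k<0$, type $0$ lies in $\mathcal{H}^k$, hence in every $\mathcal{H}^j$ with $j<k$. This forces $\Delta p^j\le 0$ for all $j<k$, so $p^k<p^1=0$, a contradiction.
\item If $\Delta y^k_i<0$, the type with $v=0$ and $c=c_ie_i$, $c_i$ large, lies in $\mathcal{H}^k$. It would fail to lie in $\mathcal{H}^j$ whenever $\Delta y^j_i>0$. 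Nesting therefore forces $\Delta y^j_i\le 0$ for all $j<k$, so $y^k_i<0$, again a contradiction.
\item With $\Delta p^k,\Delta y^k\ge 0$, you get $\Delta q^k\ge 0$ directly. After discarding options chosen by a null set, each $\mathcal{H}^k$ has positive measure, so $v\Delta q^k\ge \Delta p^k+c\cdot\Delta y^k\ge 0$ holds for some $v>0$.
\end{itemize}

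\textbf{Step 3 loose end.} You can close the $p\le C$ case by noting that the optimal profit is strictly positive: under full support, any posted price above $C$ sells with positive probability. So an optimal posted requirement has $p>C$, and your enlargement argument then applies.
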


\Cref{thm:ordeal} is proven in a way similar to \Cref{thm:multiplegood}. The main difference is that the extreme point argument becomes simpler: In the screening problem with ordeals, there is no analog of the capacity constraint for the ordeals since the difficulties of the ordeals $(y_1,...,y_n)$ can be made arbitrarily large. As a result, the auxiliary problem in Step 3 of the proof is a linear program with a single linear constraint---hence, there exists an optimal mechanism implemented as a menu with a single non-zero option.

\subsection{Income Taxation with Multidimensional Heterogeneity}\label{subsec:taxation}

Consider a general income taxation problem. Agents have types $\theta \in \Theta$, where $\Theta$ is an abstract space. The agents' utility functions are given by  
\[
u(c,y,\theta),\]
where $c$ denotes disposable income and $y\in [0, \overline{y}]$ denotes the (pre-tax) earnings level (capturing the disutility from labor supply). We assume that, for every type $\theta$, $u$ is nondecreasing in $c$, nonincreasing in $y$, and concave in $(c,y)$.

A \textit{\textbf{tax schedule}} $T: [0, \overline{y}] \rightarrow \R$ specifies the tax at every income level. We will assume that disposable income $y - T(y)$ is strictly increasing in $y$ (the marginal tax rate is strictly below 1) to avoid the degenerate cases in which some earnings levels are dominated (and hence never chosen). 

Relative to the previous applications, non-linearity of agent preferences affects the shape of the optimal mechanism---it is no longer possible to rely on an extreme point argument that underlies the simple structure of optimal searchable mechanisms in \Cref{thm:multiplegood} and \Cref{thm:ordeal}. But, perhaps somewhat surprisingly, non-linearity does not affect the characterization of searchable menus, in the following sense. 

\newcommand{\Ulin}{\mathcal U^{\mathrm{lin}}}
\newcommand{\Uconc}{\mathcal U^{\mathrm{conc}}}
\newcommand{\Uvex}{\mathcal U^{\mathrm{vex,sep}}}

\begin{prop}
\label{prop:concave}
Suppose that $\mathcal{X}\subseteq\mathbb{R}^n$ is convex and
consider the following classes of utility functions:
\[
\Ulin
:=
\left\{
u:\mathcal{X}\to\R:
 u\text{ is nondecreasing and linear}
\right\},
\]
\[
\Uconc
:=
\left\{
u:\mathcal{X}\to\R:
 u\text{ is nondecreasing and concave}
\right\}.
\]
Menu $\mathcal{M}$ satisfies the incremental nesting condition on
$\Ulin$ if and only if it satisfies the incremental nesting condition on
$\Uconc$.
\end{prop}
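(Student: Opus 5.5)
The plan is to prove the two directions separately. The direction from $\Uconc$ to $\Ulin$ is immediate. The direction from $\Ulin$ to $\Uconc$ reduces to finding, for any concave violator, a linear violator that dominates it at the three relevant menu points.

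\textbf{Easy direction.} Since $\Ulin\subseteq\Uconc$, every incremental better-off set $\mathcal{H}^k$ (resp.\ $\mathcal{G}^k$) computed on the type space $\Ulin$ is the intersection of the corresponding set computed on $\Uconc$ with $\Ulin$. Intersecting with a fixed set preserves inclusions. Hence incremental nesting on $\Uconc$ implies incremental nesting on $\Ulin$.

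\textbf{Hard direction, by contrapositive.} Suppose nesting fails on $\Uconc$. Then for some $k$ and some concave nondecreasing $u$, one of the following holds:
\begin{itemize}
\item[(1)] $u(x^{k+1})\ge u(x^k)$ and $u(x^{k})<u(x^{k-1})$, i.e.\ $u\in\mathcal{H}^{k+1}\setminus\mathcal{H}^k$;
\item[(2)] $u(x^{k+1})> u(x^k)$ and $u(x^{k})\le u(x^{k-1})$, i.e.\ $u\in\mathcal{G}^{k+1}\setminus\mathcal{G}^k$.
\end{itemize}
I will look for a vector $g\ge 0$ such that the linear function $\ell(x)=u(x^k)+g\cdot(x-x^k)$ satisfies $\ell(x^{j})\ge u(x^{j})$ for $j=k\pm1$. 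Since $\ell(x^k)=u(x^k)$, the same inequalities then hold for $\ell$ with the same strictness: in case (1), $\ell(x^{k+1})\ge u(x^{k+1})\ge u(x^k)=\ell(x^k)$ and $\ell(x^{k-1})\ge u(x^{k-1})>\ell(x^k)$, and case (2) is symmetric. Additive constants do not affect any comparison set, so $\ell$ may be identified with the nondecreasing linear map $x\mapsto g\cdot x$. This gives a violation of nesting on $\Ulin$.

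\textbf{Main obstacle: existence of a nonnegative supergradient.} The natural candidate is a supergradient of $u$ at $x^k$. For $x^k$ in the interior of $\mathcal{X}$, this follows from separating the hypograph, and monotonicity forces $g\ge0$. However, $x^k$ may lie on the boundary, where concave functions can fail to have supergradients (e.g.\ $\sqrt{x}$ at $0$). To get around this, I would work only with the finitely many points $x^{k-1},x^k,x^{k+1}$ and replace $u$ by the polyhedral function
\[
\bar u(x):=\sup\Big\{\textstyle\sum_j\lambda_j u(x^j):\ \lambda\in\Delta,\ \sum_j\lambda_j x^j\le x\Big\},
\]
defined on the region where the constraint set is nonempty.

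I would then verify three properties of $\bar u$:
\begin{itemize}
\item[(a)] $\bar u(x^j)=u(x^j)$ for each $j$. This uses convexity of $\mathcal{X}$ (so $\sum_j\lambda_jx^j\in\mathcal{X}$), concavity of $u$, and monotonicity of $u$.
\item[(b)] $\bar u$ is concave, nondecreasing, and polyhedral, being the value of a parametric LP.
\item[(c)] Polyhedral concave functions admit supergradients at every point of their domain. At $x^k$, such a supergradient can be taken $\ge 0$, either by monotonicity or directly from the LP dual variables on the constraints $\sum_j\lambda_jx^j\le x$, which are nonnegative.
\end{itemize}
Applying the construction of the previous paragraph to $\bar u$ at $x^k$ gives the required $g$. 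Equivalently, one can phrase this step as a Farkas-type argument: if no $g\ge0$ satisfied the two inequalities, the dual certificate would be a convex combination $\sum_j\lambda_j x^j\le x^k$ with $\sum_j\lambda_j u(x^j)>u(x^k)$, contradicting concavity together with monotonicity. I expect this boundary and nonnegativity step to be the only delicate point; the rest is bookkeeping.
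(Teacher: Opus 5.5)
Your proof is correct, and it takes a genuinely different route from the paper's.

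\textbf{The paper's route.} It argues directly and dualizes the \emph{linear} hypothesis, using polar-cone calculus for $\mathbb{R}^n_+\cap\{\alpha:\alpha\cdot(x^{k+1}-x^k)\ge 0\}$. This yields a type-free fact about the menu: $x^k\ge t\,x^{k-1}+(1-t)\,x^{k+1}$ for some $t\in(0,1]$. It then pushes this through any concave nondecreasing $u$ via monotonicity and Jensen. For strict nesting it needs a separate coordinate-by-coordinate argument with unit-vector types, to upgrade this to some $t\in(0,1)$.

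\textbf{Your route.} You instead dualize the existence of the \emph{linear witness}. Infeasibility of $\{g\ge 0,\ g\cdot(x^{j}-x^k)\ge u(x^{j})-u(x^k),\ j=k\pm1\}$ has, after rescaling, a Farkas certificate: a convex combination of $x^{k-1},x^k,x^{k+1}$ lying coordinatewise below $x^k$ with strictly larger average utility. Convexity of $\mathcal{X}$, concavity and monotonicity rule this out. Since $\ell$ coincides with $u$ at $x^k$ and weakly dominates it at both neighbours, weak and strict violations transfer simultaneously. So you treat $\mathcal{H}$ and $\mathcal{G}$ in one stroke, and the boundary (no-supergradient) issue never arises. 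The three-point Farkas argument alone suffices; the polyhedral envelope $\bar u$, though correct, is an unnecessary detour.

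\textbf{What each buys.} Your route buys brevity and a uniform treatment of the weak and strict cases. The paper's route buys an explicit characterization of linear nesting in terms of the menu points themselves: consecutive options satisfy a free-disposal form of discrete concavity. That characterization is informative beyond the equivalence being proved.
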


\Cref{prop:concave} (proven in \Cref{app_prop:concave}) implies that if agents' preferences are monotone and concave, it suffices to ``test'' searchability for all linear preferences. Motivated by this observation, we will say that type $\theta$ has \textit{\textbf{linear preferences}} in the taxation problem if their payoff is given by $c - \alpha y$ for some $\alpha \in \R_+$, and that  $\Theta$ is \textit{\textbf{rich}} if it includes all linear preferences. A tax schedule is called \textit{\textbf{progressive}} if $T(\,\cdot\,)$ is convex.

The following result shows that, under a rich type space, progressive tax schedules are exactly the ones that are searchable in our sense. 
\begin{theorem}\label{thm:progressive}
Any progressive tax schedule is a searchable menu (ordered by earnings). If $\Theta$ is rich, then any searchable menu  must correspond to a progressive tax schedule. 
\end{theorem}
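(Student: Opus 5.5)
The plan is to work directly with the characterization in \Cref{lem:char}(b). A menu is searchable if and only if every type's payoff along the menu order is semi-strictly quasi-concave. The menu here is $\mathcal{M}=\{(y-T(y),y): y\in[0,\overline{y}]\}$, ordered by earnings $y$. Write $c(y):=y-T(y)$, which is strictly increasing by assumption. Throughout I also need the maintained regularity of the menu (compact index set, upper semicontinuity), which I take as given from the definition of a menu.

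\emph{Sufficiency.} Suppose $T$ is convex. Then $c(y)$ is concave in $y$. Fix any type $\theta$ and set $g(y):=u(c(y),y,\theta)$.
\begin{itemize}
\item \emph{Quasi-concavity.} Take $y_1<y_2$ and $t=\lambda y_1+(1-\lambda)y_2$. Concavity of $c$ gives $c(t)\ge \lambda c(y_1)+(1-\lambda)c(y_2)$. Since $u$ is nondecreasing in $c$ and jointly concave,
\[
g(t)\ge u\big(\lambda c(y_1)+(1-\lambda)c(y_2),\,t,\theta\big)\ge \lambda g(y_1)+(1-\lambda)g(y_2)\ge \min\{g(y_1),g(y_2)\}.
\]
So $g$ is in fact concave.
\item \emph{Semi-strictness.} If $g(y_1)\neq g(y_2)$, then $\lambda g(y_1)+(1-\lambda)g(y_2)>\min\{g(y_1),g(y_2)\}$ for every $\lambda\in(0,1)$. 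The inequality is therefore strict at interior points.
\item \emph{Conclusion.} By \Cref{lem:char}, the progressive schedule is searchable. Its set of maximizers is an interval because $g$ is concave, which also delivers condition (ii) directly.
\end{itemize}

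\emph{Necessity.} Suppose $\Theta$ is rich and the menu, under some order $\preceq$, is searchable. I would argue in two steps.

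First, I pin down the order. Linear types $c-\alpha y$ with $\alpha$ near $0$ strictly prefer higher earnings, since $c$ is strictly increasing. By \Cref{lem:char}, the payoff of such a type must be semi-strictly quasi-concave along $\preceq$, hence single-peaked with peak at $\overline{y}$. I would like to conclude from this that $\preceq$ must be the earnings order or its reverse. This needs care, and I flag it as the delicate point of the argument. Types with $\alpha\in(0,1)$ close to $0$ have payoffs $c(y)-\alpha y$, which are strictly increasing in $y$. A strictly monotone function is single-peaked along $\preceq$ only if, for any three options whose payoffs are ordered, the option with the middle payoff lies in the middle of the order. Applying this to $\alpha=0$, where the payoff is $c(y)$, which is strictly increasing in $y$, the order must place $y$ monotonically. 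The reason is that a strict function is semi-strictly quasi-concave on a total order exactly when the order has no valley. I then need to rule out orders that are unimodal but not monotone in $y$: for example, a peak at $\overline{y}$ with low and high $y$ interleaved on both sides. To exclude these, I would use types with large $\alpha$, which strictly prefer low $y$, so the peak of $c(y)-\alpha y$ sits at $y=0$. A single order must be unimodal both with peak at $\overline{y}$ and with peak at $0$, and these must lie at opposite ends of the order. Combining the two, the order is monotone in $y$. The paper says ``ordered by earnings,'' so it may be enough to fix that order; I would present the general argument only if needed.

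Second, with the order fixed as the earnings order, suppose $T$ is not convex. Then $c$ is not concave, so there exist $y_1<y_2<y_3$ with the point $(y_2,c(y_2))$ lying strictly below the chord joining $(y_1,c(y_1))$ and $(y_3,c(y_3))$. Let $\alpha$ be the slope of that chord, which is positive because $c$ is increasing. Then the linear type $c-\alpha y$, which belongs to $\Theta$ by richness, satisfies
\[
g(y_2)<g(y_1)=g(y_3).
\]
This violates quasi-concavity, so by \Cref{lem:char} the menu is not searchable. This contradiction shows $T$ must be convex.
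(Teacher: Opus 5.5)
Your sufficiency argument is correct. It is slightly more direct than the paper's: the paper reduces to linear types via \Cref{prop:concave}, while you get concavity of $y\mapsto u(c(y),y,\theta)$ for every type straight from joint concavity and monotonicity in $c$. Your chord-slope construction in the second necessity step is exactly the paper's Case (i).

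The gap is the first necessity step, where you try to pin down the order. You cannot simply fix the earnings order: the necessity claim quantifies over every order $\preceq$ under which the menu is searchable, and the paper's proof explicitly treats an arbitrary $\preceq$. Your argument for general orders rests on the assertion that linear types with large $\alpha$ strictly prefer lower earnings, with peak at $y=0$. Nothing in the hypotheses bounds the secant slopes of $c(y)=y-T(y)$ from above, so this assertion fails in general. If $T$ has a downward jump (non-convex, yet $c$ stays strictly increasing), then $c(y)-\alpha y$ jumps up there for every $\alpha$, so no linear type has payoff strictly decreasing in $y$. If $T(y)=-\sqrt{y}$ near $0$, then $c(y)-\alpha y=(1-\alpha)y+\sqrt{y}$ increases just above $0$ for every $\alpha\ge 0$, so no linear type even peaks at $y=0$. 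The combination ``unimodal with peak at $\overline{y}$ and with peak at $0$'' is therefore unavailable, and the argument does not go through as written.

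The fix is to let $\alpha$ depend only on finitely many points. Take your non-convexity triple $y_1<y_2<y_3$. Its submenu is searchable under the restriction of $\preceq$ (\Cref{thm:main}), so one of the three options is $\preceq$-middle.
\begin{itemize}
\item If it is $y_2$, your chord type gives $g(y_2)<g(y_1)=g(y_3)$.
\item If it is $y_1$, the type with $\alpha=0$ gives $c(y_1)<\min\{c(y_2),c(y_3)\}$.
\item If it is $y_3$, take any $\alpha$ exceeding both finite numbers $\frac{c(y_3)-c(y_2)}{y_3-y_2}$ and $\frac{c(y_3)-c(y_1)}{y_3-y_1}$. This makes $y_3$ strictly worst of the three.
\end{itemize}
In every case the $\preceq$-middle option is a strict valley, contradicting \Cref{lem:char}. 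This is precisely the paper's three-case proof, which never identifies the global order.

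Alternatively, you can keep your two-step structure. Apply the $\alpha=0$ type and a triple-specific large-$\alpha$ type to every triple. This shows every triple is ordered monotonically by $\preceq$, so $\preceq$ is the earnings order or its reverse, and your second step then applies.
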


\Cref{thm:progressive} predicts that tax progressivity is exactly the condition that makes an income-tax schedule searchable. If $T$ is convex, disposable income $y-T(y)$ is concave, so linear preferences lead to payoffs that are single-peaked along the earnings scale: once moving to a slightly higher earnings level stops being attractive, further moves cannot be optimal. \Cref{prop:concave} then extends this conclusion to any concave utility function. Conversely, if $T$ is not convex, some higher earnings increment is taxed less heavily than a lower one. Under richness, there exist linear preference profiles under which the agent's payoff weakly decreases over the lower increment but increases over the higher increment. We show that, for any ordering of the menu, one can construct such a preference profile so that some earnings level is locally optimal but not globally optimal. Hence, searchability fails.

Even though progressive tax schedules are common in practice and have intuitive appeal, they are not optimal in the classical income taxation framework of \cite{mirrlees1971}, except under very restrictive parametric assumptions. \Cref{thm:progressive} provides a simple microfoundation for tax progressivity: Increasing marginal tax rates are exactly what is needed to make the worker's problem tractable through greedy search. Non-progressive tax schedules imply that there exist preferences under which workers have locally optimal earnings choices that are not globally optimal. This complements recent alternative justifications of income tax progressivity based on threat of wage randomization enabled by labor-market commitment (\citealt{Doligalski2019OptimalIncomeTaxationCommitment}) or robustness (\citealt{Vairo2025OptimalIncomeTaxationAmbiguity}).

Imposing searchability in the income-taxation problem has a further technical benefit. The dominant approach to solving modern taxation models is the perturbation method of \cite{Saez}. However, that method relies on the validity of the first-order approach to the worker's problem. By definition, searchability makes first-order conditions sufficient for global optimality. In light of \Cref{thm:progressive}, a future research direction suggested by our approach is to impose tax progressivity as a \textit{constraint} in the income taxation problem.

\section{Grid-Searchable Menus}\label{sec:grid}

Our definition of searchability asks the menu to be totally ordered, which gives rise to a one-dimensional search space for the agent. In many contexts, however, this requirement appears to be too demanding. In complex multidimensional environments, especially if the agent's utility is separable, it may be more natural to decompose the decision into several ordered dimensions, each of which can be searched independently. For example, with additive valuations and additive prices, a buyer in the multiproduct monopolist problem can separately decide how much of each good to buy.  To capture this idea, we generalize our notion of searchable menus to a concept of \textit{\textbf{grid-searchable menus}}. 

\paragraph{Environment.}\hspace{-2mm}For this section, we focus on a quasilinear environment where both the allocation and the type space can be multidimensional, and restrict attention to finite menus. In particular, let $\mathcal{X} = \prod_{i=1}^n \mathcal{Q}_i \times \R$, where each $\mathcal{Q}_i \subseteq \R$ is an interval, be the outcome space, in which $\mathcal{Q}:=\prod_{i=1}^n \mathcal{Q}_i$ is the space of allocations (e.g., a quantity of each good $i$), and $\R$ is the space of transfers. The agents' payoffs are additively separable across the dimensions, given by 
\[v(q, \theta) = \sum_{i=1}^n v_i(q_i, \theta)\]
for some functions $v_i$. We say that a type $\theta$ has \textit{\textbf{linear preferences}} if
\[v(q, \theta) = \sum_{i=1}^n q_i \alpha_i\]
for some vector $\alpha \in \R^n_+$. The assumption of additive preferences corresponds to the intuitive idea that it may be salient for the agent to consider different dimensions of the choice problem separately. 

\paragraph{Definition.}\hspace{-2mm}Fix a partition $\mathcal{K}$ of $\{1, \dots, n\}$. The designer is allowed to choose any product order $\preceq := \prod_{K \in \mathcal{K}} \preceq_K$, one for each cell in the partition.  A \textit{\textbf{(finite) product menu}} $(\mathcal{Y}, p, \preceq)$ consists of $\mathcal{Y} = \prod_{K \in \mathcal{K}} \mathcal{Y}_K$, where each $\mathcal{Y}_K \subseteq \prod_{i \in K} \mathcal{Q}_i$ is a component menu ordered by $\preceq_K$ and  $p: \mathcal{Y} \rightarrow \R$ is a price function. 
For any product menu $(\mathcal{Y}, p, \preceq)$, the set of \textbf{\textit{$\preceq$-neighbors}} of $y \in \mathcal{Y}$
is 
\[
N_{\preceq}(y):=\Big\{ z\in \mathcal{Y}:\exists\,K\text{ such that }z_{-K}=y_{-K}\text{ and }z_{K}\text{ is the predecessor or successor of }y_{K}\text{ in }\preceq_{K}\Big\} \,.
\]
That is, a neighbor of allocation $y\in \mathcal{Y}$ in the product menu $(\mathcal{Y}, p, \preceq)$ is an allocation $z\in \mathcal{Y}$ that differs from $y$ in the choice from exactly one component menu, in which moreover $y_K$ and $z_K$ are adjacent in the order $\preceq_K$.

For any product menu and any type $\theta$, we write 
\[u_\theta(y) := v(y, \theta) - p(y)\,.\]
We say that $y \in \mathcal{Y}$ is a local maximum of a function $u: \mathcal{Y} \to \R$ if $u(y)\geq u(z)$ for all $z\in N_{\preceq}(y)$. Finally, a product menu $(\mathcal{Y}, p, \preceq)$ is \textit{\textbf{grid-searchable}} if for any type $\theta \in \Theta$: 
\[\text{$y^\star \in \mathcal{Y}$ is a local maximum of } u_\theta(y)  \implies \text{$y^\star \in  \mathcal{Y}$ is a global maximum of } u_\theta(y)\,.\]

\paragraph{Discussion.}\hspace{-2mm}Grid-searchability generalizes searchability: If $\mathcal{K}$ is the trivial partition that has only one element, then every searchable menu is a grid-searchable menu. But if $\mathcal{K}$ contains more than one element, the notion of greedy search is modified relative to the baseline definition. The agent now searches by moving along one component menu at a time. Grid-searchability guarantees convergence (to the global maximum) of a search procedure in which the agent starts with any allocation and then moves sequentially toward any locally-improving option, until no further (local) improvement is possible.\footnote{In the definition of searchability, we also require that  the globally optimal choices form an interval; we could add a similar requirement here for grid-searchability without affecting any of our results.}  

Our definition allows $\mathcal{K}$ to be interpreted either as a choice variable of the designer, in which case grid-searchability expands the set of implementable outcomes, or as an exogenous constraint reflecting the salient dimensions through which the agent organizes the choice problem. The appropriate interpretation depends on the application: the former may be more natural in product design, where a monopolist controls how options are bundled and presented, while the latter may be more appropriate in taxation, where the planner typically takes the relevant dimensions of agents’ choices as given.

\paragraph{Main Characterization Result.}\hspace{-2mm}We say that $p$ is \textit{\textbf{strictly increasing}} if $p(y_K, y_{-K}) < p(y'_K, y_{-K})$ for all $y_K \prec_K y'_{K}$ and all $K \in \mathcal{K}$, where $\prec_K$ is the strict order associated with $\preceq_K$. We say that $p$ has \textit{\textbf{block-additive pricing}} if there exist functions $p_K: \mathcal{Y}_K \rightarrow \R$ for all $K\in \mathcal{K}$ such that 
\[p(y) = \sum_{K \in \mathcal{K}} p_K(y_K)\,.\]
For a product menu with block-additive pricing, the component menus are defined by $(\mathcal{Y}_K, p_K, \preceq_K)$ for each $K$. As in \Cref{sec:application}, we say that $\Theta$ is \textit{\textbf{rich}} if it includes all linear preferences. 
 
\begin{theorem}\label{thm:grid}
Let $(\mathcal{Y}, p, \preceq)$ be a product menu with a strictly increasing $p$. Suppose $v(q, \theta)$ is nondecreasing and concave in $q$. Then: 
\begin{itemize}
    \item[(i)] If $\Theta$ is rich and $(\mathcal{Y}, p, \preceq)$ is grid-searchable, then there exists a product menu with block-additive pricing, where each component menu is searchable, that can induce the same outcomes for every type.
    \item[(ii)]  Conversely, if $(\mathcal{Y}, p, \preceq)$ has block-additive pricing where each component menu is searchable, then $(\mathcal{Y}, p, \preceq)$ is grid-searchable. 
\end{itemize}
\end{theorem}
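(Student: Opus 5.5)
The plan is to prove part (ii) directly by separability, and then to prove part (i) by showing that grid-searchability under a rich type space forces every ``cross difference'' of $p$ between two blocks to vanish. That makes $p$ block-additive, and searchability of each component menu then follows by restricting to suitable linear types.

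\textbf{Part (ii).} Suppose $p(y)=\sum_{K}p_K(y_K)$. Since $v$ is additively separable across goods, we can write
\[
u_\theta(y)=\sum_{K\in\mathcal K}\Big(\sum_{i\in K}v_i(y_i,\theta)-p_K(y_K)\Big)=:\sum_{K}u^K_\theta(y_K).
\]
A $\preceq$-neighbor of $y$ changes only one coordinate $y_K$, and only to an adjacent element. Hence $y^\star$ is a local maximum of $u_\theta$ exactly when, for each $K$, $y^\star_K$ is a local maximum of $u^K_\theta$ on the component menu $(\mathcal Y_K,\preceq_K)$. Each component menu is searchable for the block utilities $\sum_{i\in K}v_i(\cdot,\theta)$. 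These are nondecreasing and concave, and by \Cref{prop:concave} it suffices that searchability holds for linear block preferences. Searchability of the component menus therefore makes each $y^\star_K$ a global maximizer of $u^K_\theta$. Summing over blocks, $y^\star$ globally maximizes $u_\theta$.

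\textbf{Part (i), step 1: vanishing cross differences.} Fix two distinct blocks $K\neq L$, a profile $y_{-KL}$ of the other blocks, and adjacent pairs $a\prec_K a'$ and $b\prec_L b'$. Define
\[
D:=p(a',b',y_{-KL})-p(a',b,y_{-KL})-p(a,b',y_{-KL})+p(a,b,y_{-KL}).
\]
I claim $D=0$; block-additivity then follows from the standard discrete argument. Suppose first $D<0$. Using richness, I choose a linear type with zero value for goods outside $K\cup L$, and with values on $K$ and $L$ tuned so that the value gains $\Delta_K$ from $a\to a'$ and $\Delta_L$ from $b\to b'$ satisfy
\[
\Delta_K<p(a',b)-p(a,b),\qquad \Delta_L<p(a,b')-p(a,b),\qquad \Delta_K+\Delta_L>p(a',b')-p(a,b),
\]
where the other blocks are held at $y_{-KL}$ throughout. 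This system is feasible precisely because $D<0$. Under such a type, $(a,b)$ beats both of its upward neighbors, yet it is beaten by $(a',b')$. If $D>0$, the symmetric construction makes $(a',b')$ locally optimal against its downward neighbors while being dominated by $(a,b)$.

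\textbf{Main obstacle.} The hard part is making the chosen profile a genuine local maximum, i.e.\ controlling \emph{all} its neighbors, not just the two used above. This includes predecessors in $K$ and $L$ and moves in the other blocks.

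The other blocks are the easier case. The type has zero value for goods outside $K\cup L$, and $p$ is strictly increasing, so moving up in any other block strictly lowers utility. I would take $y_{-KL}$ to be the minimal elements, so that no downward moves exist there. Since the cross-difference identity is needed for all $y_{-KL}$, I would first prove it at minimal $y_{-KL}$. I would then propagate it by induction on the position of $y_{-KL}$, using types whose values on the relevant blocks make them indifferent to moving up there.

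The genuinely hard case is the predecessors of $a$ in $K$ (and of $b$ in $L$). Allocation increments along $\preceq_K$ need not be nonnegative, so the chosen linear values on $K$ might make the predecessor attractive. I would first try to pick $a,b$ as the \emph{lowest} pair at which $D\neq0$, and to choose the linear weights along a direction in which the increment $a\to a'$ has a positive component. If that fails, the fallback is to first use grid-searchability restricted to the line through $K$ (other blocks minimal, zero values outside $K$) to show that the component menu $\mathcal Y_K$ is itself searchable. Its incremental better-off sets are then nested by \Cref{cor:payoff}, and that nesting lets me place the type's comparison hyperplane so that all downward moves in $K$ are unprofitable.

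\textbf{Part (i), step 2: searchable components and same outcomes.} Once $p$ is block-additive, searchability of each component menu $(\mathcal Y_K,p_K,\preceq_K)$ follows from the same line restriction: a local maximum along the line is a grid local maximum, hence a global maximum. The product menu with block-additive pricing is then the original menu itself, so it trivially induces the same outcomes for every type.
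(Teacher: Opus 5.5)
Your part (ii) is correct and is the paper's separability argument. Part (i), however, rests on a false claim. Your Step 1 asserts that grid-searchability forces every cross difference $D$ of the \emph{original} price function to vanish, and your Step 2 concludes that the original menu is itself block-additive. Here is a counterexample. Take $\mathcal{K}=\{\{1\},\{2\}\}$, $\mathcal{Y}_1=\{0,1,\tfrac12\}$ ordered $0\prec_1 1\prec_1\tfrac12$, $\mathcal{Y}_2=\{0,1\}$, with $p(y)=y_1+y_2$ on $\{0,1\}^2$, $p(\tfrac12,0)=3$ and $p(\tfrac12,1)=10$. Then $p$ is strictly increasing, and each $(\tfrac12,y_2)$ is strictly worse than its neighbor $(1,y_2)$ for every type. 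So every local maximum lies in $\{0,1\}^2$, is a local maximum of the additive $2\times 2$ submenu, and is therefore globally optimal: the menu is grid-searchable. Yet for $a=1$, $a'=\tfrac12$, $b=0$, $b'=1$ you get $D=10-3-2+1=6\neq 0$. This is exactly why the theorem only promises an outcome-equivalent block-additive menu.

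The step that breaks is ``feasible precisely because $D<0$'': the gain $\Delta_K=\alpha_K\cdot(a'-a)$ can be tuned only if $a'-a$ has a strictly positive coordinate. You are missing the paper's pruning step. Delete every option whose increment over its $\preceq_K$-predecessor has no positive coordinate (it is strictly dominated by that neighbor), and verify that grid-searchability survives. This is not automatic, because deletion creates new adjacencies. The paper shows that if the deleted option was the improving neighbor of $y$, then $y_K$ sits just above it, and the newly adjacent option below it improves on $y$ even more.

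Even after pruning, two pieces fail as written. First, your $D>0$ construction is infeasible. Requiring $(a',b')$ to weakly beat $(a,b')$ and $(a',b)$ forces $\Delta_K+\Delta_L\ge p(a',b')-p(a,b)+D$. Requiring $(a,b)$ to strictly beat $(a',b')$ needs $\Delta_K+\Delta_L<p(a',b')-p(a,b)$. So both diagonal constructions require $D<0$. For $D>0$ you need an anti-diagonal corner: e.g.\ $(a,b')$ weakly beats $(a,b)$ and $(a',b')$ while $(a',b)$ is strictly better. This is the paper's Step~2 with the points $b$ and $c$, and controlling that corner's remaining neighbors uses the inequality $\varepsilon\ge 0$ already proved in the paper's Step~1 at the same rank.

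Second, your ``main obstacle'' is left unresolved, and the strict calibration $\Delta_K<p(a',b)-p(a,b)$ is the wrong one. Suppose prices near the test point are additive and the weight sits on a single coordinate that increases along the chain. If the price slopes just below and just above $a$ coincide, a type strictly below the threshold for $a\to a'$ strictly gains by moving down to the predecessor of $a$. The paper's fix has three parts:
\begin{enumerate}
\item From grid-searchability at the boundary profile $\underline{y}_{-K}$, it proves that no $\tau\in\R^K_+$ has $D^s_K(\tau)<0<D^{s+1}_K(\tau)$; the scaling $t\tau$ rules out ties. This gives nested incremental better-off sets, a witness coordinate increasing along the whole chain, and discrete convexity of the boundary price in that coordinate.
\item It makes the type exactly indifferent to each single upgrade, so downward moves are weakly unprofitable by discrete convexity.
\item It inducts on rank, so that all prices at neighbors of the test point are already known to be additive.
\end{enumerate}
Your fallback via \Cref{cor:payoff} does not deliver the first part directly. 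Restricting to a line gives only the local-to-global property, not the interval property that \Cref{cor:payoff}(ii) presupposes, and a type with two separated global maxima is exactly what breaks nesting. The same caveat applies to your claim in Step 2 that each component menu is searchable.
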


\Cref{thm:grid} states that---as long as the set of types is rich enough to include all linear preferences---grid-searchability is equivalent to additive pricing across component menus combined with (baseline) searchability of each component menu. In this way, \Cref{thm:grid} allows us to directly leverage the results from \Cref{sec:char} and \Cref{sec:application} to characterize grid-searchable menus in applications.

\paragraph{Proof Sketch of \Cref{thm:grid}.}\hspace{-2mm}Part (ii) of \Cref{thm:grid} is intuitive: Under additively separable preferences, block additive pricing implies that the choices of the agent across different component menus do not interact. As long as any component menu is searchable, the product menu is grid-searchable. 

The more interesting part of \Cref{thm:grid} is (i): Under a richness condition, grid-searchability \textit{requires} that the price function be block-additive. To understand why, it is instructive to consider the simplest case of two component menus with two options each: $\mathcal{Y}_1=\{0,\,1\}$ and $\mathcal{Y}_2=\{0,\,1\}$. Normalize $p(0,0)=0$. Then, block-additive pricing is equivalent to $p(1,1)=p(0,1)+p(1,0)$. Suppose that $p(1,1)<p(0,1)+p(1,0)$, so that the joint upgrade is cheaper than the sum of the two individual upgrades. Then, by richness, we can choose an additive-linear preference type whose value for each single upgrade is below its price, but whose value for taking both upgrades exceeds $p(1,1)$.\footnote{Strict monotonicity of $p$ ensures existence of such preferences with non-negative marginal values.} Then, allocation $(0,0)$ is locally optimal, since each one-step move is unattractive, but allocation $(1,1)$ is globally optimal, contradicting grid-searchability. If instead  $p(1,1)>p(0,1)+p(1,0)$, then the joint upgrade is too expensive: we can choose an additive-linear type for whom $(1,0)$ is better than both of its neighbors $(0,0)$ and $(1,1)$, while the non-neighboring option $(0,1)$ is strictly better. This again creates a local but not global optimum. Hence neither inequality is possible, and $p(1,1)=p(0,1)+p(1,0)$.

The proof of \Cref{thm:grid} relies on this basic logic but is more involved. In a general product menu, verifying that a given allocation is locally optimal requires ruling out profitable adjacent moves in every grid direction. Moreover, an adjacent move within a component menu may be a move between multidimensional bundles rather than a simple scalar upgrade. The proof gets around the second difficulty by first showing that, after deleting dominated options,  the allocations in each component menu are strictly increasing along the component menu order in at least one of the coordinates. Since prices are increasing, this effectively reduces the analysis to a one-dimensional problem within each component menu. 

The rest of the proof proceeds by induction on the ``rank'' of an allocation---the number of one-step moves needed to reach it from the minimal element of the grid. At ranks zero and one, the price is block-additive by the construction of a candidate decomposition, so there is nothing to prove. For the inductive step, take an allocation $y$ at the current rank and let $a$ be the allocation obtained by moving $y$ down one index in \textit{every} component where possible. Since $a$ has strictly lower rank, the additive decomposition is already exact there. We then choose a linear type that (i) attaches no value to the components in which $y$ already sits at the bottom---so that the only moves available there, which are upward and strictly more expensive, are unattractive---and (ii) has marginal value in each remaining component such that it is exactly indifferent to moving back up. Because the boundary price function is discretely convex, this type also does not want to move further down. Hence, $a$ is locally optimal for the type. But if the price of $y$ falls short of the block-additive benchmark, the same type strictly prefers $y$, contradicting grid-searchability. This is the $2\times 2$ argument again: each single upgrade is unattractive while the joint upgrade is attractive. A similar argument, moving $y$ down in a \textit{single} component instead and drawing on the inequality just established, rules out the opposite inequality---together the two deliver the block-additive structure for the inductive step.

\subsection{Applications}

\paragraph{Multiple-Good Monopoly Pricing.}\hspace{-2mm}Consider the multiple-good monopoly pricing problem in \Cref{subsec:multiple-good}. The type space $\Theta$ there consists of all linear preferences: $v(q, \theta) = \sum_i q_i \theta_i$ where $\theta \in \R^n_+$, and $q \in [0, 1]^n$. 

Applying \Cref{thm:grid}, we immediately obtain that for any partition $\mathcal{K}$, the set of grid-searchable menus is outcome-equivalent to the ones with block-additive pricing and within-block searchability. Since the buyer has additive values and the price function is block-additive, the optimal grid-searchable menu can then be solved block-by-block. Combined with \Cref{thm:multiplegood}, the optimal grid-searchable menu then takes the form of additive pricing across blocks and upgrade pricing (with at most $|K|$ tiers) within a block. 

Formally, we say that a product menu is a \textit{\textbf{block-separable upgrade}} menu if it has block-additive pricing and each component menu $(\mathcal{Y}_K, p_K, \preceq_K)$ is an upgrade menu.

\begin{cor}\label{cor:grid-multiplegood}
For any partition $\mathcal{K}$, there exists an optimal menu $(\mathcal{Y}, p, \preceq)$ among all grid-searchable menus with strictly increasing price functions such that $(\mathcal{Y}, p, \preceq)$ is a block-separable upgrade menu where, for each component $K \in \mathcal{K}$, the component upgrade menu has at most $|K|$ tiers with the highest tier being the full bundle of all goods in $K$.
\end{cor}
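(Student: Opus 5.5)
The plan is to reduce the problem to the block-by-block monopoly problem using \Cref{thm:grid}, and then to invoke \Cref{thm:multiplegood} within each block. Fix a partition $\mathcal{K}$, and take any grid-searchable product menu $(\mathcal{Y}, p, \preceq)$ with a strictly increasing price function. The type space $\R^n_+$ with $v(q,\theta)=q\cdot\theta$ contains all linear preferences, so it is rich, and $v$ is nondecreasing and concave (in fact linear) in $q$. Part (i) of \Cref{thm:grid} therefore yields a product menu with block-additive pricing $p(y)=\sum_K p_K(y_K)$, in which each component menu $(\mathcal{Y}_K,p_K,\preceq_K)$ is searchable and which induces the same outcomes for every type. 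Under additive values, the buyer's problem then separates:
\[
\max_{y\in\mathcal{Y}} \sum_{K}\big(\theta_K\cdot y_K - p_K(y_K)\big)=\sum_K \max_{y_K\in \mathcal{Y}_K}\big(\theta_K\cdot y_K-p_K(y_K)\big).
\]
Each buyer therefore chooses block by block, and revenue equals $\sum_K \E[p_K(y_K^\star(\theta_K))]$. The $K$-th term is the revenue of a searchable menu in the $|K|$-good monopoly problem, with types distributed according to the marginal of $\mu$ on $\R^{K}_+$.

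To satisfy the hypotheses of the single-block result, I would first normalize. Add the null allocation in each block; if the bottom option is not null, shift $p_K$ by a constant and offset it in another block so that the total price is unchanged. Also note that the marginal of $\mu$ on $\R^K_+$ still has full support, a density, and finite first moments. Tie-breaking is handled Lebesgue-a.e., using \Cref{cor:payoff}. This gives the upper bound: the revenue of any admissible grid-searchable menu is at most $\sum_K R_K^\star$, where $R_K^\star$ is the optimal searchable revenue in block $K$.

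For achievability, \Cref{thm:multiplegood} applied to block $K$ gives an optimal searchable upgrade menu with at most $|K|$ tiers whose top tier is the full bundle of $K$. Take the product of these menus with additive prices. By part (ii) of \Cref{thm:grid}, the product is grid-searchable, and by separability its revenue is exactly $\sum_K R_K^\star$.

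The main obstacle is the requirement that the optimal menu's price function be strictly increasing. The optimal upgrade menu may contain zero-price increments, or the outside option may be tied. I would handle this by deleting dominated or duplicate options, which leaves outcomes a.e. unchanged. If a tier has a price increment of zero with a positive quantity increment, it is chosen over its predecessor by all types, so the predecessor can be removed. I also need the optimal price increments to be positive; this follows from Step 1 of the proof of \Cref{thm:multiplegood}, where prices are monotone and negative prices are chopped off. A second delicate point is the normalization of the outside option within each block, which requires that the null bundle can be included in each $\mathcal{Y}_K$ without changing revenue. I expect this to follow from the same chopping argument.
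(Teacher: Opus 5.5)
Your architecture matches the paper's. Part (i) of \Cref{thm:grid} yields block-additive prices with searchable components, and the buyer's problem separates. Each block is then bounded by the optimal searchable revenue $\text{OPT}_K$ under the marginal $\mu_K$, and the bound is attained by the product of the optimal upgrade menus from \Cref{thm:multiplegood} via part (ii). That achievability half, including strict monotonicity through Step 1 of \Cref{thm:multiplegood}, is fine. The gap is the normalization behind the upper bound, which the paper isolates as \Cref{lem:normalization} and which your sketch does not establish.

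After the reduction, the bottom element $y^1_K$ of a component menu need not be $0_K$. The null bundle is deleted by \Cref{lem:reduction} whenever it has a $\preceq_K$-predecessor, since that predecessor is a weakly larger bundle at a lower price. Moreover, $p=\sum_K p_K$ pins down the $p_K$ only up to constants summing to zero. Shifting constants across blocks while keeping the total fixed cannot give every block a zero bottom price unless $\sum_K p_K(y^1_K)=0$. What you need is the inequality $\sum_K p_K(y^1_K)\le 0$. It holds because the reduction preserves every type's payoff and the original menu contains the outside option, so type $\theta=0$ still obtains $-\min_y p(y)\ge 0$; strict monotonicity then gives $\min_y p(y)=\sum_K p_K(y^1_K)$. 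Only then does passing to $\tilde p_K:=p_K-p_K(y^1_K)$ weakly raise revenue. Your proposal never uses this individual-rationality fact, so the block that absorbs the offsets is not shown to be bounded by its $\text{OPT}_K$.

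Second, \emph{adding} the null allocation below $(y^1_K,0)$ can destroy searchability. Take the single-block menu $\{((1,0),-1),((1,1),-\tfrac12),((0,0),0)\}$, in this order; it is searchable with strictly increasing prices. After the reduction and normalization, its component menu is $\{((1,0),0),((1,1),\tfrac12)\}$. Placing $(0,0)$ below it gives type $(0,1)$ payoffs $0,0,\tfrac12$, so the outside option is a local but not global maximum. \Cref{cor:payoff}(i) does not rescue this, because it requires the outside option to be optimal for a positive measure of types. Here it is optimal only within the null set $\{\theta_K\cdot y^1_K=0\}$. Nor does the chopping argument of Step 1 of \Cref{thm:multiplegood} help, since it presupposes that the outside option is already in the menu.

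The paper instead \emph{replaces} $(y^1_K,0)$ by $(0,0)$. Since $\theta_K\cdot y^1_K\ge 0$, this only enlarges $\mathcal{H}^2$ and $\mathcal{G}^2$, so incremental nesting, and hence searchability by \Cref{thm:main}, is preserved. Payments weakly rise for almost every type, because the replaced option was priced at zero while all remaining prices are nonnegative. With these two steps supplied, your argument goes through.
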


\paragraph{Screening with Money and Ordeals.}\hspace{-2mm}Consider the problem of screening with money and ordeals in \Cref{subsec:ordeal}. The type space $\Theta$  consists of all linear preferences: $q v + \sum_i c_i (-y_i)$ where $q \in [0, 1]$, $-y_i \in \R_{-}$ for all $i$, and $(v, c) \in \R^{n+1}_+$. In particular, letting $z_i := -y_i$ for all $i$, we have that the utility function is nondecreasing in $(q, z) \in [0, 1]\times \R^{n}_{-}$ for all types $(v, c) \in \R^{n+1}_+$. \Cref{thm:grid} yields that for any partition $\mathcal{K}$, the grid-searchable menus must have block-additive pricing and within-block searchability. It turns out that in this case a posted-requirement menu identified in \Cref{thm:ordeal} continues to be optimal even if we allow for any grid-searchable menus with any partition $\mathcal{K}$.  
\begin{cor}\label{cor:grid-ordeal}
For any partition $\mathcal{K}$ and grid-searchable product menu with a strictly increasing price function, there exists a posted-requirement menu that achieves a weakly higher $\alpha$-Welfare. 
\end{cor}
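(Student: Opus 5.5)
The plan is to combine \Cref{thm:grid} with \Cref{thm:ordeal}, block by block. We work in the transformed coordinates $(q,z)$ with $z=-y$, so that every type's utility is linear and nondecreasing in $(q,z)$ and the type space is rich.

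\textbf{Step 1: reduce to block-additive menus.} Fix a partition $\mathcal{K}$ and a grid-searchable product menu $(\mathcal{Y},p,\preceq)$ with a strictly increasing price function. By \Cref{thm:grid}(i), there is a product menu with block-additive pricing $p(y)=\sum_K p_K(y_K)$ that induces the same outcomes for every type, and in which each component menu $(\mathcal{Y}_K,p_K,\preceq_K)$ is searchable. The $\alpha$-Welfare depends only on the induced outcomes, so we may work with this block-additive menu.

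\textbf{Step 2: rewrite the problem in terms of the good.} The good coordinate $q$ lies in exactly one block, call it $K_0$. Every other block $K$ contains only ordeal coordinates $z_i\le 0$. Under block-additive pricing, an agent choosing in such a block maximizes $\sum_{i\in K} c_i z_i - p_K(y_K)$. Each option there costs the agent a non-negative amount (ordeal burden plus price) and gives no value. The designer's objective in block $K$ therefore consists of revenue $p_K$ collected with weight $(1-\alpha)$ and disutility imposed with weight $\alpha w$.

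Here I would argue that the non-good blocks can be dropped. The agent's choice in $K$ does not affect her choice in $K_0$, and the designer is free to replace block $K$ by its cheapest option. After normalizing relative to the outside option, the agent takes $y_K=0$ when that option is available. If the outside option must lie in the menu, every non-good block contains a zero option, and a type facing only non-positive-value options picks the best of them.

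The concern is whether a block with negative prices (subsidies) could raise welfare. This mirrors the ``chop off negative prices'' argument in Step 1 of \Cref{thm:multiplegood}'s proof, and I expect it to be the main obstacle. The outcome is a transfer scheme that does not depend on allocation of the good, and I would show it is weakly welfare-reducing. The reason is that paying transfers to agents who self-select by ordeal costs revenue at weight $(1-\alpha)$, while the agent gains at most the same amount at weight $\alpha w$. This comparison needs a bounded normalization of $w$ relative to the profit weight, or else an argument that pure transfer blocks are not part of the paper's menu class (prices nonnegative, $p\in\R_+$). I would use $p\in\R_+$: with nonnegative prices and nonpositive ordeal values, deleting all non-outside options of block $K$ weakly raises both agent utility (she keeps zero) and profit. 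The profit claim needs care, since the block's revenue is lost. I would handle this by merging that revenue into $K_0$ via an equivalent charge, or, more simply, by arguing that a designer with $\alpha>0$ weights... I will try the merging: replace the block by nothing and note that the combined menu restricted to $K_0$ times the best choice in $K$ is itself a searchable single-block menu under the total order, since adding a constant-type-independent component preserves nothing in general. This is the delicate point.

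\textbf{Step 3: apply \Cref{thm:ordeal} to block $K_0$.} The block $K_0$ is a searchable menu over $(q,z_{K_0})$, and it is feasible in the original single-menu problem of \Cref{subsec:ordeal}, with ordeals outside $K_0$ set to zero. Hence \Cref{thm:ordeal} provides a posted-requirement menu that weakly improves $\alpha$-Welfare over it. Chaining the inequalities from Steps 1--3 gives the corollary.
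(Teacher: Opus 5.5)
\textbf{There is a genuine gap at Step 2.} Steps 1 and 3 follow the paper's route, but Step 2, which you flag as the main obstacle, is never closed, and the remedies you suggest either fail or are unnecessary.

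\begin{itemize}
\item A normalization of $w$ against the profit weight is not needed.
\item ``Merging'' a non-good block's revenue into the block $K_0$ containing $q$ cannot be justified. A product of two searchable chains with additive prices is generally not searchable as a single chain, which is exactly why grid-searchability is a separate notion.
\item Your claim that deleting the non-zero options of a non-good block weakly raises the agent's utility is backwards: if a type chose such an option, deleting it can only hurt her.
\end{itemize}
As written, nothing links the welfare of the product menu to the welfare of the $K_0$ component, so your chain of inequalities breaks at Step 2.

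\textbf{The missing idea is that these blocks are inert.} No type ever chooses a non-zero option in a block that does not contain the good coordinate, so there is no revenue to lose and no subsidy to worry about.

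\begin{itemize}
\item The menu contains the outside option with $p(0)=0$, payments are nonnegative, and $p$ is strictly increasing. Hence $0_K$ must be the $\preceq_K$-minimal element of every $\mathcal{Y}_K$, because a predecessor $r_K$ would give $p(r_K,0_{-K})<p(0)=0$. In particular, the deletion step in the proof of \Cref{thm:grid} never removes $0_K$.
\item Normalize the block-additive prices as $P_K(y_K):=p(y_K,0_{-K})$, so that $P_K\ge 0$ and $P_K(0_K)=0$.
\item Take any block $K$ without the good coordinate and any $z_K\neq 0_K$. Replacing $z_K$ by $0_K$ changes the payoff of type $(v,c)$ by $-c_K\cdot z_K+P_K(z_K)>0$, since $c_K\ge 0$, $z_K\le 0$, and $P_K(z_K)>P_K(0_K)=0$.
\end{itemize}
So every type takes $0_K$ in every such block. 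These blocks produce neither payments nor ordeal costs, and the $\alpha$-Welfare of the product menu equals that of $(\mathcal{Y}_{K_0},P_{K_0},\preceq_{K_0})$.

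The same normalization is what makes the $K_0$ component a legitimate menu for \Cref{thm:ordeal}, with the outside option at price zero and nonnegative prices. Your Step 3 needs this too but does not check it: the decomposition from \Cref{thm:grid}(i) is only pinned down up to constants shifted across blocks.
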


To see why the corollary is true, note that for any component $K \in \mathcal{K}$ that does not include the productive dimension $q \in [0, 1]$, under any block-additive pricing, almost all types would strictly prefer to choose $0$ over any non-zero $z_K \in \R^K_{-}$.\footnote{Formally, the payoff difference is given by 
\[c_K \cdot (0 -z_K) - \big(p_K(0) - p_K(z_K) \big) = c_K y_K  + p(z_K, 0) - p(0, 0) = c_K y_K  + p(z_K, 0) > 0\,,\]
for Lebesgue-almost-all $c_K \in \R^K_+$, where the first equality holds because of the block additivity of $p$, and the second equality holds because $p(0) = 0$ and $p \in \R_+$ by assumption.} Therefore, instead of offering the original grid-searchable menu, the designer can just offer the searchable component menu that includes the productive dimension $q \in [0, 1]$. Then, by \Cref{thm:ordeal}, a posted-requirement menu is optimal. 

\paragraph{Taxation with Multiple Incomes.}\hspace{-2mm}Consider the taxation problem in \Cref{subsec:taxation}. However, suppose that each agent has $n$ income-generating activities. A classical example is taxation with couples, where an agent is interpreted as a household, and each household member may have an independent source of income.\footnote{See, for example, \cite{KlevenKreinerSaez2009OptimalIncomeTaxationCouples}, \cite{GolosovKrasikov2025OptimalTaxationCouples}, and \cite{BierbrauerBoyerPeichlWeishaar2023TaxationCouples}.} Let $y:= (y_1, \dots, y_n) \in [0, \overline{y}]^n$ denote the generated income levels. Suppose that agents have additively separable preferences over consumption and income sources, i.e., their utility functions are given by: 
\[
u(c, \theta)-\sum_{i=1}^n h_i(y_i, \theta)
\]
where $u$ is strictly increasing and concave in consumption level $c$, and each $h_i$ is nondecreasing and convex in income level $y_i$.

Let  $T:[0, \overline{y}]^n\to \mathbb{R}$ denote the tax schedule. The agent's budget constraint is then 
\[
c\leq \sum_{i=1}^n y_i-T(y).
\]
We assume that the agent's problem, for any tax schedule $T$, is to maximize the Lagrangian with a multiplier $\lambda\geq 0$ on the budget constraint:
\[
(u(c, \theta)-\lambda c)+\sum_{i=1}^n (\lambda y_i-h_i(y_i, \theta))-\lambda T(y)\,,
\]
where the multiplier generally depends on the tax schedule $T$ and the type $\theta$.\footnote{When agents have \textit{\textbf{quasilinear preferences}}, $u(c, \theta) = c$, the Lagrangian formulation is an identity rather than an assumption: the consumption part $u(c,\theta) - \lambda c$ is maximized where $u_c(c, \theta) = \lambda$, so $\lambda = 1$ for every type and every tax schedule. The Lagrangian then reduces to $\sum_{i} \big(y_i - h_i(y_i, \theta)\big) - T(y)$, i.e., the payoff obtained by substituting the binding budget constraint into $c - \sum_i h_i(y_i, \theta)$. In that case, \Cref{cor:grid-tax} applies verbatim to the agent's problem under the budget constraint, and a linear type is simply one with a constant marginal disutility of earnings, $h_i(y_i, \theta) = \alpha_i y_i$.} Because the consumption term $u(c, \theta) - \lambda c$ is concave in $c$ and the tax does not depend on $c$, the consumption choice forms a separate component that is searchable for every type and imposes no restriction on $T$; we may therefore suppress it and let the agent search over income levels alone. This represents the taxation problem as a special case of the general framework of this section, with $\mathcal{Q}$ representing earnings choices and the tax $T$ determining the transfer.\footnote{As in the previous application, we can apply a change of variables to make the agent's preferences nondecreasing in each earnings dimension, which allows us to apply \Cref{thm:grid}.} 

As in \Cref{subsec:taxation}, we assume that the marginal tax rate for any income activity is strictly below $1$: $T(y'_i, y_{-i}) - T(y_i, y_{-i}) < y'_i - y_i$, 
for all $y_i < y'_i$ and all $y_{-i}$. We say that $T$ is a \textit{\textbf{separable tax schedule}} if 
\[T(y) = \sum_i T_i(y_i)\]
for some functions $T_i$.  Recall that $\Theta$ is \textit{\textbf{rich}} if it includes all linear preferences. Since \Cref{sec:grid} restricts attention to finite menus, we consider the finite grids of income levels from which the agent chooses. An \textit{\textbf{(income) grid}} is a product
$Y = \prod_{i=1}^n Y_i$ of finite sets $Y_i \subseteq [0, \overline{y}]$, and it induces the product menu whose $i$-th component menu is $Y_i$ and whose transfer is the restriction of $T$. We refer to the product order that ranks the income levels in each $Y_i$ from lowest to highest as the \textit{\textbf{income product order}}. As in \Cref{thm:main}, where searchability of a menu is characterized through its finite submenus, we ask greedy search to succeed on every grid.

\begin{cor}\label{cor:grid-tax}
Suppose $\Theta$ is rich and let $\mathcal{K}$ be the finest partition. The following are equivalent:
\begin{itemize}
\item[(i)] $T$ is a separable tax schedule and each $T_i$ is progressive;
\item[(ii)] for every grid, the product menu is grid-searchable under the income product order;
\item[(iii)] for every grid, the product menu is grid-searchable under some product order.
\end{itemize}
\end{cor}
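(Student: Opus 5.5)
We prove the cycle (i)$\Rightarrow$(ii)$\Rightarrow$(iii)$\Rightarrow$(i), throughout working with the change of variables $z_i := -y_i$ mentioned in the footnote, so that linear types (constant marginal disutility $\alpha_i$ on income $i$, with $\lambda$ normalized) have utility nondecreasing and linear in the allocation. The agent's payoff over income grid points is $\sum_i(\lambda y_i - h_i(y_i,\theta)) - \lambda T(y)$, which is additively separable with concave components $\lambda y_i - h_i$. This puts us exactly in the setting of \Cref{thm:grid} with the finest partition, component menus $Y_i$, and transfer $\lambda T$.

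\textbf{(i)$\Rightarrow$(ii).} Suppose $T=\sum_i T_i$ with each $T_i$ convex. For any grid $Y$, the restricted transfer is block-additive with components $T_i|_{Y_i}$. By the first part of \Cref{thm:progressive}, each component menu $Y_i$ ordered by earnings is searchable: this argument applies to the finite restriction because a convex $T_i$ restricted to $Y_i$ still makes disposable income discretely concave along $Y_i$, and \Cref{prop:concave} handles concave $h_i$. Part (ii) of \Cref{thm:grid} then gives grid-searchability under the income product order. The implication (ii)$\Rightarrow$(iii) is trivial.

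\textbf{(iii)$\Rightarrow$(i).} This is the main direction, and the main obstacle is that (iii) only says each grid has \emph{some} product order, possibly different across grids, while \Cref{thm:grid}(i) only produces \emph{some} block-additive menu inducing the same outcomes for every type. The plan is to fix a grid. The marginal-tax-below-one assumption makes the transfer, in the $z$ variables, strictly increasing after we reorder by earnings, which supplies the hypothesis of \Cref{thm:grid}. Under richness, \Cref{thm:grid}(i) then applies.

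To upgrade "outcome-equivalent block-additive" to "$T$ itself is additive on the grid," rerun the $2\times2$ argument from the proof sketch directly on any four points $(y_i,y_j,y_{-ij})$, $(y_i',y_j,y_{-ij})$, $(y_i,y_j',y_{-ij})$, $(y_i',y_j',y_{-ij})$. Take $Y_i=\{y_i,y_i'\}$, $Y_j=\{y_j,y_j'\}$, and all other $Y_k$ singletons. Any product order on two-element component menus makes the two points in each component adjacent, so the grid is a $2\times2$ square whatever the order. A linear type that is exactly indifferent or slightly averse to each single move then gives a local-but-not-global optimum unless
\[
T(y_i',y_j')+T(y_i,y_j)=T(y_i',y_j)+T(y_i,y_j').
\]
This argument needs $\lambda>0$, which holds for linear types. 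Vanishing mixed differences for all pairs of coordinates on $[0,\overline{y}]^n$ give $T(y)=\sum_i T_i(y_i)$, for instance by telescoping from $0$.

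For progressivity of $T_i$, use one-dimensional grids (all other $Y_k$ singletons) with three points $a<b<c$ in $Y_i$. The second part of \Cref{thm:progressive}, whose proof handles every ordering, applies to this three-point menu and shows that the slopes satisfy $\frac{T_i(b)-T_i(a)}{b-a}\le\frac{T_i(c)-T_i(b)}{c-b}$. This is convexity of $T_i$, which completes (i).
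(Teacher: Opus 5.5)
Your argument for (i)$\Rightarrow$(ii) matches the paper's. For (iii)$\Rightarrow$(i) you take a genuinely different and more elementary route. The paper first proves \Cref{lem:monotone-order}, using extreme linear types, to show that every admissible component order is the income order or its reverse. This makes the price strictly increasing in the reversed-income coordinates, so the full inductive proof of \Cref{thm:grid}(i) can be run on each grid. The paper then reads off separability from the grids $\prod_i\{0,y_i\}$ and convexity from three-point grids via the proof of \Cref{thm:progressive}.

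You avoid the order question altogether. On a grid with two two-point components and singletons elsewhere, adjacency does not depend on the order, so the base $2\times 2$ construction alone forces vanishing mixed differences, and telescoping then gives separability. The three-case analysis in \Cref{thm:progressive} already covers every ordering of three points, so no monotonicity lemma is needed there either. The paper's route buys extra information the corollary does not need: on every grid, the admissible component orders are exactly the income order up to reversal. Yours needs neither \Cref{thm:grid}(i) nor \Cref{lem:monotone-order}. You should therefore delete the sentence about reordering by earnings and invoking \Cref{thm:grid}(i). Hypothesis (iii) gives grid-searchability only under the given order, and justifying a reordering is exactly what \Cref{lem:monotone-order} is for. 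Your argument never uses that step anyway.

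One step needs a small patch. On a three-point grid, grid-searchability delivers only the local-to-global property, not the interval property built into searchability. So you cannot cite the second part of \Cref{thm:progressive} verbatim. In its case (i), with $y_2$ in the middle and $\beta=s_{13}$, the outcome $L_2<L_1=L_3$ violates only the interval property. Use $\beta=s_{13}+\delta$ with $0<\delta<s_{12}-s_{13}$ instead, which is admissible since $s_{12}<1$. Then $L_2<L_1<L_3$, so $y_1$ is a local but not global maximum. Cases (ii) and (iii) already produce strict local-but-not-global maxima.

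Finally, your claim that the payoff with transfer $\lambda T$ and $v=\sum_i(\lambda y_i-h_i)$ puts you "exactly in the setting" of \Cref{thm:grid} is not literally true: the price is type-dependent and $v$ is not monotone. This is harmless for (i)$\Rightarrow$(ii), since the proof of part (ii) uses only additive separability. The paper's normalization, with $v=-\lambda_\theta^{-1}\sum_i h_i$ and type-independent price $-S$ in reversed-income coordinates, is what meets the stated hypotheses.
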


The proof of \Cref{cor:grid-tax}, given in the appendix, combines \Cref{thm:progressive} and \Cref{thm:grid}. If agents make decisions about consumption and each income-generating activity separately (i.e., $\mathcal{K}$ is the finest partition), then grid-searchability implies that the tax schedule must be separable. Additionally, within each income source, taxes must be progressive. 

Of course, additional (non-separable) tax schedules can be implemented if $\mathcal{K}$ is a coarser partition. For example, if all income sources are included in a single menu, then the general characterization from \Cref{sec:char} implies that a tax schedule can be implemented as long as the incremental nesting condition holds. 

\section{Concluding Remarks}\label{sec:discussion}

In this paper, we introduced searchability as an agent-side tractability requirement on screening mechanisms. We argued that, although searchability has no bite in one-dimensional problems, it sharply disciplines menu design in multidimensional environments. Our applications to multiproduct monopoly, screening with ordeals, and income taxation illustrate how accounting for the agent’s search problem can generate simple and economically meaningful optimal mechanisms. In this section, we briefly discuss several connections and directions for future research.

\paragraph{Searchability and Single Crossing.}\hspace{-2mm}In one-dimensional screening problems, the single-crossing property provides the structure needed for tractability. As shown in \Cref{cor:1d}, searchability then imposes no additional restriction. In multidimensional problems, we showed that searchability alone provides substantial traction. This suggests a natural question: can searchability be understood as a ``substitute'' for single crossing?\footnote{Relatedly, \cite{KartikKleiner2025} develop a multidimensional generalization of single crossing based on convex choice and directional single-crossing differences.}

To understand the connection between the two concepts, it is useful to define single-crossing preferences in the language of incremental better-off sets. Fix an ordered set $\Theta \subseteq \R$. We say that agents' preferences are \textit{\textbf{single crossing on the menu}} $(\mathcal{M}, \preceq)$ if for any $\theta < \theta'$ and $x \prec x'$ in $\mathcal{M}$, $$u(x, \theta)\,\,\leq (<) \,\,u(x', \theta) \implies u(x, \theta') \,\,\leq (<) \,\,u(x', \theta').$$ A set $A\subseteq\Theta$ is \textit{\textbf{upward-closed}} if $\theta\in A$ and $\theta'>\theta$ imply $\theta'\in A$. For any $x\prec x'$ in $\mathcal{M}$, we use $\mathcal{H}_{x,x'}$ and $\mathcal{G}_{x,x'}$, respectively, to denote the weak and strict incremental better-off sets (from $x$ to $x'$).  
\begin{observation}[Single crossing as exogenous global nesting]\label{obs:single-crossing}
Let $\Theta \subseteq \R$ and let $(\mathcal{M}, \preceq)$ be a finite menu. Agents' preferences are single crossing on  $(\mathcal{M}, \preceq)$  if and only if 
\[\text{both }
\big\{\mathcal H_{x,x'}:x\prec x'\text{ in }\mathcal M\big\}  \text{ and } \big\{\mathcal G_{x,x'}:x\prec x'\text{ in }\mathcal M\big\}   \text{ are nested families of upward-closed sets}\,.\]
\end{observation}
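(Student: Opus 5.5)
The plan is to show that the observation is essentially a restatement of the definitions. The one piece of real content is that upward-closed subsets of a totally ordered set are automatically nested. I will prove the two directions separately, handling $\mathcal H$ and $\mathcal G$ in parallel.

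\emph{Single crossing $\Rightarrow$ nested upward-closed families.} Fix $x\prec x'$ in $\mathcal M$ and take $\theta\in\mathcal H_{x,x'}$, so that $u(x,\theta)\le u(x',\theta)$. For any $\theta'>\theta$, the weak version of single crossing on the menu gives $u(x,\theta')\le u(x',\theta')$, that is, $\theta'\in\mathcal H_{x,x'}$. Hence each $\mathcal H_{x,x'}$ is upward-closed, and the strict version shows the same for each $\mathcal G_{x,x'}$. For nestedness I will use the following elementary lemma. If $A,B\subseteq\Theta\subseteq\mathbb R$ are upward-closed, then $A\subseteq B$ or $B\subseteq A$. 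To prove it, suppose $A\not\subseteq B$ and pick $a\in A\setminus B$. For any $b\in B$, the case $b\le a$ is impossible: either $b=a$, which contradicts $a\notin B$, or $b<a$, in which case upward-closedness of $B$ would force $a\in B$. So $b>a$, and upward-closedness of $A$ gives $b\in A$. Therefore $B\subseteq A$. Applying the lemma pairwise shows that each of the two families is a chain under inclusion, i.e.\ nested. Note that the lemma is applied within each family separately, as the statement requires.

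\emph{Nested upward-closed families $\Rightarrow$ single crossing.} This direction uses only upward-closedness. Take $\theta<\theta'$ and $x\prec x'$. If $u(x,\theta)\le u(x',\theta)$, then $\theta\in\mathcal H_{x,x'}$, and upward-closedness gives $\theta'\in\mathcal H_{x,x'}$, i.e.\ $u(x,\theta')\le u(x',\theta')$. The strict implication follows in the same way from upward-closedness of $\mathcal G_{x,x'}$.

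I expect no real obstacle. The only substantive step is the chain lemma above, which depends on $\Theta$ being totally ordered as a subset of $\mathbb R$. It is worth noting in the write-up that nestedness is therefore implied by upward-closedness in this one-dimensional setting. This is exactly the point of contrast with searchability, where nesting is imposed only on adjacent incremental sets and no exogenous type order is available. Finiteness of $\mathcal M$ plays no role beyond matching the framework of \Cref{thm:main}.
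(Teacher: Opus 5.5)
Your proof is correct: upward-closedness of each $\mathcal H_{x,x'}$ and $\mathcal G_{x,x'}$ is a restatement of the weak and strict single-crossing implications, and your chain lemma correctly derives nestedness from the total order on $\Theta\subseteq\mathbb{R}$. The paper states this observation without a written proof, and your argument is the direct verification it implicitly relies on, including the point that nestedness is redundant once upward-closedness holds.
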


Comparing \Cref{thm:main} with \Cref{obs:single-crossing}, note that searchability relaxes the structure of the single-crossing property in three ways. First, searchability does not require a primitive one-dimensional order over types; the relevant nesting may be \textit{induced} by the menu itself. Second, it does not require the menu order to coincide with a primitive allocation order; the designer may \textit{choose} the order in which options are searched. Third, even restricted to a given menu, searchability differs from single crossing by only imposing a local chain condition; every ordered finite submenu must have nested \textit{adjacent} comparison sets (\Cref{thm:main}), but the comparison sets generated by non-local comparisons need not themselves form a nested family (as in the case of single crossing, by \Cref{obs:single-crossing}). In this sense, searchability preserves the nesting logic of single crossing while endogenizing the relevant order and localizing the restriction.

\paragraph{Future Research Directions.}\hspace{-2mm}Both concepts analyzed in this paper---searchability and grid-searchability---take a particular view of how the agent conducts search. The underlying idea that local optimality should imply global optimality, however, is more general. For example, one could consider an agent who searches along the branches of a tree or, more broadly, over some partially ordered set. We leave the analysis of such alternative search structures for future research.

In this paper, we imposed searchability as a constraint. A natural question is whether searchability would arise endogenously under an explicit behavioral model. Following the approach of \citetaliasyearpar{simplicity}, suppose that agents may select any locally optimal option from the menu and that the designer evaluates the resulting outcome according to a worst-case criterion. Under what conditions does the designer optimally choose a searchable menu?

Finally, the behavioral premise underlying searchability also lends itself naturally to experimental investigation. In a laboratory setting, one could test whether searchability reduces choice errors and decision times, and whether these effects are sufficiently strong to justify imposing searchability as a design constraint.

\setlength\bibsep{2pt}
\bibliographystyle{ecta} 
\bibliography{references}

\appendix

\section{Omitted Proofs}\label{app:proof}

\subsection{Proof of \texorpdfstring{\Cref{lem:char}}{}}

The proof consists of three parts. 

$(a) \implies (b)$: Suppose the menu is searchable. Fix any $\theta$. Let $g(s):= u(x^s, \theta)$. We claim that $g(s)$ is semi-strictly quasi-concave. We first show quasi-concavity. Suppose for contradiction that there exist some $s < t < r \in S$ such that 
\[g(t) < \min\{g(s), g(r)\}\,.\]
Let 
\[s_L \in \argmax_{z\in [\min S, t]} g(z) \quad \text{ and }  \quad s_R \in \argmax_{z\in [t, \max S]} g(z)  \,.\]
The maximizers exist by the compactness of $S$ and the upper-semicontinuity of $g$. Since $g(s) > g(t)$ and $g(r) > g(t)$, it follows immediately that $s_L < t < s_R$. Then, by definition, it follows that both $s_L$ and $s_R$ are local maximizers of $g$. By searchability, they are global maximizers. It follows that 
\[g(t) < g(s_L) = g(s_R) = \max_{z \in S} g(z)\,.\]
However, this implies that the set of global maximizers of $g$ is not an interval, contradicting searchability. Thus, $g$ is quasi-concave. 

We now show the semi-strict part. Suppose for contradiction that there exist $s  < t < r \in S$ such that $g(s) \neq g(r)$ and 
\[g(t) = \min\big\{g(s), g(r)\big\}\,.\]
Consider the case where $g(s) < g(r)$ (the other case is symmetric). We claim that $s$ is a local maximizer of $g$. Indeed, by the above, we have $g(r) > g(t) = g(s)$. Then, by quasi-concavity of $g$, we have  $g(z) \geq g(s) = g(t)$ for all $z \in [s, t]$. Moreover, by quasi-concavity of $g$, we also have 
\[g(z) \leq g(t) = g(s) < g(r)\]
for all $z \in [\min S, t] $. It follows immediately that $g(s) = g(z)$ for all $z  \in [s, t] $ and 
\[g(s) \geq g(z)\]
for all $z \in [\min S, s] $. Thus, by definition, $s$ is a local maximizer of $g$. However, since $g(r) > g(s)$, $s$ is not a global maximizer of $g$, contradicting searchability. \\

$(b) \implies (c)$: Suppose condition $(b)$ holds. Fix any $\theta$. Let $g(s):= u(x^s, \theta)$. Then, $g$ is semi-strictly quasi-concave. Fix any $s < t < r$. We want to show that 
\[g(t) \leq g(r) \implies \text{ either } g(s) < g(t) \text{ or } g(s) = g(t)  = g(r)\,.\]
Suppose $g(t) \leq g(r)$. We claim that $g(s) \leq g(r)$. Indeed, if $g(s) > g(r)$, then by semi-strict quasi-concavity of $g$, we have 
\[g(t) > \min\big\{g(s), g(r)\big\} = g(r) \]
which is impossible. Now, there are two cases. First, suppose $g(s) < g(r)$. Since $g$ is semi-strictly quasi-concave, we have 
\[g(t) > \min\big\{g(s), g(r)\big\} = g(s)\,,\]
as desired. Second, suppose $g(s) = g(r)$. By quasi-concavity of $g$, we have $g(t) \geq g(s) = g(r)$. Combined with $g(t) \leq g(r)$, we conclude 
\[g(t) = g(r) = g(s)\,, \]
as desired. \\

$(c) \implies (a)$: Suppose condition $(c)$ holds. We show that the menu is searchable. Fix any $\theta$. Let $g(s):= u(x^s, \theta)$. We first show that the set of global maximizers of $g$ is an interval. Indeed, fix any $s < t < r$ such that $s$ and $r$ are global maximizers of $g$. Then, we have 
\[g(t) \leq g(r)\,,\]
which by condition $(c)$ implies that 
\[\text{either } g(s) < g(t) \text{ or } g(s) = g(t) = g(r)\,.\]
The first case is impossible given that $s$ is also a global maximizer of $g$. It follows that $g(s) = g(t) = g(r)$ and hence $t$ is also a global maximizer of $g$, as desired. 

We now show that every local maximizer of $g$ must be a global maximizer. Suppose for contradiction that $s^\star$ is a local maximizer of $g$ but not a global maximizer. Let $r \in S$ be a global maximizer of $g$. Then $g(r) > g(s^\star)$. First, consider the case where $s^\star < r$. Then, since $s^\star$ is a local maximizer, there exists some $t \in S$ such that $s^\star < t < r$ and 
\[g(s^\star) \geq g(t)\,.\]
Thus, we have 
\[g(r) > g(s^\star) \geq g(t)\,.\]
Applying condition $(c)$ to $s^\star < t < r$, we obtain 
\[\text{either }  g(s^\star)  < g(t) \text{ or } g(s^\star) = g(t) = g(r)\,.\]
The first possibility contradicts $g(s^\star) \geq g(t)$ and the second possibility contradicts $g(s^\star) < g(r)$. Now, consider the case where $r < s^\star$. The argument is symmetric. Since $s^\star$ is a local maximizer, there exists some $t \in S$ such that $r < t < s^\star$ such that \[g(s^\star) \geq g(t)\,.\] 
Thus, we have 
\[g(r) > g(s^\star) \geq g(t)\]
as before. Applying condition $(c)$ to $r < t < s^\star$, we obtain 
\[\text{either }  g(r)  < g(t) \text{ or } g(r) = g(t) = g(s^\star)\,.\]
The first possibility contradicts $g(r) \geq g(t)$ and the second possibility contradicts $g(s^\star) < g(r)$. This concludes the proof. 

\subsection{Proof of \texorpdfstring{\Cref{thm:main}}{}}

We first prove the first statement of \Cref{thm:main} and then prove the simplification for a finite menu. Fix any menu $(\mathcal{M}, \preceq)$. By \Cref{lem:char}, the menu is searchable if and only if condition $(c)$ of \Cref{lem:char} holds.

First, we claim that, under condition $(c)$ of \Cref{lem:char}, every finite submenu satisfies the incremental nesting condition. Fix any finite submenu $\mathcal{N}= \{x^{s_1}, \dots, x^{s_m}\}$ for $s_1 < \cdots < s_m$. Define the incremental better-off sets $\mathcal{H}^k$ and $\mathcal{G}^k$ for the submenu $\mathcal{N}$. Fix any $k \in \{2, \dots, m-1\}$. Suppose $\theta \in \mathcal{H}^{k+1}$. Then
\[u(x^{s_k}, \theta) \leq u(x^{s_{k+1}}, \theta)\,.\]
Applying condition $(c)$ of \Cref{lem:char} to $s_{k-1} < s_k < s_{k+1}$, we obtain either 
\[u(x^{s_{k-1}}, \theta) < u(x^{s_{k}}, \theta)\]
or 
\[u(x^{s_{k-1}}, \theta) = u(x^{s_{k}}, \theta) = u(x^{s_{k+1}}, \theta)\,.\]
In either case, we have $u(x^{s_{k-1}}, \theta) \leq u(x^{s_{k}}, \theta)$ and hence $\theta \in \mathcal{H}^{k}$, as desired. Similarly, suppose $\theta \in \mathcal{G}^{k+1}$. Then $u(x^{s_k}, \theta) < u(x^{s_{k+1}}, \theta)$. Applying condition $(c)$ of \Cref{lem:char} to $s_{k-1} < s_k < s_{k+1}$, we also obtain the above as before. Moreover, the second possibility cannot occur since $u(x^{s_k}, \theta) < u(x^{s_{k+1}}, \theta)$, and hence we conclude $u(x^{s_{k-1}}, \theta) < u(x^{s_{k}}, \theta)$, i.e., $\theta \in \mathcal{G}^{k}$. Together, these arguments prove that the incremental nesting condition holds for every finite submenu of $\mathcal{M}$.

Second, we claim that if every finite submenu satisfies the incremental nesting condition, then condition $(c)$ of \Cref{lem:char} must hold. To see this, fix any $s < t < r$ in $S$. Consider the three-option submenu $\mathcal{N}=\{x^s, x^t, x^r\}$. Define the incremental better-off sets $\mathcal{H}^k$ and $\mathcal{G}^k$ for the submenu $\mathcal{N}$. The incremental nesting condition for $\mathcal{N}$ states that $\mathcal{H}^3 \subseteq \mathcal{H}^2$ and $\mathcal{G}^3 \subseteq \mathcal{G}^2$. Now, fix any $\theta$ such that $u(x^t, \theta) \leq u(x^r, \theta)$. Then, $\theta \in \mathcal{H}^{3} \subseteq \mathcal{H}^2$, and hence 
\[u(x^s, \theta) \leq u(x^t, \theta)\,.\]
If $u(x^s, \theta) < u(x^t, \theta)$, then the first possibility of condition $(c)$ holds. Otherwise, we have $u(x^s, \theta) = u(x^t, \theta)$. We claim that $u(x^s, \theta) = u(x^t, \theta) = u(x^r, \theta)$. Indeed, if not, we must have $u(x^t, \theta) < u(x^r, \theta)$ and hence $\theta \in \mathcal{G}^3 \subseteq \mathcal{G}^2$---this would imply $u(x^s, \theta) < u(x^t, \theta)$ but that is impossible. This proves that the second possibility of condition $(c)$ holds. Together, these arguments prove that condition $(c)$ holds, as desired. 

Now, fix any finite menu $(\mathcal{M}, \preceq)$. We show that the incremental nesting condition holds for every submenu if and only if it holds for the full menu $(\mathcal{M}, \preceq)$. The necessity direction is obvious. We prove the sufficiency direction. Suppose the incremental nesting condition holds for the full menu $(\mathcal{M}, \preceq)$. Define the incremental better-off sets $\mathcal{H}^k$ and $\mathcal{G}^k$ for the full menu $\mathcal{M}$. Now, fix any submenu $(\mathcal{N}, \preceq)$. We index the options in $(\mathcal{N}, \preceq)$ by an index $l$, with $s_l \in S=\{1,...,m\}$ being the index in the full menu $(\mathcal{M}, \preceq)$. Define the incremental better-off sets ${\mathcal{H}}_\mathcal{N}^l$ and ${\mathcal{G}}_\mathcal{N}^l$ for the submenu $\mathcal{N}$. Fix any $\theta \in {\mathcal{H}}_\mathcal{N}^{l+1}$. We claim that $\theta \in {\mathcal{H}}_\mathcal{N}^{l}$. Indeed, since $\theta \in {\mathcal{H}}_\mathcal{N}^{l+1}$, we have 
\[u(x^{s_{l+1}}, \theta) - u(x^{s_{l}}, \theta)  \geq 0\,.\]
Therefore, we have 
\[ \sum_{s=s_{l}}^{s_{l+1}-1}\Big(u(x^{s+1}, \theta) - u(x^{s}, \theta)\Big) = u(x^{s_{l+1}}, \theta) - u(x^{s_{l}}, \theta) \geq 0\,.\]
Thus, at least one of the differences in the above sum must be non-negative, i.e., there exists some $s \in \{s_{l}, \dots, s_{l+1} - 1\}$ such that 
\[u(x^{s+1}, \theta) - u(x^{s}, \theta) \geq 0\]
and hence $\theta \in \mathcal{H}^{s+1}$. By the incremental nesting condition for the full menu, it follows that $\theta \in \mathcal{H}^{k+1}$ for all $k \in \{1, \dots, s\}$. Thus, for all $k \in \{1, \dots, s\}$, we have 
\[u(x^{k+1}, \theta) \geq u(x^{k}, \theta) \,.\]
Since $s_l-1 \leq s$, chaining the above inequalities, we obtain 
\[u(x^{s_l}, \theta) \geq u(x^{s_l-1}, \theta) \geq u(x^{s_l-2}, \theta) \geq \cdots \geq u(x^{s_{l-1}}, \theta)\,,\]
and hence $\theta \in {\mathcal{H}}_\mathcal{N}^{l}$, as desired. The nestedness of $\{{\mathcal{G}}_\mathcal{N}^{l}\}_{l}$ follows by the same argument as above except that the weak inequalities are replaced with the strict inequalities. Together, these arguments prove that the incremental nesting condition holds for the submenu $(\mathcal{N}, \preceq)$, as desired. 

\subsection{Proof of \texorpdfstring{\Cref{cor:1d}}{}}
Let $(\mathcal{M}, \preceq)$ be an essential menu. First, we claim that distinct options in $\mathcal{M}$ must have distinct $q$'s. Indeed, suppose 
\[(q, p), (q, p') \in \mathcal{M}\,.\]
Suppose, without loss, $p < p'$. Then, since $u((q, p), \theta)$ is strictly decreasing in $p$ for all $\theta$, it follows immediately that $(q, p')$ cannot be optimal for any type $\theta \in \Theta$, contradicting essentiality. Thus, the options in $\mathcal{M}$ must have distinct $q$'s. Now, consider the menu $(\mathcal{M}, \preceq_q)$ where we order the elements in $\mathcal{M}$ by their associated $q$. 

To prove \Cref{cor:1d}, by \Cref{thm:main}, it suffices to show that every finite submenu of $(\mathcal{M}, \preceq_q)$ satisfies the incremental nesting condition. Fix any finite submenu $(\mathcal{N}, \preceq_q)$. Write $\mathcal{N} = \{x^1, \dots, x^m\}$ ordered so that 
\[q^1 < q^2 < \cdots < q^m\,.\]
By essentiality of $\mathcal{M}$, for each $k \in \{1, \dots, m\}$, there exists some $\theta_k \in \Theta$ such that $x^k$ is an optimal choice for $\theta_k$ among all choices in $\mathcal{M}$ and hence among all choices in $\mathcal{N}$. Thus, we have 
\[u(x^k, \theta_k) \geq u(x^j, \theta_k)\]
for all $j$. In particular, we have
\[u(x^k, \theta_k) \geq u(x^{k+1}, \theta_k) \quad \text{ and }  \quad  u(x^k, \theta_k) \geq u(x^{k-1}, \theta_k)\,,\]
provided the neighboring options exist. Define the incremental better-off sets $\mathcal{H}^k$ and $\mathcal{G}^k$ for this finite submenu. Suppose $\theta \in \mathcal{H}^{k+1}$. Then 
\[u(x^k, \theta) \leq u(x^{k+1}, \theta)\,.\]
We claim that $\theta \geq \theta_k$. Indeed, if not, we have $\theta < \theta_k$ and $q^k < q^{k+1}$, which, by the single-crossing property, implies 
\[u(x^k, \theta_k) < u(x^{k+1}, \theta_k)\]
contradicting the optimality of $x^k$ for $\theta_k$. Thus, $\theta \geq \theta_k$. Now, since $q^k > q^{k-1}$ and $\theta \geq \theta_k$, applying the single-crossing property again, we have 
\[u(x^k, \theta) \geq u(x^{k-1}, \theta)\,,\]
and thus $\theta \in \mathcal{H}^k$, as desired. Now, suppose $\theta \in \mathcal{G}^{k+1}$. Then 
\[u(x^k, \theta) < u(x^{k+1}, \theta)\,.\]
The same argument as before implies $\theta \geq \theta_k$. In fact, here we must have $\theta > \theta_k$ because if $\theta = \theta_k$, then $x^k$ would be optimal for $\theta$ and hence the above cannot hold. Then, since $q^k > q^{k-1}$ and $\theta > \theta_k$, the single-crossing property gives 
\[u(x^k, \theta) >  u(x^{k-1}, \theta)\,,\]
and thus $\theta \in \mathcal{G}^k$, as desired. The result follows immediately by \Cref{thm:main}.

\subsection{Proof of \texorpdfstring{\Cref{cor:regular}}{}}

Fix any finite interior menu $(\mathcal{M}, \preceq)$ and write $\mathcal{M} = \{x^1, \dots, x^m\}$. The necessity direction is immediate by \Cref{thm:main}. We prove the sufficiency direction. Suppose $(\mathcal{M}, \preceq)$ satisfies the weak incremental nesting condition. By \Cref{thm:main}, since the menu is finite, it suffices to verify the nestedness of $\{\mathcal{G}^k\}_k$ where the strict incremental better-off sets $\mathcal{G}^k$ are defined for the full menu $(\mathcal{M}, \preceq)$. Fix any $k \in \{2, \dots, m-1\}$. Suppose $\theta \in \mathcal{G}^{k+1}$. Then, 
\[u(x^k, \theta) < u(x^{k+1}, \theta)\,.\]
By the weak incremental nesting condition, this implies that 
\[u(x^{k-1}, \theta) \leq u(x^k, \theta)\,.\]
We claim that the above inequality must be strict. Suppose for contradiction that we have $u(x^{k-1}, \theta) = u(x^k, \theta)$. Since the menu is interior, we have $\mathcal{H}^k \neq \Theta$ and hence there exists some $\theta' \in \Theta$ such that $\theta' \not\in \mathcal{H}^k$. Therefore, the following strict-comparison set is non-empty: 
\[A := \big\{\hat{\theta}: u(x^{k-1}, \hat{\theta}) > u(x^{k}, \hat{\theta})\big\} \neq \emptyset\,.\]
Since the environment is regular, we have 
\[\big\{\hat{\theta}: u(x^{k-1}, \hat{\theta}) = u(x^{k}, \hat{\theta})\big\} \subseteq \cl(A)\,.\]
Since $u(x^{k-1}, \theta) = u(x^k, \theta)$, we have $\theta \in \cl(A)$. Therefore, for any open set $U \subseteq \Theta$ with $\theta \in U$, we have $U \cap A \neq \emptyset$. Moreover, we also have $\theta \in \mathcal{G}^{k+1}$, which is an open set by the continuity of $u(x, \,\cdot\,)$. Thus, $\mathcal{G}^{k+1} \cap A \neq \emptyset$. Fix any $\hat{\theta} \in A \cap \mathcal{G}^{k+1}$. Then, we have $\hat{\theta} \in A \cap \mathcal{H}^{k+1} \subseteq A \cap \mathcal{H}^{k}$ by the weak incremental nesting condition. However, $\hat{\theta} \in A$ means that 
\[ u(x^{k-1}, \hat{\theta}) > u(x^{k}, \hat{\theta}) \]
while $\hat{\theta} \in \mathcal{H}^k$ means that 
\[ u(x^{k-1}, \hat{\theta}) \leq  u(x^{k}, \hat{\theta})\,,\]
a contradiction. Thus, we conclude 
\[u(x^{k-1}, \theta) < u(x^k, \theta)\]
and hence $\theta \in \mathcal{G}^k$, as desired. The result follows immediately by \Cref{thm:main}.

\subsection{Proof of \texorpdfstring{\Cref{cor:payoff}}{}}

Part (ii) is immediate by \Cref{thm:main}. We prove part (i). 

We first verify that the payoff type spaces are regular environments. Clearly, $\Theta \subseteq \R^{d}$ is a topological space, and $u(x, \theta)$ is affine and hence continuous in $\theta$. We verify the reachability condition. Fix any two distinct outcomes $x, x'$ and let 
\[A:= \big\{\theta \in \Theta: (q - q') \cdot \theta > p - p'\big\}\]
and 
\[I:= \big\{\theta  \in \Theta : (q - q') \cdot \theta = p - p'\big\}\,.\]
It suffices to show $I \subseteq \cl(A)$ whenever $A \neq\emptyset$. Suppose $A \neq\emptyset$. Then, there exists some $\hat{\theta} \in A$. Fix any $\theta \in I$. Since $\Theta$ is convex, for every $\varepsilon \in (0, 1)$, we have 
\[\theta_\varepsilon := (1 - \varepsilon) \theta + \varepsilon \hat{\theta} \in \Theta\,.\]
Moreover, by construction for any $\varepsilon \in (0, 1)$, we have 
\[(q - q') \cdot \theta_\varepsilon > p - p'\]
and hence $\theta_\varepsilon \in A$. Moreover $\theta_\varepsilon \rightarrow \theta$ as $\varepsilon \rightarrow 0$. Thus, $\theta \in \cl(A)$. Since $\theta \in I$ is arbitrary, $I \subseteq \cl(A)$, as desired. Therefore, any such environment is regular. 

Now, to prove part (i), let $(\mathcal{M}, \preceq)$ be any finite menu satisfying the weak incremental nesting condition. If the menu is interior, then it is searchable by \Cref{cor:regular}. Now, suppose the menu is not interior. We construct an outcome-equivalent searchable menu. For any $x \in \mathcal{M}$, define its demand set as 
\[D(x; \mathcal{M}) := \Big\{\theta \in \Theta: x \in \argmax_{y \in \mathcal{M}} u(y, \theta)\Big\}\,.\]
Let 
\[\mathcal{N} := \Big\{x \in \mathcal{M}: \text{Lebesgue measure}\big(D(x; \mathcal{M})\big) > 0\Big\}\,.\]
Consider this submenu $(\mathcal{N}, \preceq)$. By assumption, the outside option is optimal for a positive measure of types, so $\underline{x} \in \mathcal{N}$ and hence $(\mathcal{N}, \preceq)$ is itself a menu. By the proof of \Cref{thm:main}, the weak incremental nesting condition holds for the submenu as well. We claim that (i) $(\mathcal{N}, \preceq)$ can induce the same outcomes almost everywhere on $\Theta$ and (ii) $(\mathcal{N}, \preceq)$ is an interior menu. The result then follows by \Cref{cor:regular}. 

To see why $(\mathcal{N}, \preceq)$ induces a.e. the same outcome, let 
\[N := \bigcup_{x \in \mathcal{M} \backslash \mathcal{N}} D(x; \mathcal{M})\,.\]
Since $\mathcal{M}$ is finite, by construction, we have 
\[\text{Lebesgue measure}(N) = 0\,.\]
Consider any $\theta \not \in N$. By definition, type $\theta$ cannot find any $x \in \mathcal{M}\backslash \mathcal{N}$ optimal among the options in $\mathcal{M}$. Therefore, when offered the submenu $\mathcal{N}$, the set of optimal choices for type $\theta$ remains unchanged. Since this holds for all $\theta \not\in N$, under the constructed submenu, there exists a (measurable) selection of optimal choices that coincides with the original choice selection almost everywhere on $\Theta$. Hence, $(\mathcal{N}, \preceq)$ is outcome-equivalent to $(\mathcal{M}, \preceq)$. 

Now, we argue that $(\mathcal{N}, \preceq)$ is an interior menu. Suppose for contradiction that it is not an interior menu. Then, by definition, there exists some $k$ such that
\[u(x^{k-1}, \theta) \leq u(x^{k}, \theta)\]
for all $\theta \in \Theta$. This implies that 
\[D(x^{k-1}; \mathcal{M}) \subseteq \Big\{\theta \in \Theta : u(x^{k-1}, \theta) = u(x^k, \theta) \Big\} \subseteq \Big\{\theta \in \R^{d}: (q^{k} - q^{k-1}) \cdot \theta = p^{k} - p^{k-1}\Big\} =: Z\,.\]
Note that $Z$ defines a hyperplane in $\R^{d}$ and hence either equals the full space $\R^{d}$ or has Lebesgue measure $0$. In the first case, we would have $q^{k} = q^{k-1}$ and $p^{k} = p^{k-1}$, which is impossible since $x^k$ and $x^{k-1}$ are two distinct options. Now, in the second case, $D(x^{k-1}; \mathcal{M}) \subseteq Z$ must have Lebesgue measure $0$, but $x^{k-1} \in \mathcal{N}$ and hence  $D(x^{k-1}; \mathcal{M})$, by construction, has strictly positive Lebesgue measure, which is a contradiction. 

Therefore, $(\mathcal{N}, \preceq)$ is a finite interior menu that satisfies the weak incremental nesting condition, and can induce the same outcomes as $(\mathcal{M}, \preceq)$ almost everywhere on $\Theta$. Moreover, by \Cref{cor:regular},  $(\mathcal{N}, \preceq)$ is searchable, concluding the proof. 

\subsection{Proof of \texorpdfstring{\Cref{thm:multiplegood}}{}}

We prove \Cref{thm:multiplegood} in several steps. We first prove the optimality of an upgrade menu with at most $n$ tiers among finite searchable menus. We then extend the argument to general searchable menus, and finally show that the grand bundle must be the highest tier.

\subsubsection{Proof for Finite Menus}

\begin{prop}\label{prop:finite-screening}
For any searchable menu $\mathcal{M}$ of finite menu size $m$, there exists some searchable upgrade menu $\mathcal{N}$ with at most $n$ tiers such that 
\[\emph{Rev}(\mathcal{M}) \leq \emph{Rev}(\mathcal{N})\,.\]
\end{prop}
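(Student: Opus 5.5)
We follow the three steps of the proof sketch of \Cref{thm:multiplegood}, working with a finite searchable menu $\mathcal{M}=\{x^1,\dots,x^m\}$ with increments $(\Delta q^k,\Delta p^k)$ and boundary hyperplanes $\mathcal{I}^k$.

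\textbf{Preprocessing.} First we use \Cref{cor:payoff} and the argument in its proof. We delete options chosen by a null set of types; since $\mu$ has a density, revenue is unchanged. The remaining menu is interior and satisfies weak incremental nesting, so it is searchable.

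\textbf{Step 1: monotone prices.} Next we show that, up to reversing $\preceq$ and without loss of revenue, prices are nondecreasing along the menu. Types near $\theta=0$ have utility approximately $-p^k$. By semi-strict quasi-concavity (\Cref{lem:char}(b)) applied to such types, $-p^k$ is quasi-concave in $k$, so $p^k$ is quasi-convex.

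We then locate the outside option. Its utility is zero for all types, and searchability plus essentiality should force it to an endpoint. If it sat strictly inside, options on both sides would have positive-measure demand, and the nested half-spaces $\mathcal{H}^k$ could not produce this. Given quasi-convex prices with the outside option at an endpoint, any options with negative price can be removed. Those options are chosen only by types who would otherwise buy the outside option or a cheaper option, and removing them keeps the menu searchable because it is a submenu (\Cref{thm:main}). After possibly reversing the order, we obtain $p^1=0\le p^2\le\cdots\le p^m$.

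Because prices are increasing, revenue equals $\sum_{k\ge2}\Delta p^k\,\mu(\mathcal{H}^k)$. This holds because nestedness makes the set of types choosing $x^j$ or above exactly $\mathcal{H}^j$, up to null sets.

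\textbf{Step 2: upgrade menu.} Here we use that the hyperplanes $\mathcal{I}^k$ do not cross in the interior of $\R^n_+$ and that $\Theta$ is unbounded. From this we derive the sign pattern: if $\Delta q^k_i<0$, then $\Delta q^j_i\le0$ for all $j>k$.

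We then replace $\Delta q^k$ by $(\Delta q^k)^+$ componentwise, keeping $\Delta p^k$ fixed. The new allocations $\sum_{j\le k}(\Delta q^j)^+$ stay in $[0,1]^n$; this is where the sign pattern is used, since once a coordinate starts decreasing it never increases again. Each new set $\mathcal{H}^k$ contains the old one, so revenue weakly rises because $\Delta p^k\ge0$.

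We also need nesting to be preserved. The new half-spaces $\{\theta:(\Delta q^k)^+\cdot\theta\ge\Delta p^k\}$ must remain nested. We intend to verify this from the old nesting via the ordering of the normalized directions. This is the step I expect to be the main obstacle: rotating hyperplanes could create new crossings. My first attempt would be an induction from the top index down, zeroing the negative coordinates one good at a time and checking non-crossing on $\R^n_+$ directly.

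\textbf{Step 3: extreme points.} With an upgrade menu in hand, we scale each increment by $\lambda(k)\ge0$, applying it to both $(\Delta q^k,\Delta p^k)$. This leaves every $\mathcal{I}^k$ and $\mathcal{H}^k$ unchanged, so nesting and searchability are preserved. Revenue becomes linear in $\lambda$:
\[
\textstyle\sum_k\lambda(k)\Delta p^k\mu(\mathcal{H}^k),
\]
subject to the $n$ capacity constraints
\[
\textstyle\sum_k\lambda(k)\Delta q^k_i\le1 .
\]
The current menu corresponds to $\lambda\equiv1$, which is feasible. The LP is bounded whenever each $\Delta q^k\neq0$; increments with $\Delta q^k=0$ and $\Delta p^k>0$ have empty $\mathcal{H}^k$ a.e. and are dropped. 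A basic optimal solution therefore has at most $n$ positive $\lambda(k)$. Deleting the options with zero increments merges tiers and leaves the sets $\mathcal{H}$ of the surviving increments intact. The result is a searchable upgrade menu $\mathcal{N}$ with at most $n$ tiers and $\mathrm{Rev}(\mathcal{N})\ge\mathrm{Rev}(\mathcal{M})$.
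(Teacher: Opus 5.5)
Your outline follows the paper's proof step for step. Your positive-part replacement is also the same object as the paper's truncation: given the sign pattern, $(\Delta q^j_i)^+$ equals $\Delta q^j_i$ for $j<k(i)$ and $0$ for $j\ge k(i)$, where $k(i)$ is the first index with $\Delta q^{k(i)}_i<0$. The genuine gap is the step you flag yourself. You never show that the new half-spaces $\bar{\mathcal{H}}^k=\{\theta:(\Delta q^k)^+\cdot\theta\ge\Delta p^k\}$ remain nested. That is the only nontrivial claim in Step 2. Without it you cannot conclude via \Cref{cor:regular} that the upgrade menu is searchable, so the upgrade reduction is not established. Step 3 then also falls, since it holds these sets fixed. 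Your top-down induction checking non-crossing directly is a plan, not an argument, and no induction is needed.

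The missing idea is to pull a type back to the original nesting by projecting it onto the goods still active at the higher index. Fix $j<r$ and $\theta\in\bar{\mathcal{H}}^r$. Let $\theta'$ agree with $\theta$ on goods with $k(i)>r$ and vanish elsewhere. Since $\Delta\bar q^r_i=0$ whenever $k(i)\le r$, you get $\theta'\cdot\Delta q^r=\theta\cdot\Delta\bar q^r\ge\Delta p^r$, so $\theta'\in\mathcal{H}^r\subseteq\mathcal{H}^j$. Then
\[
\theta\cdot\Delta\bar q^j=\sum_{i:\,k(i)>j}\theta_i\Delta q^j_i\;\ge\;\sum_{i:\,k(i)>r}\theta_i\Delta q^j_i=\theta'\cdot\Delta q^j\ge\Delta p^j .
\]
The inequality holds because each extra term has $j<k(i)$, hence $\Delta q^j_i\ge 0$, and $\theta_i\ge 0$. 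So $\theta\in\bar{\mathcal{H}}^j$.

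Two smaller repairs are also needed. First, your claim that the outside option sits at an endpoint is asserted rather than derived. The operative reason is that, once every other option has $q\neq 0$, a type $K\mathbf{1}$ with $K$ large strictly prefers both neighbors of an interior outside option to it, which violates quasi-concavity. Second, you only obtain $\Delta p^k\ge 0$, but applying \Cref{cor:regular} to the menus produced in Steps 2 and 3 requires interiority. The paper secures this by deleting all non-outside options with $p\le 0$, which forces $\Delta p^k>0$ for every $k$: if $\Delta p^k=0$, then $0\in\mathcal{H}^k\subseteq\mathcal{H}^2$, contradicting $p^2>0$. Hence $0\notin\bar{\mathcal{H}}^k$.
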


We prove \Cref{prop:finite-screening}.  Fix any searchable menu $(\mathcal{M}, \preceq)$ of finite menu size $m$. By \Cref{thm:main}, without loss of generality, we can assume that $(0, 0) \in \mathcal{M}$ and any other options $(q, p)$ in $\mathcal{M}$ satisfy $q \neq  0$ since  any option $(0, p)$ where $p > 0$ can be removed without affecting searchability. Similarly, it is without loss of generality to assume that $\Delta q^k \neq 0$ for all $k$. Let 
\[\Delta q^k := q^k - q^{k-1} \in [-1, 1]^n\,,\]
and 
\[\Delta p^k := p^k - p^{k-1} \in \R\,.\]
Note that 
\[\mathcal{H}^k = \Big\{\theta: \theta \cdot \Delta q^k \geq \Delta p^k \Big\}\,.\]
By \Cref{cor:payoff}, without loss of generality, we can assume that $\{\mathcal{H}^k\}_{k}$ is a nested family of half-spaces, where $\mathcal{H}^{k} \supseteq \mathcal{H}^{k+1}$ for all $k$. 
We normalize the direction of the indices so that $\Delta p^k > 0$ for at least one $k$. Note that revenue under $\mathcal{M}$ is given by 
\[\text{Rev}(\mathcal{M}) = \sum_{k=2}^m \Delta p^k \mu(\mathcal{H}^k) + p^1\,,\]
where $\mu$ is the type distribution.

\paragraph{Step 1.}\hspace{-2mm}We claim that it is without loss of generality to assume that $\Delta p^k \geq 0$ for all $k$ and $p^k > 0$ for all $k > 1$.  

To prove this, we first argue that if $\Delta p^k < 0$, then $\Delta p^j \leq 0$ for all $j < k$. Suppose for contradiction that there exists some $j < k$ such that $\Delta p^k < 0$ and $\Delta p^j > 0$. Since $\Delta p^k < 0$, we must have type $0 \in \mathcal{H}^k$. Since $\Delta p^j > 0$, we have type $0 \not\in \mathcal{H}^j$, contradicting $0 \in  \mathcal{H}^k \subseteq \mathcal{H}^j$. 

Now, fix the largest index $k$ such that $\Delta p^k < 0$. By our normalization, $k < m$. By definition, for all $j > k$, we have $\Delta p^j \geq 0$. By the above claim, for all $j < k$, we have $\Delta p^j \leq 0$. This implies that $p^1 \geq \dots \geq p^k$, and $p^k \leq \cdots \leq p^m$. Moreover, for quasi-concavity to hold for a sufficiently high type $\theta$, it must be that the outside option $(0, 0)$ is at either end of the indices, i.e., $s = 1$ or $s = m$. 

In either case, consider the menu $\mathcal{N}$ constructed by
\[\mathcal{N} = \Big\{(q, p) \in \mathcal{M}:\, p > 0\Big\} \cup \Big\{(0, 0)\Big\}\,.\]
By \Cref{thm:main}, $\mathcal{N}$ is a searchable menu given that $\mathcal{M}$ is a searchable menu (with the inherited ordering of $\mathcal{M}$). Moreover, $\mathcal{N}$ must generate a weakly higher revenue for the designer. In the first case, where the outside option is at index $s = 1$, we immediately have that $\mathcal{N}$ satisfies $\Delta p^k \geq 0$ for all $k$. In the second case, where the outside option is at index $s = m$, we immediately have that $\Delta p^k \leq 0$ for all $k$. But then indexing the options in the reverse direction gives the desired menu. Observe that, in fact, we then have $\Delta p^k > 0$ for all $k$: indeed, $\Delta p^2 = p^2 > 0$, and if $\Delta p^k = 0$ for some $k \geq 3$, then type $0 \in \mathcal{H}^k \subseteq \mathcal{H}^2$ by the nestedness of $\{\mathcal{H}^k\}_k$, which is impossible since $\Delta p^2 > 0$.

\paragraph{Step 2.}\hspace{-2mm}We now claim that it is without loss of generality to assume that $\Delta q^k \geq 0$ for all $k$. 

To prove this, we first argue that if $\Delta q^k_{i} < 0$ for some $k$ and some $i$, then $\Delta q^j_i \leq 0$ for all $j > k$. Suppose for contradiction that there exist some $j > k$ and some $i$ such that $\Delta q^k_{i} < 0$ and $\Delta q^j_i > 0$. Consider the set $\mathcal{H}^j$. Consider type $\theta$ where $\theta_i = K$ and $\theta_j = 0$ for all $j \neq i$. For $K$ large enough, we have 
\[\theta \cdot \Delta q^j = K \Delta q^j_i > \Delta p^j\,.\]
Thus $\theta \in \mathcal{H}^j$. At the same time, 
\[\theta \cdot \Delta q^k = K \Delta q^k_i < 0\,,\]
and hence $\theta \not \in \mathcal{H}^k$ since $\Delta p^k \geq 0$ (by \textbf{Step 1}). But then this contradicts $\theta \in \mathcal{H}^j \subseteq \mathcal{H}^k$ given that $k < j$. 

Now, for each $i$, consider the first index $k(i)$ such that $\Delta q^{k(i)}_i < 0$. It follows from the above that for all $i$, and all $j \geq k(i)$, $\Delta q^{j}_i \leq 0$. We construct a new searchable menu as follows. For all $j$, and all $i$, let  
\[\Delta \bar{q}^j_i = \begin{cases}
    \Delta q^j_i &\text{ if } j < k(i) \,; \\
    0  &\text{ if } j \geq k(i)\,.
\end{cases}
\]
Let $\Delta \bar{p}^j = \Delta p^j$ for all $j$. First, note that for all $i$, and all $r$, by the above, we must have 
\[0 \leq \sum_{j=2}^r \Delta \bar{q}^j_i \leq \max_{k=2,\dots, m} \sum_{j=2}^k \Delta q^j_i \leq 1\,.\]
Therefore, $\{\Delta \bar{q}^{j}\}_j$ is a feasible sequence of incremental quantities. 

Now, we argue that the resulting $\{\bar{\mathcal{H}}^j\}_j$ is still a nested family. To see this, fix any $j < r$. Consider any $\theta \in \mathcal{\bar{H}}^r$. 
We claim that $\theta \in \mathcal{\bar{H}}^j$. Indeed, since $\theta \in \mathcal{\bar{H}}^r$, we have 
\[\theta \cdot \Delta \bar{q}^r = \sum_{i: k(i) > r} \theta_i \Delta q^r_i  \geq \Delta p^r\,.\]
Let $\theta'$ be the type such that $\theta'_i = \theta_i$ for all $i$ such that $k(i) > r$ and $\theta'_i = 0$ otherwise. By construction, we have 
\[\theta' \cdot \Delta q^r \geq \Delta p^r\,,\]
and hence $\theta' \in \mathcal{H}^r$, which by nestedness of $\{\mathcal{H}^k\}_k$ implies that $\theta' \in \mathcal{H}^j$. Therefore, 
\[\theta' \cdot \Delta q^j \geq \Delta p^j\,,\]
and hence 
\[\sum_{i: k(i) > r} \theta_i \Delta q^j_i \geq \Delta p^j\,.\]
Since $r > j$, we have 
\[\theta \cdot \Delta \bar{q}^j = \sum_{i: k(i) > j} \theta_i \Delta q^j_i \geq \sum_{i: k(i) > r} \theta_i \Delta q^j_i \geq \Delta p^j\,,\]
where the first inequality also uses the fact that $\Delta q^j_i \geq 0$ for all $j < k(i)$. Therefore, $\theta \in \mathcal{\bar{H}}^j$ proving the claim. 

Finally, let $\bar{p}^k = \sum_{j\leq k} \Delta \bar{p}^j$ and $\bar{q}^k_i = \sum_{j\leq k} \Delta \bar{q}^j_i$ for all $i$. Since $\Delta \bar{p}^k = \Delta p^k > 0$ for all $k$ (by \textbf{Step 1}), we have $0 \not\in \bar{\mathcal{H}}^k$ for all $k$, and hence the menu $\mathcal{N}:=\{(\bar{q}^k, \bar{p}^k)\}_k$ is interior. By \Cref{cor:regular}, since the menu $\mathcal{N}:=\{(\bar{q}^k, \bar{p}^k)\}_k$ satisfies the weak incremental nesting condition, it is a searchable menu (with the ordering given by $k$). Moreover, this menu must yield a weakly higher revenue because 
\[\text{Rev}(\mathcal{M}) = \sum_{k=2}^m \Delta p^k \mu(\mathcal{H}^k) \leq \sum_{k=2}^m \Delta p^k \mu(\mathcal{\bar{H}}^k) = \text{Rev}(\mathcal{N})\,,\]
where the inequality uses that $\Delta p^k \geq 0$ for all $k$ (by \textbf{Step 1}), and that 
\[\mathcal{H}^k \subseteq \mathcal{\bar{H}}^k\]
which holds since by construction
\[\theta \cdot \Delta \bar{q}^k \geq  \theta \cdot \Delta q^k\,.\]

\paragraph{Step 3.}\hspace{-2mm}We now construct a particular family of searchable menus, and apply an extreme point argument. By \textbf{Step 1} and \textbf{Step 2}, we fix any searchable menu $(\mathcal{M}, \preceq)$ of finite menu size $m$ where (i) $\Delta p^k \geq 0$ and $\Delta q^k_i \geq 0$ for all $k$ and all $i$ and (ii) $p^k > 0$ for all $k > 1$. Let $\{\mathcal{H}^k\}_{k}$ be the nested family of half-spaces defined before. Recall that $k = 1$ is the outside option, and since $\Delta p^k > 0$ for all $k$ (by \textbf{Step 1}), we have $0 \not\in \mathcal{H}^k$ and hence $\mathcal{H}^k \neq \Theta$ for all $k$. Moreover, without loss of generality, we may remove any option with $\Delta q^k_i = 0$ for all $i$, since it must not be chosen by any types. In our construction, we will hold $\{\mathcal{H}^k\}_{k}$ fixed throughout. 

For any $\lambda: \{2,\dots, m\} \rightarrow \R_+$, define 
\[\Delta \bar{p}^k = \lambda(k) \Delta p^k\, \text{ and }\, \Delta \bar{q}^k = \lambda(k) \Delta q^k \,.\]
Note that for any $\lambda$ such that 
\[\sum_{k=2}^m \lambda(k) \Delta q^k_i \leq 1\,\, \text{ for all $i$}\,,\]
we have that $\Delta \bar{q}^k$ is a feasible sequence of incremental quantities. Indeed, for any $i$ and any $k$, given \textbf{Step 2}, we have 
\[\lambda(k) \Delta q^k_i \geq 0\,,\]
and therefore for all $r = 1, \dots, m$, 
\[0 \leq \sum_{k=2}^r \lambda(k) \Delta q^k_i \leq \sum_{k=2}^m \lambda(k) \Delta q^k_i \leq 1 \,.\]
Moreover, for any such $\lambda$, define
\[\bar{p}^r := \sum_{k=2}^r \Delta \bar{p}^k  \text{ and }\, \bar{q}^r_i = \sum_{k = 2}^r \Delta \bar{q}^k_i \text{ for all $i$}\,.\]
Then note that, by \Cref{cor:regular},  $\bar{\mathcal{M}}_\lambda :=\big\{(\bar{q}^r, \bar{p}^r)\big\}$ is a searchable menu, with the same ordering but skipping the indices $k$ such that $\lambda(k) = 0$. In particular, it admits a nested family of half-spaces as indifference curves: 
\[\{\mathcal{H}^k\}_{k: \lambda(k) > 0} \subseteq \{\mathcal{H}^k\}_{k}\,.\]
Therefore, for any such $\lambda$, the revenue under the corresponding searchable menu can be written as 
\[\text{Rev}(\bar{\mathcal{M}}_\lambda) = \sum_{k=2}^m \lambda(k) \Delta p^k \mu(\mathcal{H}^k)\,.\]
Thus, maximizing revenue among this family of searchable menus parameterized by $\lambda$ is equivalent to 
\[\max_{\lambda: \{2,\dots, m\} \rightarrow \R_+} \sum_{k=2}^m \lambda(k) \Delta p^k \mu(\mathcal{H}^k)\]
subject to 
\[\sum_{k=2}^m \lambda(k) \Delta q^k_i \leq 1\,\, \text{ for all $i=1,\dots, n$}\,.\]
Consider the polytope defined by 
\[\Lambda := \Big\{\lambda \in \R^{m-1}_+: \lambda \cdot \Delta q_i \leq 1 \text{ for all $i = 1, \dots, n$}\Big\}\,,\]
which by construction is a compact, convex set (it is bounded since for any $k$, $\Delta q^k_i > 0$ for some $i$). Note that every $\lambda \in \text{ext}(\Lambda)$ must satisfy $\lambda(k) > 0$ for at most $\text{rank}(\{\Delta q^k\}_k) \leq n$ indices given that there are $n$ linear constraints. By Bauer's maximum principle, it follows immediately that there exists some searchable menu $\bar{\mathcal{M}}_{\lambda^{*}}$ with at most $n$ priced tiers (where $\lambda^*(k) > 0$) such that 
\[\text{Rev}(\bar{\mathcal{M}}_\lambda)\leq \text{Rev}(\bar{\mathcal{M}}_{\lambda^{*}})\]
for all $\lambda \in \Lambda$, proving the result. 

\subsubsection{Proof for General Menus}

We now extend the previous argument to arbitrary searchable menus. Let $(\mathcal{M}, \preceq)$ be a searchable menu. We make a few observations. 

First, by \Cref{thm:main}, if $\mathcal{N} \subseteq \mathcal{M}$, then $(\mathcal{N}, \preceq)$ is also a searchable menu. 

Second, the indirect utility function of the agent,
\[U(\theta) := \max_{(q, p) \in \mathcal{M}} q\cdot \theta - p,\]
is a continuous and convex function. Therefore, by a standard argument, there exists a countable collection $\mathcal{M}^{c} \subseteq  \mathcal{M}$ such that 
\[U(\theta) = \sup_{(q, p) \in \mathcal{M}^c} q\cdot \theta - p\,.\]

Third, fix any ordering over the options in $\mathcal{M}^c$ and take a sequence of finite menus $\mathcal{M}^{(m)}$ defined by the collection of first $m$ options in $\mathcal{M}^c$. Let 
\[U^{(m)}(\theta) = \max_{(q, p) \in \mathcal{M}^{(m)}} q\cdot \theta - p\,.\]
It follows immediately that $U^{(m)}$ is a monotone sequence of convex functions converging pointwise to $U$ as $m \rightarrow \infty$. Therefore, $U^{(m)}$ converges uniformly to $U$ on any compact set $Z \subseteq \R^n_+$ by Dini's theorem. By Attouch's theorem, this implies that almost everywhere on $\R^n_+$ where $U$ is differentiable, we have that $\nabla U^{(m)}(\theta)$ (with any measurable selection $x^{(m)}(\theta)$) must converge to $\nabla U(\theta)$ as $m \rightarrow \infty$. 

Let 
\[p^{(m)}(\theta) :=  \nabla U^{(m)}(\theta) \cdot \theta - U^{(m)}(\theta) \,.\]
Then $p^{(m)}$ converges to $p = \nabla U \cdot \theta - U$ almost everywhere. It follows that 
\[\int p^{(m)}(\theta) \d \mu(\theta) \rightarrow \int p(\theta) \d \mu(\theta)\,,\]
by the dominated convergence theorem given that $\mu$ has finite first moment (since $p(\theta) \leq \sum_i \theta_i$). Thus, for any searchable menu and  for any $\varepsilon > 0$, there exists a finite searchable menu that yields at most $\varepsilon$ difference in revenue compared to the original searchable menu. Since this holds for all $\varepsilon > 0$, we have 
\begin{align*}
    \sup_{\text{searchable menus } \mathcal{M}} \text{Rev}(\mathcal{M}) &= \sup_{\text{searchable finite menus } \mathcal{M}} \text{Rev}(\mathcal{M}) \\
    &= \sup_{\text{searchable upgrade menus with $n$ tiers } \mathcal{M}} \text{Rev}(\mathcal{M})\,,
\end{align*}
where the last equality is due to \Cref{prop:finite-screening}. Note that the last supremum is attainable since the objective is bounded and continuous. Moreover, given finite first moment, by dominated convergence, it also follows that the $n$ prices can be restricted to lie in a large enough box, and hence the feasible set is compact. Thus a solution exists, completing the proof. 

\subsubsection{Completion of the Proof}

To complete the proof of \Cref{thm:multiplegood}, it suffices to show that for any optimal upgrade menu with menu size $m \leq n+1$, the highest tier must be the grand bundle. Suppose for contradiction that $(\mathcal{M}, \preceq)$ is an optimal upgrade menu with menu size $m$ where the highest tier is not the grand bundle. This implies that for some good $i$, we have 
\[\sum_{k=2}^{m} \Delta q^k_i < 1\,.\]
Consider the perturbed menu $(\mathcal{N}, \preceq)$, where we keep all the upgrades except that we modify the first upgrade as follows: For some sufficiently small $\varepsilon > 0$, 
\[\Delta \bar{q}^2_j =\begin{cases}
\Delta q^2_j + \varepsilon &\text{ if $j = i$ }\\
\Delta q^2_j &\text{ otherwise}\,.\\     
\end{cases}\]
and $\Delta \bar{p}^2 = \Delta p^2$. This is a feasible modification for small enough $\varepsilon > 0$ since $\sum_{k=2}^{m} \Delta q^k_i < 1$. Note that this must result in a nested family of half-spaces since $\mathcal{\bar{H}}^2 \supset \mathcal{H}^2$ by construction. Therefore, this is a searchable menu and it yields a revenue 
\[\text{Rev}(\mathcal{N}) = \sum_{k=2}^{m} \Delta p^k \mu(\bar{\mathcal{H}}^k) >  \sum_{k=2}^{m} \Delta p^k \mu(\mathcal{H}^k) =\text{Rev}(\mathcal{M})\,,\]
where the strict inequality also uses that $\mu$ has full support. This contradicts the optimality of $\mathcal{M}$, concluding the proof. 

\subsection{Proof of \texorpdfstring{\Cref{prop:undominated}}{}}

Fix any upgrade menu identified in \Cref{thm:multiplegood}. Without loss of generality, we can write the options as $\{(q^1, p^1), \dots, (q^m, p^m)\}$ where $0 = q^1 \leq q^2 \leq \cdots \leq q^m$ and $0 = p^1 < p^2 \leq \cdots \leq p^m$. 

Fix any $\overline{\theta} \in \R_+$ large enough such that 
\[\overline{\theta} (1 - q^k_i) > p^m - p^k\]
for all $i$ and all $k < m$ where $q^k_i < 1$. Since the upgrade menu is finite, such $\overline{\theta}$ always exists. 

Now, consider the type space $\overline{\Theta}:= [0, \overline{\theta}]^n$. We claim that the  mechanism induced by the upgrade menu must be an undominated mechanism on $\overline{\Theta}$ in the sense of \citet{Manelli2007}. To prove this, we apply Lemma 10 of \citet{lahr2025extreme} (which states that every IC and IR mechanism that excludes the lowest type and has no marginal distortion at the top in each dimension is undominated). 

First, we show that the lowest type is excluded. Indeed, since $p^k > 0$ for all $k > 1$, type $0$ strictly prefers the outside option $(0, 0)$ over all the remaining options in the menu. Moreover, since we have an upgrade menu, the set of types who strictly prefer the outside option is given by 
\[\big\{\theta \in \overline{\Theta}: \theta \cdot q^2 < p^2 \big\} \supseteq \big\{\theta \in \overline{\Theta}: \theta \cdot \mathbf{1} < p^2 \big\}\]
which includes a positive Lebesgue-measure set of types in $\overline{\Theta}$.

Second, we show that there is no marginal distortion at the top for each dimension $i$ in the sense of \citet{lahr2025extreme}. Indeed, consider any dimension $i$ and type $\theta$ with $\theta_i = \overline{\theta}$. We claim that every optimal choice $k \in \{1, \dots, m\}$ for type $\theta$ must satisfy $q^k_i = 1$. Recall that the upgrade menu has the grand bundle as the highest tier (i.e., $q^m_j = 1$ for all goods $j$).  Now, fix any tier $k < m$ with $q^k_i < 1$. Consider the payoff difference for type $\theta$ when comparing tier $m$ and tier $k$: 
\begin{align*}
\theta \cdot (q^m - q^k) - (p^m - p^k) &= \theta \cdot (\mathbf{1} - q^k) - (p^m - p^k)    \\
&\geq \overline{\theta} (1 - q^k_i) - (p^m-p^k) > 0\,,
\end{align*}
where the strict inequality is by construction. This proves that under the given upgrade menu, regardless of the selection of the optimal choices by different types, the induced mechanism must have no marginal distortion at the top for each dimension $i$. By Lemma 10 of \citet{lahr2025extreme}, any such induced mechanism is therefore undominated. 

\Cref{prop:undominated} then follows by Lemma 7 of \citet{lahr2025extreme}. 

\subsection{Example for \Cref{subsec:multiple-good}}\label{app_example}

Suppose that there are two goods and a full-support distribution approximating the following discrete type distribution: (i) type $(1, 0)$ with probability $1/2$, and (ii) type $(0, 1/2)$ with probability $1/2$. We claim that any optimal searchable menu must involve a lottery, and hence consists of two tiers. Indeed, if not, then it must be a pure bundling mechanism. Note that the distribution of values for the bundle is constructed to be supported on $\{1/2, 1\}$ with equal probabilities. Therefore, the revenue from pure bundling is $0.5$. Now, consider the following menu 
\[\mathcal{M} = \big\{(0, 0, 0), \,\, (0.25, 0.5, 0.2),\,\, (1, 1, 0.9) \big\}\,.\]
It can be verified that this menu is searchable on the type space $\R^2_+$. Note that type $(1, 0)$ would buy the grand bundle and get payoff $0.1$ (buying the first option gives only payoff $0.25 - 0.2 = 0.05$); type $(0, 1/2)$ would buy the first option and get payoff $0.25 - 0.2 = 0.05$ (buying the bundle gives a negative payoff). Therefore, the total revenue would be 
\[\frac{1}{2} \times 0.9 + \frac{1}{2} \times 0.2 = 0.45 + 0.1 = 0.55 > 0.5\,,\]
strictly higher than the pure bundling revenue. For a full-support continuous type distribution approximating this discrete type distribution in the weak-$^\star$ topology, the same comparison between the two mechanisms would hold. Thus, for such distributions, the optimal searchable menu must include two tiers.

\subsection{Proof of \texorpdfstring{\Cref{cor:fosd}}{}}
By  \Cref{thm:multiplegood}, an upgrade menu is optimal for the seller.  Observe that, for any optimal upgrade menu, type $\theta$ must choose a bundle that is weakly higher than the bundle chosen by any type $\theta'\leq \theta$ (by single crossing). This means that the revenue from type $\theta$ is weakly higher than the revenue from any coordinate-wise lower type (since, without loss of generality, prices are increasing in the index of the bundle within an optimal upgrade menu). An FOSD shift in the distribution can be represented as a coupling under which the higher distribution results in a coordinate-wise higher type almost surely. The conclusion follows. 

\subsection{Proof of \texorpdfstring{\Cref{thm:ordeal}}{}}

The proof follows a similar structure to the proof of \Cref{thm:multiplegood}. The optimality of a posted-requirement menu would continue to hold even for more general objectives as long as the objective is a linear functional of the mechanism.

\subsubsection{Proof for Finite Menus}

\begin{prop}\label{prop:finite}
For any searchable menu $\mathcal{M}$ of finite menu size, there exists a posted-requirement menu $\mathcal{N}$ such that 
\[\emph{$\alpha$-Welfare}(\mathcal{M}) \leq \emph{$\alpha$-Welfare}(\mathcal{N})\,.\]
\end{prop}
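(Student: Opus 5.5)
We will mirror the three-step structure of the proof of \Cref{prop:finite-screening}. Write the allocation as $x=(q,z)$ with $z=-y\in\R^n_-$, so each type $(v,c)$ has linear utility $vq+c\cdot z-p$, which is nondecreasing in $(q,z)$. Fix a finite searchable menu $\{(q^k,y^k,p^k)\}_{k=1}^m$. Define the increments $\Delta q^k,\Delta y^k,\Delta p^k$ and the half-spaces
\[
\mathcal H^k=\{(v,c): v\Delta q^k - c\cdot\Delta y^k \ge \Delta p^k\}.
\]
By \Cref{cor:payoff}, after discarding options chosen by a null set, we may take $\{\mathcal H^k\}$ to be nested. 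The objective is linear in the mechanism. Because $u$ is affine, a type's chosen option is $k$ exactly when it lies in $\mathcal H^k\setminus\mathcal H^{k+1}$, so $\alpha$-Welfare can be written as
\[
\sum_k \Big[\alpha\,\E\big[w\,(v\Delta q^k - c\cdot\Delta y^k-\Delta p^k)\mathbf 1_{\mathcal H^k}\big]+(1-\alpha)\,\E\big[(\Delta p^k - C\Delta q^k)\mathbf 1_{\mathcal H^k}\big]\Big]
\]
plus the outside-option term, which is zero.

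\textbf{Step 1.} Normalize the order so that the outside option $(0,0,0)$ sits at index $1$. I will argue this is without loss. Types with $v=0$ and large $c$, and types with large $v$, force quasi-concavity constraints that put the outside option at an endpoint, as in Step 1 of the proof of \Cref{thm:multiplegood}. Options with $q=0$ and $(y,p)\neq 0$ are dominated by the outside option for a.e.\ type, so they can be removed.

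\textbf{Step 2.} Adapt Step 2 of \Cref{thm:multiplegood} to reduce to menus with $\Delta q^k\ge 0$ and $\Delta(y,p)^k\ge 0$ componentwise. The key observation is the one used there. Using extreme types (large $v$ alone, or large $c_i$ alone), nestedness forces each coordinate's increments to change sign at most once. This lets us truncate non-monotone increments. The objective is no longer pure revenue, so I will not claim the truncation is an improvement. Instead I will only use it to show that, after possibly dropping options, the cone generated by the increments lies in a single orthant, so that the weights in Step 3 are well defined.

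\textbf{Step 3.} Fix the nested half-spaces and scale each increment $(\Delta q^k,\Delta y^k,\Delta p^k)$ by a weight $\lambda(k)\ge0$. This preserves $\mathcal H^k$, so by \Cref{cor:regular} every $\lambda$ yields a searchable menu. The objective above is linear in $\lambda$, because each bracketed term is homogeneous of degree one in the increment for fixed $\mathcal H^k$. The only constraint is $\sum_k\lambda(k)\Delta q^k\le 1$. There is no constraint on $y$ or $p$, apart from $p\ge0$, which holds automatically once increments are nonnegative.

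This is a linear program with one constraint. If $\Delta q^k=0$ and a positive coefficient appears, the LP is unbounded. I would rule this out: such increments have $\mathcal H^k$ reachable only by types with $c\cdot\Delta y\le -\Delta p$, so they are chosen by a null set and can be dropped. Hence an optimal extreme point has at most one positive weight. Taking that $\lambda(k^*)=1/\Delta q^{k^*}$ yields the single option $(1,p^\star,y^\star)$. If all coefficients are nonpositive, the outside-only menu is optimal, which is the degenerate posted requirement.

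The main obstacle is Step 2. With welfare objectives and ordeal coordinates of the ``wrong'' sign, monotonizing increments need not raise the objective. I would try to avoid needing monotonicity altogether. The alternative is to run the scaling argument directly on the original nested menu: the feasibility constraint $0\le\sum_{j\le r}\lambda(j)\Delta q^j\le1$ for all $r$, together with $\sum_{j\le r}\lambda(j)\Delta y^j\ge0$ and $\sum_{j\le r}\lambda(j)\Delta p^j\ge 0$, is a family of linear constraints. If it is too rich, I would then show that the binding ones are only the final-$q$ capacity, using the sign-change structure from nestedness.
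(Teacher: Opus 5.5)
Your Step 3 is the paper's argument, but Step 2 is left as an unresolved obstacle, and this is exactly where the proof needs an idea. The missing observation is that no truncation is needed. The feasibility constraints $p\ge 0$ and $y\in\R^n_+$ hold at \emph{every} option, and the outside option sits at index $1$ with $(q,y,p)=0$. Together with nestedness, these already force $\Delta p^k\ge 0$ and $\Delta y^k\ge 0$ for all $k$.

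The sign-change property you mention runs in a specific direction for coordinates that enter utility negatively:
\begin{itemize}
\item If $\Delta p^k<0$, then type $0$ lies in $\mathcal H^k\subseteq\mathcal H^j$ for all $j<k$. Hence $\Delta p^j\le 0$ for all $j\le k$, and so $p^k<p^1=0$, which is infeasible.
\item Likewise, if $\Delta y^k_i<0$, the type with $v=0$ and only $c_i>0$, taken large, lies in $\mathcal H^k$. It therefore lies in every $\mathcal H^j$ with $j<k$, which forces $\Delta y^j_i\le 0$ and hence $y^k_i<0$.
\end{itemize}
So the ``wrong-sign'' ordeal and payment increments you worry about cannot exist in any feasible menu. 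Monotonization therefore never has to be justified by an improvement in the $\alpha$-Welfare objective.

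Monotonicity in $q$ then comes for free. Each retained $\mathcal H^k$ may be taken to have positive measure, since options beyond a null $\mathcal H^k$ are chosen only by a null set. So $\mathcal H^k$ contains a type with $v>0$, and $v\Delta q^k\ge c\cdot\Delta y^k+\Delta p^k\ge 0$.

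Neither of your proposed workarounds closes the gap as written. ``Dropping options'' to land in a single orthant changes the outcome for a positive mass of types unless the dropped options are chosen only by a null set. So it neither preserves nor bounds $\alpha$-Welfare.

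Running the scaling on the original nested menu would impose, for every $r$, the cumulative constraints $0\le\sum_{j\le r}\lambda(j)\Delta q^j\le 1$, $\sum_{j\le r}\lambda(j)\Delta y^j\ge 0$, and $\sum_{j\le r}\lambda(j)\Delta p^j\ge 0$. That is an LP with roughly $(n+3)(m-1)$ constraints, whose extreme points can have many positive weights. The single-option conclusion relies precisely on all increments being nonnegative, so that these constraints collapse to the single capacity constraint $\sum_k\lambda(k)\Delta q^k\le 1$.

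Once you supply the nonnegativity argument above, the rest of your proposal goes through as in the paper. That includes dropping increments with $\Delta q^k=0$ as null-chosen, the extreme-point argument with one constraint, and setting $\lambda(k^\star)=1/\Delta q^{k^\star}$.
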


We prove \Cref{prop:finite}. Fix any searchable menu $(\mathcal{M}, \preceq)$ of finite menu size $m$. By \Cref{thm:main}, without loss of generality, we may remove any option with $q = 0$ and either $p > 0$ or $y_i > 0$ for some $i$---these options will not be chosen by any type. 

As in the proof of \Cref{thm:multiplegood}, we use the notation 
\[\Delta q^k := q^k - q^{k-1} \in [-1, 1]\,,\]
and 
\[\Delta p^k := p^k - p^{k-1} \in \R\,,\]
and 
\[\Delta y^k := y^k - y^{k-1} \in \R^n\,.\]
Let 
\[\mathcal{H}^k := \Big\{(v, c) : v \Delta q^k - c \cdot \Delta y^k \geq \Delta p^k \Big\}\,.\]
By \Cref{cor:payoff}, without loss of generality, we can assume that $\{\mathcal{H}^k\}_{k}$ is a nested family of non-empty half-spaces, where $\mathcal{H}^{k} \supset \mathcal{H}^{k+1}$ for all $k$. Moreover, without loss of generality, we may assume that each $\mathcal{H}^k$ has a positive measure. Moreover, consider a type $(v, 0)$ for sufficiently high $v$. In order for their choice problem to be quasi-concave, it must be that the outside option is at either end of the index set $k = 1$ or $k = m$. We normalize the direction so that $k = 1$ is the outside option. 

Note that the designer's payoff under $\mathcal{M}$ is given by 
\[\text{$\alpha$-Welfare}(\mathcal{M}) = \alpha \cdot \sum_{k=2}^m\E\Big[w \cdot \big(v \Delta q^k - \Delta p^k - c \cdot \Delta y^k\big) \mid (v, c) \in \mathcal{H}^k\Big] \mu(\mathcal{H}^k) + (1-\alpha) \cdot \sum_{k=2}^m (\Delta p^k-C \Delta q^k) \mu(\mathcal{H}^k)\,,\]
where $\mu$ is the type distribution.

\paragraph{Step 1.}\hspace{-2mm}We first argue that $\Delta p^k \geq 0$ for all $k$ and $\Delta y^k_i \geq 0$ for all $k$ and all $i$. To see $\Delta p^k \geq 0$ for all $k$, suppose for contradiction that $\Delta p^k < 0$ for some $k$. We claim that $\Delta p^j \leq 0$ for all $j < k$. Indeed, if $\Delta p^j > 0$ for some $j < k$, then we must have that the type $0 \in \R^{n+1}_+$ does not belong to $\mathcal{H}^j$ but belongs to $\mathcal{H}^k$ since $\Delta p^k < 0$, which is impossible since $\mathcal{H}^k \subseteq \mathcal{H}^j$. It then follows that $\Delta p^j \leq 0$ for all $j < k$, and hence $p^k = p^1 + \sum_{j=2}^k \Delta p^j < p^1 = 0$, contradicting that any payment $p$ is non-negative.  

Now, we prove that $\Delta y^k_i \geq 0$ for all $k$ and all $i$. Fix any ordeal $i$. Suppose for contradiction that $\Delta y^k_i < 0$ for some $k$. We claim that $\Delta y^j_i \leq 0$ for all $j < k$. Indeed, if $\Delta y^j_i > 0$ for some $j < k$, then considering the type $(0, c_i, 0)$ with sufficiently large $c_i$, we have $(0, c_i, 0)\not\in \mathcal{H}^j$ since $\Delta y^j_i > 0$, but $(0, c_i, 0) \in \mathcal{H}^k$ since $\Delta y^k_i < 0$, which is impossible since $\mathcal{H}^k \subseteq \mathcal{H}^j$.  It then follows that $\Delta y^j_i \leq 0$ for all $j < k$. Then, note that $y^k_i = y^1_i + \sum_{j=2}^k \Delta y^j_i < y^1_i = 0$, where the strict inequality also uses $\Delta y^k_i < 0$, contradicting that any ordeal requirement $y_i$ must be non-negative.

\paragraph{Step 2.}\hspace{-2mm}We now argue that $\Delta q^k \geq 0$ for all $k$. Since each $\mathcal{H}^k$ has a positive measure, for every $k$, there exist some $v \in \R_{++}$ and $c \in \R^n_{++}$ such that 
\[v \Delta q^k \geq c \cdot \Delta y^k + \Delta p^k \geq 0\,,\]
where the second inequality follows from \textbf{Step 1}. It follows that $\Delta q^k \geq 0$ for all $k$, as claimed. 

\paragraph{Step 3.}\hspace{-2mm}We now construct a particular family of searchable menus, and apply an extreme point argument. By \textbf{Step 1} and \textbf{Step 2}, we fix any searchable menu $(\mathcal{M}, \preceq)$ of finite size $m$ where $\Delta y^k \geq 0$, $\Delta p^k \geq 0$, and $\Delta q^k \geq 0$ for all $k$.  Let $\{\mathcal{H}^k\}_{k}$ be the nested family of half-spaces defined before and let $\{\mathcal{G}^k\}_{k}$ be the nested family of strict incremental better-off sets.  Without loss of generality, we may remove any option with $\Delta q^k = 0$ since any such option must not be chosen by any type. In our construction, we will hold $\{\mathcal{H}^k\}_{k}$ and $\{\mathcal{G}^k\}_{k}$ fixed throughout. 

For any $\lambda: \{2,\dots, m\} \rightarrow \R_+$, define 
\[\Delta \bar{y}^k_i = \lambda(k) \Delta y^k_i \text{ for all $i$}\,,\,\, \Delta \bar{p}^k = \lambda(k) \Delta p^k\,,\,\, \text{ and }\, \Delta \bar{q}^k = \lambda(k) \Delta q^k \,.\]
Note that for any $\lambda$ such that 
\[\sum_{k=2}^m \lambda(k) \Delta q^k \leq 1\,,\]
we have that $\Delta \bar{q}^k$ is a feasible sequence of incremental quantities. Indeed, for any $k$, given \textbf{Step 2}, we have 
\[\lambda(k) \Delta q^k \geq 0\,,\]
and therefore for all $r = 2, \dots, m$, 
\[0 \leq \sum_{k=2}^r \lambda(k) \Delta q^k \leq \sum_{k=2}^m \lambda(k) \Delta q^k \leq 1 \,.\]
Moreover, for any such $\lambda$, define
\[\bar{y}^r_i := \sum_{k=2}^r \Delta \bar{y}^k_i \text{ for all $i$}\,,\, \bar{p}^r := \sum_{k=2}^r \Delta \bar{p}^k\,,\,\,  \text{ and }\, \bar{q}^r := \sum_{k = 2}^r \Delta \bar{q}^k\,.\]
By \textbf{Step 1}, $\bar{y}^r \geq 0$ and $\bar{p}^r \geq 0$ for all $r$. Thus, skipping the indices $k$ such that $\lambda(k) = 0$, note that by \Cref{thm:main}, $\bar{\mathcal{M}}_\lambda :=\big\{(\bar{q}^r,\bar{p}^r, \bar{y}^r)\big\}$ is a searchable menu (under the same ordering) with a nested family of better-off sets: 
\[\{\mathcal{H}^k\}_{k: \lambda(k) > 0} \subseteq \{\mathcal{H}^k\}_{k}\,.\]
Therefore, for any such $\lambda$, the objective under the corresponding searchable menu can be written as 
\begin{align*}
\text{$\alpha$-Welfare}(\bar{\mathcal{M}}_\lambda) = \alpha \cdot \sum_{k=2}^m\E\Big[w \cdot \big(v \lambda(k) \Delta q^k - \lambda(k) \Delta p^k - \lambda(k) c \cdot \Delta y^k\big) \mid (v, c) \in \mathcal{H}^k\Big] \mu(\mathcal{H}^k) \\
+ (1-\alpha) \cdot \sum_{k=2}^m (\lambda(k) \Delta p^k-C \lambda(k) \Delta q^k) \mu(\mathcal{H}^k)\,,   
\end{align*}
which is a linear functional of $\lambda$. Let $W_\alpha(\lambda)$ denote this linear functional. Then, maximizing the objective among the family of searchable menus parameterized by $\lambda$ is equivalent to 
\[\max_{\lambda: \{2,\dots, m\} \rightarrow \R_+} W_\alpha(\lambda)\]
subject to 
\[\sum_{k=2}^m \lambda(k) \Delta q^k \leq 1\,.\]
Consider the polytope defined by 
\[\Lambda := \Big\{\lambda \in \R^{m-1}_+: \lambda \cdot \Delta q \leq 1 \Big\}\,,\]
which is a compact, convex set (it is bounded because $\Delta q^k > 0$ for all $k$). Note that every $\lambda \in \text{ext}(\Lambda)$ must satisfy $\lambda(k) > 0$ for at most $\text{rank}(\{\Delta q^k\}_k) = 1$ index given that there is one linear constraint. By Bauer's maximum principle, it follows immediately that there exists some searchable menu $\bar{\mathcal{M}}_{\lambda^{*}}$ with one non-trivial option (where $\lambda^*(k) > 0$) such that 
\[\text{$\alpha$-Welfare}(\bar{\mathcal{M}}_\lambda)\leq \text{$\alpha$-Welfare}(\bar{\mathcal{M}}_{\lambda^{*}})\,\]
for all $\lambda \in \Lambda$.  Moreover, as long as the optimal value is strictly positive, the linear constraint must be binding, because otherwise we can scale up $\lambda$ to strictly increase the objective. It follows that the non-trivial option must involve getting the good for sure, and hence $\bar{\mathcal{M}}_{\lambda^{*}}$ is a posted-requirement menu, proving the result. 

\subsubsection{Proof for General Menus}

The conclusion for general menus follows from the same argument as in the proof of \Cref{thm:multiplegood}. By the same approximation argument, for any searchable menu, we can extract a sequence of finite searchable menus converging to it, in the sense that, treated as outcome functions, 
\[(q^{(m)},p^{(m)},y^{(m)}) \rightarrow (q, p, y)\]
almost everywhere on $\R^{n+1}_+$. By the dominated convergence theorem, the objective is continuous with respect to this converging sequence. The rest of the argument is identical.  

\subsubsection{Completion of the Proof}

To complete the proof, we show that if the designer cares only about the profit, then costly screening is not involved, and hence the posted-requirement menu is a posted-price menu. To see this, fix any optimal posted-requirement menu that involves costly screening
\[\Big\{(0, 0, 0)\,,\,\, (1, p, y)\Big\}\,.\]
Simply consider the following modification: 
\[\Big\{(0, 0, 0)\,,\,\, (1, p, 0)\Big\}\,.\]
Note that any type who was choosing the option $(1, p, y)$ in the original menu must choose option $(1, p, 0)$ in the new menu, and moreover, a positive measure of new types would choose option $(1, p, 0)$ in the new menu. Since the type distribution has full support, any posted price strictly above $C$ yields a strictly positive profit, so the optimal profit is strictly positive; since the profit of the original menu is $(p - C) \cdot \mu\big(v - p - c \cdot y \geq 0\big)$, it follows that $p > C$. Thus, each new buyer contributes $p - C > 0$, and the modification strictly improves the expected profit---a contradiction. 

\subsection{Proof of \Cref{prop:concave}}\label{app_prop:concave}

Fix $k\in\{2,\ldots,m-1\}$. Since every utility function $u(x)=\alpha\cdot x$ with $\alpha\in\R_+^n$ belongs to $\Ulin$, the weak incremental nesting condition on $\Ulin$ implies that
\begin{equation}\label{eq:weakn}
\alpha\cdot (x^{k+1}-x^k)\ge 0
\quad\Longrightarrow\quad
\alpha\cdot (x^k-x^{k-1})\ge 0
\qquad\text{for every }\alpha\in\R_+^n\,,
\end{equation}
while the strict incremental nesting condition on $\Ulin$ implies that
\begin{equation}\label{eq:strongn}
\alpha\cdot (x^{k+1}-x^k)>0
\quad\Longrightarrow\quad
\alpha\cdot (x^k-x^{k-1})>0
\qquad\text{for every }\alpha\in\R_+^n\,.
\end{equation}
Condition \eqref{eq:weakn} says that
$x^k-x^{k-1}$ belongs to the dual cone of
\[
\{\alpha\in\R_+^n:\alpha\cdot (x^{k+1}-x^k)\ge 0\}\,.
\]
The standard polar calculus for polyhedral cones (see Corollary 16.4.2 in \citealp{rockafellar1970convex}) yields
\[
\left(\mathbb R^n_+ \cap \{\alpha:\alpha\cdot (x^{k+1}-x^k)\ge 0\}\right)^*
=
\mathbb R^n_+ + \operatorname{cone}\{x^{k+1}-x^k\}\,,
\]
where $^*$ indicates the dual cone. 
Therefore, condition \eqref{eq:weakn} is equivalent to the existence of
$r\in\mathbb R^n_+$ and $\lambda\ge 0$ such that
\[
x^k-x^{k-1}=r+\lambda(x^{k+1}-x^k)\,.
\]
Equivalently,
\begin{equation}\label{eq:ineqw}
x^k-x^{k-1}\ge \lambda(x^{k+1}-x^k)\,.
\end{equation}
Let
\[
t:=\frac{1}{1+\lambda}\in(0,1]\,.
\]
Rearranging inequality \eqref{eq:ineqw}, we get
\begin{equation}\label{eq:ineqa}
x^k\ge t x^{k-1}+(1-t)x^{k+1}\,.
\end{equation}
Now take any $u\in\Uconc$. We first prove that the weak incremental nesting condition holds for $u$. Suppose that
\[
u(x^{k+1})\ge u(x^k)\,.
\]
By coordinate-wise monotonicity of $u$,  inequality \eqref{eq:ineqa}, and concavity,
\[
u(x^k)\ge u \bigl(t x^{k-1}+(1-t)x^{k+1}\bigr)
\ge
 t u(x^{k-1})+(1-t)u(x^{k+1}) \,.
\]
Thus, we must have
\[
u(x^k)\ge u(x^{k-1})\,,
\]
which verifies that weak incremental nesting for $\Ulin$ implies weak incremental nesting for $\Uconc$.

We now prove that the strict incremental nesting condition holds for $u$. Suppose that
\[
u(x^{k+1})>u(x^k)\,.
\]
If $x^{k+1}-x^k\le 0$ coordinate-wise,  monotonicity would imply $u(x^{k+1})\le u(x^k)$,
contradicting the supposition. Hence there exists at least one coordinate $i$ with
$x_i^{k+1}-x_i^k>0$.

Condition \eqref{eq:strongn} (with the vector $\alpha$ taken to be a unit basis vector) implies
\[
x_i^{k+1}-x_i^k>0\quad\Longrightarrow\quad x_i^k-x_i^{k-1}>0
\qquad\text{for every coordinate }i\,.
\]
Define
\[
\bar\lambda:=\min_{i:\,x_i^{k+1}-x_i^k>0}\quad 
\frac{x_i^k-x_i^{k-1}}{x_i^{k+1}-x_i^k}\,.
\]
By the preceding observations, $\bar\lambda$ is well-defined and strictly positive. We claim that
\begin{equation}\label{equation10}
x^k-x^{k-1}\ge \bar\lambda(x^{k+1}-x^k).
\end{equation}
Indeed, for coordinates $i$ such that  $x_i^{k+1}-x_i^k>0$,  inequality \eqref{equation10} follows from the definition of
$\bar\lambda$. If $x_i^{k+1}-x_i^k=0$, then inequality \eqref{eq:ineqw} implies
$x_i^k-x_i^{k-1}\ge 0=\bar\lambda(x_i^{k+1}-x_i^k)$. Finally, if
$x_i^{k+1}-x_i^k<0$, then inequality \eqref{eq:ineqw} implies
\[
x_i^k-x_i^{k-1}\ge \lambda(x_i^{k+1}-x_i^k)\,.
\]
For every coordinate $j$ with $x_j^{k+1}-x_j^k>0$, inequality \eqref{eq:ineqw} implies
\[
\frac{x_j^k-x_j^{k-1}}{x_j^{k+1}-x_j^k}\ge\lambda\,.
\]
Hence $\bar\lambda\ge\lambda$. Multiplying by
$x_i^{k+1}-x_i^k<0$ gives
\[
\bar\lambda(x_i^{k+1}-x_i^k)
\le \lambda(x_i^{k+1}-x_i^k)
\le x_i^k-x_i^{k-1}\,,
\]
which verifies inequality \eqref{equation10}.

Let
\[
\bar t:=\frac{1}{1+\bar\lambda}.
\]
Since $\bar\lambda>0$, we have $\bar t\in(0,1)$. Rearranging \eqref{equation10},
\[
x^k\ge \bar t x^{k-1}+(1-\bar t)x^{k+1}\,.
\]
This, together with the monotonicity and concavity of $u$, implies that
\[
u(x^k)
\ge u\bigl(\bar t x^{k-1}+(1-\bar t)x^{k+1}\bigr)
\ge \bar t u(x^{k-1})+(1-\bar t)u(x^{k+1})
> \bar t u(x^{k-1})+(1-\bar t)u(x^k)\,,
\]
where the strict inequality uses $1-\bar t>0$. Since $\bar t>0$, this yields
\[
u(x^k)>u(x^{k-1})\,.
\]
This shows that strict incremental nesting for  $\Ulin$ implies strict incremental nesting for $\Uconc$.

We have shown that incremental nesting for $\Ulin$ implies
incremental nesting for $\Uconc$. The converse is immediate since $\Ulin\subseteq\Uconc$.

\subsection{Proof of \Cref{thm:progressive}}

\paragraph{Progressivity $\implies$ Searchability.}\hspace{-2mm}Suppose that $T(\,\cdot\,)$ is convex. To prove searchability, by \Cref{prop:concave}, it suffices to show that the menu induced by $T(\,\cdot\,)$ and ordered by earnings is searchable for any utility function $c-\alpha y$ with $\alpha\geq 0$. Note that $y - T(y)-\alpha y$ is concave in $y$. Every concave function on $[0, \overline{y}]$ is semi-strictly quasi-concave in $y$. The claim follows immediately by \Cref{lem:char}. 

\paragraph{Searchability $\implies$ Progressivity.}\hspace{-2mm}Now, suppose that $(\mathcal{M}, \preceq)$ is a searchable menu, where $\mathcal{M} = \{(y, T(y))\}$ specifies the taxes $T(y)$ for all earnings $y \in [0, \overline{y}]$. Suppose for contradiction that $T(\,\cdot\,)$ is not convex. Then, there exist $y_1 < y_2 < y_3$ such that 
\[T(y_2) > \frac{y_3 - y_2}{y_3 - y_1} T(y_1) + \frac{y_2 - y_1}{y_3 - y_1} T(y_3)\,.\]
Define the secant slopes
\[
s_{12}:=\frac{T(y_2)-T(y_1)}{y_2-y_1},
\qquad
s_{23}:=\frac{T(y_3)-T(y_2)}{y_3-y_2},
\qquad
s_{13}:=\frac{T(y_3)-T(y_1)}{y_3-y_1}.
\]
The non-convexity inequality is equivalent to
$s_{12}>s_{13}>s_{23}$. By assumption, $T(y_2)-T(y_1)<y_2-y_1$, and hence we have $s_{12}< 1$.

For a type with linear preferences with coefficient $\alpha$, write
$\beta:=1-\alpha$. 
Since $\alpha\ge 0$, the admissible values of $\beta$ are precisely $\beta\le 1$. The payoff of this type from earning $y_i$ is
\[
L_i(\beta):=\beta y_i-T(y_i).
\]

We will consider three possible cases, depending on which earnings level is the middle element according to the order $\preceq$. Recall that any finite submenu of a searchable menu is also searchable by \Cref{thm:main}.

\noindent
\textbf{Case (i):} Suppose $y_2$ is between $y_1$ and $y_3$ in the order $\preceq$. Take $\beta=s_{13}$. Then $L_1(\beta)=L_3(\beta)$.
Non-convexity means that $T(y_2)$ lies strictly above the chord connecting
$(y_1,T(y_1))$ and $(y_3,T(y_3))$. Therefore, we have
\[
L_2(\beta)<L_1(\beta)=L_3(\beta).
\]
Thus, the middle option in the search order gives strictly lower payoff than both endpoint options. This violates searchability by \Cref{lem:char}.

\noindent
\textbf{Case (ii):} Suppose $y_3$ is between $y_1$ and $y_2$ in the order $\preceq$. Choose any $\beta<s_{23}$,
which is an admissible choice because $\beta$ may be negative. Since $\beta<s_{23}$, we have $L_3(\beta)<L_2(\beta)$.
Since $s_{23}<s_{13}$, we also have $\beta<s_{13}$, and hence $L_3(\beta)<L_1(\beta)$.
Therefore, we have 
\[
L_3(\beta)<\min\{L_1(\beta),L_2(\beta)\}.
\]
Since $y_3$ is the middle option in the search order, this violates searchability again by \Cref{lem:char}. 

\medskip

\noindent
\textbf{Case (iii):} Suppose $y_1$ is between $y_2$ and $y_3$ in the order $\preceq$. Recall that  $s_{12}<1$.
Choose $\beta=1$.
Then $\beta>s_{12}$, so $L_1(\beta)<L_2(\beta)$.
We also have $s_{12}>s_{13}$, so $\beta>s_{13}$, and hence
$L_1(\beta)<L_3(\beta)$.
Therefore, we have 
\[
L_1(\beta)<\min\{L_2(\beta),L_3(\beta)\}.
\]
Since $y_1$ is the middle option in the search order, this violates searchability again by \Cref{lem:char}. 

Since we derived a contradiction in all possible cases, it must be that $T$ is convex, concluding the proof.

\subsection{Proof of \texorpdfstring{\Cref{thm:grid}}{}}

\subsubsection{Proof of Part (i) of \texorpdfstring{\Cref{thm:grid}}{}}
We first prove part (i) of \Cref{thm:grid}. As explained in the proof sketch in \Cref{sec:grid}, the proof consists of several steps. The first key step is a reduction of the component menus to effectively one-dimensional menus represented by a witness coordinate along which allocations in the menu are increasing. 

We first define a few key objects for the proof. For any grid-searchable product menu $(\mathcal{Y}, p, \preceq)$ and any component $K$, we order the elements in the menu $\mathcal{Y}_K$ according to $\preceq_K$ so that $y^1_K \prec_K y^2_K \prec_K \cdots \prec_K y^{N_K}_K$, and write 
\[\Delta y^s_K := y^s_K - y^{s-1}_K\]
for all $s \in \{2, \dots, N_K\}$. Let
\[P_K(y_K) := p(\underline{y}_{-K}, y_K) - p(\underline{y})\,,\]
be the boundary price function over component menu $K$, where 
\[\underline{y} = (y^1_K)_{K \in \mathcal{K}}\,.\]

We can now formalize the key lemma. We claim that after appropriately deleting strictly dominated options, in each component $K$, there must be a dimension $i \in K$ such that $y^{s}_i$ is strictly increasing in $s$. Moreover, we will show that the boundary price function must be ``discretely convex'' over $y^{s}_i$. Formally, a real-valued function $g$ defined on a finite set of points $a_1 < a_2 < \cdots < a_N \in \R$ is \textit{\textbf{discretely convex}} if $(g(a_{s+1}) - g(a_{s}))/(a_{s+1} - a_s)$ is nondecreasing in $s$ for all $s \in \{1, \dots, N-1\}$.

\begin{lemma}\label{lem:witness}
For any grid-searchable product menu $(\mathcal{Y}, p, \preceq)$ with a strictly increasing price function, there exists a grid-searchable product menu $(\bar{\mathcal{Y}}, \bar{p}, \bar{\preceq})$ such that it can induce the same outcomes for every type, and that for any component $K$, there exists some dimension $i \in K$ such that 
 \[\bar{y}^1_i < \bar{y}^2_i < \cdots < \bar{y}^{\bar{N}_K}_i\,,\]
 and moreover, $\bar{\varphi}_K: \bar{y}^s_i \mapsto \bar{P}_K(\bar{y}^{s}_K)$ is strictly increasing and discretely convex on $\{\bar{y}^1_i, \cdots , \bar{y}^{\bar{N}_K}_i\}$. 
\end{lemma}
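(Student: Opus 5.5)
The plan is to work one component $K$ at a time, using only the boundary slice of the grid. Fix $\underline{y}_{-K}$ and consider types with linear preferences that put zero weight on every coordinate outside $K$. For such a type, every move in a component $K'\neq K$ away from the bottom element raises the price (strict monotonicity of $p$) without adding value. Hence at any allocation $(\underline{y}_{-K},y_K)$ the only potentially improving neighbors lie inside $K$. Consequently, for these types, local optima of the grid problem restricted to the slice $\{(\underline{y}_{-K},y_K)\}$ are local optima of the full product menu. The global optimum also lies in the slice, since moving off it only costs money. Therefore grid-searchability implies that the one-dimensional menu $(\mathcal{Y}_K,P_K,\preceq_K)$ is searchable for all linear types supported on $K$. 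By \Cref{thm:main} and \Cref{lem:char}, this gives incremental nesting on that menu.

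Next I would delete dominated options. Call $y^s_K$ dominated if some other $y^t_K$ has $y^t_K\ge y^s_K$ coordinatewise and $P_K(y^t_K)\le P_K(y^s_K)$. Because $P_K$ is strictly increasing along $\preceq_K$ and values are nondecreasing, such options are never uniquely chosen. More carefully, I would delete only options never strictly optimal for any type in any context, so that outcomes are preserved, and check that the submenu inherits grid-searchability; deleting elements from a component menu preserves the relevant nesting by the submenu part of \Cref{thm:main}. This step needs care, since grid-searchability of the product is not obviously inherited under deletion, and I would argue it directly from the product structure.

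After deletion, each increment $\Delta y^s_K$ must have a strictly positive coordinate, since $\Delta P_K^s>0$ and otherwise $y^{s-1}_K$ dominates. Apply weak incremental nesting with unit-vector types $\alpha=Me_i$, $i\in K$. If $\Delta y^{s+1}_i>0$, a large $M$ puts the type in $\mathcal{H}^{s+1}$, so it must lie in $\mathcal{H}^s$, which forces $\Delta y^s_i>0$ (since $\Delta P^s>0$). So the set of coordinates with positive increments is nonincreasing in $s$. The last increment $\Delta y^{N_K}$ has some positive coordinate $i$, and then $\Delta y^s_i>0$ for all $s$. This $i$ is the witness coordinate, along which $y^s_i$ is strictly increasing.

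Discrete convexity of $\varphi_K$ then comes from the types $\alpha=\beta e_i$, as in Case (i) of the proof of \Cref{thm:progressive}. Suppose the secant slopes $\Delta P^s/\Delta y^s_i$ decrease at some $s$. Choose $\beta$ equal to the slope at $s+1$. Then this type is indifferent or better at step $s+1$ but strictly worse at step $s$, violating $\mathcal{H}^{s+1}\subseteq\mathcal{H}^s$. Strict monotonicity of $\varphi_K$ follows from strict monotonicity of $p$.

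The main obstacle is the deletion step. It must preserve the outcome for every type, not just almost every type, as the statement requires. It must also preserve grid-searchability of the full product menu, not merely of the boundary slice. I would try defining $\bar{\mathcal{Y}}_K$ as the options that are optimal for some type in the full product menu, and show via the nesting argument that the remaining options are exactly the dominated ones.
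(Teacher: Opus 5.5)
Your slice embedding and your last two steps (types $Me_i$ for the witness coordinate, threshold types for discrete convexity) match the paper. The step connecting them has a gap. Grid-searchability, as defined in \Cref{sec:grid}, is only the local-to-global property; unlike searchability, it does not include the interval condition. Your slice argument therefore shows only that every local maximum of the boundary chain $(\mathcal{Y}_K,P_K,\preceq_K)$ is global. You cannot then invoke \Cref{thm:main} or \Cref{lem:char}, because their implication (a)$\Rightarrow$(b) uses the interval condition. For a fixed type, local-to-global does not exclude a valley $D^s_K(\tau)<0\le D^{s+1}_K(\tau)$: if the best option below $s$ ties with the best option above $s$, both are global maxima. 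For example, payoffs $5,0,5$ on three consecutive options satisfy local-to-global but violate nesting. The types you use later produce exactly such valleys: $Me_i$ with large $M$ gives $D^s_K<0<D^{s+1}_K$, and $\beta e_i$ at the threshold slope gives $D^s_K<0=D^{s+1}_K$. So the ties must be ruled out, not assumed away.

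The missing idea is the perturbation argument of \Cref{lem:D} and \Cref{lem:HG}. Suppose $D^s_K(\tau)<0<D^{s+1}_K(\tau)$. Both strict inequalities persist along $t\tau$ for $t$ in an open interval. For each such $t$, the maximizers over indices below $s$ and over indices above $s$ are local optima of the chain. Embedded at $\underline{y}_{-K}$, they are global optima of the grid, so their values coincide on the whole interval. Since there are finitely many pairs, some pair of options has payoffs, affine in $t$, that agree at infinitely many $t$. Those payoffs are then identical, which forces equal prices and contradicts strict monotonicity of $p$. The boundary cases $D^{s+1}_K(\tau)=0$ or $D^s_K(\tau)=0$ are pushed into the strict case by scaling $\tau$ up or down.

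You also leave the deletion step open, and the criterion you lean toward (keep only component options that appear in some type's optimal profile) is not the one that works. Deleting a never-chosen option may remove the only improving neighbor of some non-optimal profile, and nothing guarantees that the newly adjacent option is improving. The paper instead deletes only those $y^s_K$ with $y^s_K\le y^{s-1}_K$ coordinatewise (\Cref{lem:reduction}). Then $(y^{s-1}_K,y_{-K})$ is strictly better than $(y^s_K,y_{-K})$ for every type and every $y_{-K}$, so every type's outcome is preserved. Now suppose a non-optimal profile $y$ loses its improving neighbor $(y^s_K,y_{-K})$. Then $y_K\neq y^{s-1}_K$, so $y_K=y^{s+1}_K$, and the newly adjacent profile $(y^{s-1}_K,y_{-K})$ is improving by transitivity. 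Iterating this deletion terminates. This limited deletion already gives every increment a positive coordinate, which is all your witness step uses. Nesting is then derived directly on the reduced menu, so no submenu argument is needed.
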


\begin{proof}[Proof of \Cref{lem:witness}]

To prove \Cref{lem:witness}, we first need to prove several auxiliary claims. 

We first observe that, given that $v(q, \theta)$ is nondecreasing in $q$, it is without loss of generality to focus on grid-searchable menus in which every adjacent increment in any component $\Delta y^s_K$ has at least one strictly positive coordinate $i$ at any index $s$. This is a much weaker claim than what \Cref{lem:witness} requires, since $i$ can depend on $s$.
\begin{claim}\label{lem:reduction}
For any grid-searchable product menu $(\mathcal{Y}, p, \preceq)$ with a strictly increasing price function, there exists a grid-searchable product menu $(\bar{\mathcal{Y}}, \bar{p}, \bar{\preceq})$ such that it can induce the same outcomes for every type, and for all components $K$ and all $s \in \{2, \dots, \bar{N}_K\}$, there exists at least one dimension $i \in K$ such that 
\[ \Delta \bar{y}^s_i > 0\,.\]
\end{claim}
\begin{proof}[Proof of \Cref{lem:reduction}]
Fix any component $K$ and any $s \in \{2, \dots, N_K\}$ for which 
\[ \Delta y^s_i \leq 0 \text{ for all $i \in K$}\,.\]
Since $p$ is strictly increasing with respect to $\preceq$ and $v(q, \theta)$ is nondecreasing in $q$, we have 
\[v(y_{-K}, y^s_K, \theta) - p(y_{-K}, y^s_K) < v(y_{-K}, y^{s-1}_K, \theta) - p(y_{-K}, y^{s-1}_K) \]
for all $\theta$ and all $y_{-K}$. Therefore, no type chooses any option $(y_{-K}, y^s_K)$ from the product menu. Consider the modified product menu $(\bar{\mathcal{Y}}, \bar{p}, \bar{\preceq})$ where $\bar{\mathcal{Y}}_K = \mathcal{Y}_K \backslash \{y^s_K\}$ and $\bar{\mathcal{Y}}_{-K} = \mathcal{Y}_{-K}$, with the price function $\bar{p}$ and the ordering $\bar{\preceq}$ defined as the restrictions of $p$ and $\preceq$ on this subspace. 

Clearly, $(\bar{\mathcal{Y}}, \bar{p}, \bar{\preceq})$ can induce the same outcome as $(\mathcal{Y}, p, \preceq)$. We argue that the modified menu continues to be grid-searchable. Indeed, fix any $y  \in \bar{\mathcal{Y}}$ that is not globally optimal. Then, by construction, $y$ cannot be globally optimal in $\mathcal{Y}$. Since $\mathcal{Y}$ is grid-searchable, there exists some component $K'$ such that moving to the predecessor or the successor according to $\preceq_{K'}$ yields a strict improvement. If $K' \neq K$, then the same neighbor  $y'$ is also in $\bar{\mathcal{Y}}$ and hence yields a strict improvement there as well. Now, suppose $K' = K$. The same argument works if the neighbor  $y'$ does not have its $K$-component index equal to $s$. If the neighbor  $y'$ has its $K$-component index equal to $s$, then it no longer exists in $\bar{\mathcal{Y}}$. However, then we must have that $y_K$ has index either equal to $s - 1$ or $s + 1$. Note that $y_K$ cannot have its index equal to $s - 1$ because if so $y$ would strictly dominate $y'$ by our earlier argument. Thus, $y_K$'s index must be $s+1$. Then, in the modified menu, $(y^{s-1}_K, y_{-K})$ is a neighbor of $y$, which by construction strictly improves on $ (y^{s}_K, y_{-K}) = y'$, and hence must strictly improve on $y$. Therefore, $y$ cannot be locally optimal in $\bar{\mathcal{Y}}$. This proves that the modified menu $(\bar{\mathcal{Y}}, \bar{p}, \bar{\preceq})$ is indeed grid-searchable. 

The claim follows immediately by repeating this process until no such component $K$ and index $s \in \{2, \dots, N_K\}$ exists---the process must end in finitely many iterations since we start with a finite product menu and in each iteration the total number of options in $\mathcal{Y}$ strictly decreases. 
\end{proof}

By \Cref{lem:reduction}, it is without loss of generality to assume that every adjacent increment in any component $\Delta y^s_K$ has at least one strictly positive coordinate at any index $s$. Fix any such product menu $(\mathcal{Y}, p, \preceq)$. Recall that $P_K(y_K)= p(\underline{y}_{-K}, y_K) - p(\underline{y})$, and let 
\[D^s_K(\tau):= \tau \cdot \Delta y^s_K - \Delta P^s_K\,,\]
where 
\[\Delta P^s_K:= P_K(y^s_K) - P_K(y^{s-1}_K)\,.\]
\begin{claim}\label{lem:D}
For any component $K$, and any $s \in \{2, \dots, N_K - 1\}$, there is no $\tau \in \R^K_+$ such that 
\[D^s_K(\tau) < 0 < D^{s+1}_K(\tau)\,.\]
\end{claim}

\begin{proof}[Proof of \Cref{lem:D}]
Suppose for contradiction that there exists $\tau \in \R^K_+$ such that 
\[D^s_K(\tau) < 0 < D^{s+1}_K(\tau)\,.\]
For any $r \in \{1, \dots, N_K\}$, define 
\[U^r(t) := (t \tau) \cdot y^r_K - P_K(y^r_K)\]
for any $t \in \R_{++}$. Then, we have 
\[U^r(t) - U^{r-1}(t) = D^r_K(t \tau)\,.\]
Since $D^s_K(\tau) < 0 < D^{s+1}_K(\tau)$, by continuity, there exists a non-empty open interval $I \subseteq \R_{++}$ containing $1$ such that for all $t \in I$, we have 
\[D^s_K(t\tau) < 0 < D^{s+1}_K(t\tau)\,.\]
Therefore, for all $t \in I$, we have 
\[U^{s-1}(t) > U^s(t) \quad \text{ and } \quad U^s(t)  < U^{s+1}(t)\,.\]
Let 
\[L:= \{1, \dots, s-1\} \quad \text{ and } \quad R:= \{s+1, \dots, N_K\}\,.\]
Let 
\[A(t) := \max_{l \in L}U^l(t) \quad \text{ and } \quad B(t) := \max_{r \in R} U^r(t)\,.\]
Note that for any $t \in I$, any maximizer of $A(t)$ must be a locally optimal choice for the type $t \tau \in \R^K_+$ on the chain $\mathcal{Y}_K$ with prices given by $P_K$---indeed, this is clear if the maximizer of $A(t)$ has index strictly lower than $s - 1$, and this holds even if the maximizer has index equal to $s - 1$ since $U^{s-1}(t) > U^s(t)$. Similarly, every maximizer of $B(t)$ is a locally optimal choice  for the type $t \tau$ on the chain $\mathcal{Y}_K$ with prices given by $P_K$. 

We claim that $A(t)=B(t)$ for all $t\in I$. Indeed, suppose $A(t) < B(t)$. Let $l^* \in L$ be any maximizer of $A(t)$. By the above argument, $y^{l^*}_K$ is a locally optimal choice for the type $t \tau$ on the chain $\mathcal{Y}_K$ with prices given by $P_K$. Now, consider the type $\theta$ with additive linear preferences specified by $\alpha := (t \tau, 0) \in \R^n$. Consider $y := (y^{l^*}_K, \underline{y}_{-K})$. Since $p$ is strictly increasing and type $\theta$ by construction has value $0$ for any allocation in components $K' \neq K$, there is no local improvement from $y$ for any component $K' \neq K$. Moreover, by construction of $\theta$, it also follows that there is no local improvement from $y$ in component $K$ since the comparisons there are equivalent to the comparisons made by type $t \tau$ on the chain $\mathcal{Y}_K$ with prices given by $P_K$. It follows that $y$ must be a local optimum in the original grid-searchable menu and hence $y$ must be globally optimal. However, $y$ cannot be globally optimal since $y':= (y^{r^*}_K, \underline{y}_{-K})$, where $r^* \in R$ is a maximizer of $B(t)$, would yield a strictly higher payoff for type $\theta$ by construction. It follows that we cannot have $A(t) < B(t)$. Similarly, we cannot have $B(t) < A(t)$. Thus, $A(t) = B(t)$ for all $t \in I$. 

It follows that for any $t \in I$, there exists some $l \in L$ and some $r \in R$ such that 
\[U^l(t) = A(t) = B(t) = U^r(t)\,.\]
Therefore, we must have 
\[I \subseteq \bigcup_{l \in L, r \in R} \{t \in I: U^l(t) = U^r(t)\}\,.\]
Since there are only finitely many pairs $(l, r)$, to cover the interval $I$, it must be that there exists some pair $(l^*, r^*)$ such that 
\[U^{l^*}(t) = U^{r^*}(t)\]
for infinitely many $t \in I$. Since both sides are affine functions of $t$, it follows that the above must hold identically for all $t \in \R_{++}$. In particular, their slopes and intercepts agree: 
\[\tau \cdot y^{l^*}_K  = \tau \cdot y^{r^*}_K  \quad \text{ and } \quad P_K(y^{l^*}_K) = P_K(y^{r^*}_K)\,.\]
However, $l^* < r^*$ by construction. Thus, $P_K(y^{l^*}_K) < P_K(y^{r^*}_K)$ since $p$ is strictly increasing according to $\preceq$, which is a contradiction. 
\end{proof}

Now, define the weak and strict incremental better-off sets for the component menu~$K$:  
\[\mathcal{H}^s_K := \{\tau \in \R^K_+: D^s_K(\tau) \geq 0\} \qquad \text{and} \qquad \mathcal{G}^s_K := \{\tau \in \R^K_+: D^s_K(\tau) > 0\}\,.\]
The next claim shows that the incremental nesting condition holds. 
\begin{claim}\label{lem:HG}
 For any component $K$, we have 
 \[\mathcal{H}^{s+1}_K \subseteq \mathcal{H}^{s}_K \qquad \text{ and } \qquad \mathcal{G}^{s+1}_K \subseteq \mathcal{G}^{s}_K\]
 for all $s \in \{2, \dots, N_K - 1\}$. 
\end{claim}
\begin{proof}[Proof of \Cref{lem:HG}]
Suppose for contradiction that $\mathcal{H}^{s+1}_K \not \subseteq \mathcal{H}^{s}_K$. Then, there exists some $\tau \in \mathcal{H}^{s+1}_K$ and $\tau \not \in \mathcal{H}^{s}_K$. Then, by definition, we have 
\[D^{s+1}_K(\tau) \geq 0 \qquad \text{and} \qquad D^s_K(\tau) < 0\,.\]
Note that, by \Cref{lem:D}, it cannot be that $D^{s+1}_K(\tau) > 0$. Thus, we must have $D^{s+1}_K(\tau) = 0$. Hence, we have 
\[\tau \cdot \Delta y^{s+1}_K = \Delta P^{s+1}_K > 0\,.\]
Note that for any $t > 1$, we have 
\[D^{s+1}_K(t \tau) = (t \tau) \cdot \Delta y^{s+1}_K - \Delta P^{s+1}_K = (t - 1) \Delta P^{s+1}_K > 0\,.\]
By continuity, there exists some $t > 1$ such that $D^s_K(t \tau) < 0$. It then follows that 
\[D^s_K(t \tau) < 0 <  D^{s+1}_K(t \tau)\,,\]
contradicting \Cref{lem:D}. 

Similarly, suppose for contradiction that $\mathcal{G}^{s+1}_K \not \subseteq \mathcal{G}^{s}_K$. Then, there exists some $\tau \in \mathcal{G}^{s+1}_K$ and $\tau \not \in \mathcal{G}^{s}_K$. Then, by definition, we have 
\[D^{s+1}_K(\tau) > 0 \qquad \text{and} \qquad D^s_K(\tau) \leq 0\,.\]
Note that, by \Cref{lem:D}, it cannot be that $D^{s}_K(\tau) < 0$. Thus, we must have $D^{s}_K(\tau) = 0$. Hence, we have 
\[\tau \cdot \Delta y^{s}_K = \Delta P^{s}_K > 0\,.\]
Note that for any $t \in (0, 1)$, we have 
\[D^{s}_K(t \tau) = (t \tau) \cdot \Delta y^{s}_K - \Delta P^{s}_K = (t - 1) \Delta P^{s}_K < 0\,.\]
By continuity, there exists some $t \in (0, 1)$ such that $D^{s+1}_K(t \tau) > 0$. It then follows that 
\[D^s_K(t \tau) < 0 <  D^{s+1}_K(t \tau)\,,\]
contradicting \Cref{lem:D}. 
\end{proof}

We are now ready to finish the proof of \Cref{lem:witness}.
Fix any $K$. We first claim that for all $i \in K$, we have 
\[\Delta y^{s+1}_{K,i} > 0  \implies \Delta y^{s}_{K,i} > 0\]
for all $s \in \{2, \dots, N_K - 1\}$ (where the second subscript denotes the coordinate within $K$). Indeed, suppose not. Then, for some dimension $i$, we have 
\[\Delta y^{s+1}_{K,i} > 0 \quad \text{ and } \quad \Delta y^{s}_{K,i} \leq 0\,\]
for some $s$.  Now consider $\tau = M e_i \in \R^{K}_+$ that puts zero weights on every dimension $j \neq i$ and a (large) positive weight $M$ on dimension $i$. Then, with $M$ large enough, we have
\[D^{s+1}_K(\tau) = M \Delta y^{s+1}_{K,i}  - \Delta P^{s+1}_K > 0\,.\]
However, since $\Delta P^s_K > 0$, we also have 
\[D^s_K(\tau) = M \Delta y^{s}_{K,i}  - \Delta P^{s}_K < 0\,,\]
contradicting \Cref{lem:D}. 

Recall that, by \Cref{lem:reduction}, we can consider only grid-searchable menus for which we know that at any index $s \in \{2, \dots, N_K\}$, there exists some index $j(s)$ such that $\Delta y^{s}_{j(s)} > 0$. In particular, this also holds for $s = N_K$. Let $i = j(N_K)$. Then,  $\Delta y^{N_K}_i > 0$. Moreover, by the above argument, we have that 
\[\Delta y^{s}_i > 0\]
for all $s \in \{2, \dots, N_K - 1\}$ as well. Thus, we have found a dimension $i \in K$ such that 
\[y^1_i < y^2_i < \cdots < y^{N_K}_i\,,\]
proving the first part of the claim. 

Now, to prove the second part, note that 
\[\varphi_K(y^s_i) = P_K(y^s_{K})\]
is strictly increasing in $s$ since $p$ is strictly increasing along the order  $\preceq_K$. It follows immediately that for any $y_i < y'_i$ in $\{y^1_i, \cdots , y^{N_K}_i\}$, we have 
\[\varphi_K(y_i) < \varphi_K(y'_i)\]
and hence $\varphi_K(\,\cdot\,)$ is strictly increasing on the grid. Now, to see that it is discretely convex, note that its adjacent slopes are given by 
\[\sigma^s_K := \frac{\varphi_K(y^{s}_i) - \varphi_K(y^{s-1}_i)}{y^{s}_i - y^{s-1}_i} = \frac{\Delta P^{s}_K}{\Delta y^{s}_i}\,.\]
Consider $\tau = M e_i \in \R^{K}_+$ for any $M \in \R_+$. Note that 
\[\tau \in \mathcal{H}^s_K \iff M \Delta y^s_i - \Delta P^s_K \geq 0 \iff M \geq \sigma^s_K\,.\]
By \Cref{lem:HG}, we have  
\[\mathcal{H}^{s+1}_K \subseteq \mathcal{H}^{s}_K, \]
for all $s \in \{2, \dots, N_K - 1\}$. It follows immediately that 
\[\sigma^{s+1}_K \geq \sigma^{s}_K\]
for all $s \in \{2, \dots, N_K - 1\}$. Hence, $\varphi_K$ is discretely convex, as desired, finishing the proof of \Cref{lem:witness}.
\end{proof}

\paragraph{Block-Additive Pricing.}\hspace{-2mm}Equipped with \Cref{lem:witness}, we can now prove that the menu must feature block-additive pricing.  Define a block-additive price function over the menu $\mathcal{Y}$ built from the boundary price functions as
\[q(y) := p(\underline{y}) + \sum_{K  \in \mathcal{K}} P_K (y_K)\,.\]
We will show that grid-searchability implies $p(y) = q(y)$ for all $y$ and hence the menu $(\mathcal{Y}, p, \preceq)$ must have block-additive pricing.
We prove this equality by induction. 

Specifically, for any $y \in \mathcal{Y}$, let $s_K \in \{1, \dots, N_K\}$ denote its index in component $K$, i.e., 
\[y = (y^{s_K}_K)_{K \in \mathcal{K}}\,.\]
Define the \textit{\textbf{rank}} of $y$ as 
\[|y| := \sum_{K \in \mathcal{K}} (s_K-1)\,.\]
We prove the claim by induction on the rank of $y$.

When $|y| = 0$, we have $y = \underline{y}$ and hence $p(y) = p(\underline{y}) = q(\underline{y}) = q(y)$ by construction. When $|y| = 1$, for some $K$, we have $s_K = 2$ and for all other $K' \neq K$, we have $s_{K'} = 1$. It follows immediately that 
\[p(y) = p(y_K, \underline{y}_{-K}) = p(\underline{y})  + \big(p(y_K, \underline{y}_{-K}) - p(\underline{y}) \big) = p(\underline{y})  + P_K(y_K) = q(y)\,.\]
Now, we start the induction argument. Suppose $p(z) = q(z)$ for all $z$ with $|z| < r$. By the above argument, we only need to consider $|y| = r \geq 2$. Moreover, we also only need to consider $y$ with $s_K > 1$ for at least two components, because otherwise we also have  $p(y) = q(y)$ by the construction of $q$. Now, for any $y$, let 
\[\varepsilon(y) := p(y) - q(y)\,.\]
The induction argument has two steps. In \textbf{Step 1}, we show that for all $|y| = r$, $\varepsilon(y) \geq 0$. In \textbf{Step 2}, we show that for all $|y| = r$, $\varepsilon(y) \leq 0$. This then completes the inductive step. 

\textbf{Step 1:} We first show that $\varepsilon(y) \geq 0$. Suppose for contradiction that $\varepsilon(y) < 0$. Consider $a \in \mathcal{Y}$ defined by moving $y$ one index lower in each component whenever possible: for all $K \in \mathcal{K}$,  
\[a_K := \begin{cases}
 y^{s_K - 1}_K &\text{ if $s_K > 1$}\\
 \underline{y}_K &\text{ otherwise}\\
\end{cases} \,.\]
Clearly, we have $|a| < r$. By the inductive hypothesis, $p(a) = q(a)$. Now, consider the additive-linear type defined by $\alpha \in \R^n_+$ as follows: for all $K \in \mathcal{K}$, and all $i \in K$, 
\[\alpha_i := \begin{cases}
\frac{P_K(y_K) - P_K(a_K)}{y^{s_K}_i - y^{s_K-1}_i} &\text{ if $i = j(K)$ and $s_K > 1$}\\
0 &\text{ otherwise}\\
\end{cases} \,,\]
where $j(K)$ is the ``witness'' dimension in component $K$ identified by \Cref{lem:witness} (along which allocations are increasing). We claim that $a \in \mathcal{Y}$ is locally optimal for type $\alpha \in \R^{n}_+$ but not globally optimal. 

Indeed, to see that $a$ is locally optimal, consider its neighbors by moving in any component $K$. If $s_K = 1$, then $a$ can only move up for one index in $K$. The price for obtaining this neighboring option is strictly higher while the utility to type $\alpha$ remains the same since $\alpha_K = 0$. Thus, this cannot be an improvement. Now, suppose $s_K > 1$. Let $a^+, a^-$ denote the neighbors (if exist) of $a$ when moving one index up and down in dimension $K$. Since $r = |y| \geq 2$, we have $|a| \leq r - 2$ by construction. Therefore, $|a^+| \leq r - 1$ and $|a^-| \leq r - 1$. Thus, by the inductive hypothesis, we have 
\[p(a^+) = q(a^+) \quad \text{ and } \quad p(a^-) = q(a^-)\,.\]
It follows that 
\[p(a^+) - p(a) = q(a^+) - q(a) = P_K(y_K) - P_K(a_K)\,.\]
Since 
\[\alpha \cdot (a^+ - a) = \frac{P_K(y_K) - P_K(a_K)}{y^{s_K}_{j(K)} - y^{s_K-1}_{j(K)}} (a^+_{j(K)} - a_{j(K)}) = \frac{P_K(y_K) - P_K(a_K)}{y^{s_K}_{j(K)} - y^{s_K-1}_{j(K)}} (y^{s_K}_{j(K)} - y^{s_K-1}_{j(K)}) = P_K(y_K) - P_K(a_K)\,,\]
type $\alpha$ is indifferent between $a^+$ and $a$. Now, for $a^-$, we also have 
\[p(a^-) - p(a) = q(a^-) - q(a) = P_K(y^{s_K-2}_K) - P_K(y^{s_K-1}_K)\,.\]
Since 
\begin{align*}
\alpha \cdot (a^- - a) = \frac{P_K(y_K) - P_K(a_K)}{y^{s_K}_{j(K)} - y^{s_K-1}_{j(K)}} (a^-_{j(K)} - a_{j(K)}) &= \frac{P_K(y^{s_K}_K) - P_K(y^{s_K-1}_K)}{y^{s_K}_{j(K)} - y^{s_K-1}_{j(K)}} (y^{s_{K}-2}_{j(K)} - y^{s_K-1}_{j(K)}) \\
& \leq \frac{P_K(y^{s_{K}-1}_K) - P_K(y^{s_K-2}_K)}{y^{s_{K}-1}_{j(K)} - y^{s_{K}-2}_{j(K)}} (y^{s_{K}-2}_{j(K)} - y^{s_K-1}_{j(K)})\\
&= P_K(y^{s_K-2}_K) - P_K(y^{s_{K}-1}_K)\,,
\end{align*}
where the inequality holds because (i) $y^{s_{K}-2}_{j(K)} \leq y^{s_K-1}_{j(K)}$ and (ii) $\varphi_K$ is discretely convex by \Cref{lem:witness}. Therefore, type $\alpha$ weakly prefers $a$ over $a^-$ as well. This proves that $a$ is a locally optimal choice for type $\alpha$. 

We now argue that $a$ is not globally optimal for $\alpha$. In particular, we claim that $\alpha$ strictly prefers $y$ over $a$. To see this, note that 
\[\alpha \cdot (y - a) = \sum_{K \in \mathcal{K}; s_K > 1} \frac{P_K(y_K) - P_K(a_K)}{y^{s_K}_{j(K)} - y^{s_K-1}_{j(K)}} \Big(y^{s_K}_{j(K)} - y^{s_K-1}_{j(K)}\Big) = \sum_{K \in \mathcal{K}} P_K(y_K) - \sum_{K \in \mathcal{K}} P_K(a_K) = q(y) - q(a)\,.\]
Therefore, we have 
\[\alpha \cdot (y - a)  - \big(p(y) - p(a)\big) = \alpha \cdot (y - a)  - \big(q(y) + \varepsilon(y) - q(a)\big) = -\varepsilon(y) > 0\,,\]
and thus type $\alpha$ strictly prefers $y$ over $a$. Thus, $a$ cannot be globally optimal for type $\alpha$. By the richness assumption on $\Theta$, it follows immediately that the original product menu cannot be grid-searchable, which is a contradiction. 

\textbf{Step 2:} We now show that $\varepsilon(y) \leq 0$. Suppose for contradiction that $\varepsilon(y) > 0$. Fix any $H \in \mathcal{K}$ such that $s_H > 1$. Consider $b \in \mathcal{Y}$ defined by moving $y$ down one index only in component $H$: for all $K \in \mathcal{K}$, 
\[b_K = \begin{cases}
    y^{s_K - 1}_K &\text{ if $K = H$} \\
    y^{s_K}_K &\text{ otherwise} \,.
\end{cases}
\]
Now, consider an additive linear preference type defined by $\beta \in \R^n_+$ as follows: for all $K \in \mathcal{K}$ and for all $i \in K$, 
\[\beta_i = \begin{cases}
    \frac{p(y) - p(b)}{y^{s_K}_i - y^{s_K-1}_i} &\text{ if $K = H$ and $i = j(K)$} \\
    \frac{P_K(y^{s_K}_K) - P_K(y^{s_K-1}_K)}{y^{s_K}_i - y^{s_K-1}_i} &\text{ if $K \neq H$ and $i = j(K)$ and $s_K > 1$} \\
    0 &\text{ otherwise} \,,
\end{cases}
\]
where $j(K)$ is the dimension in component $K$ identified by \Cref{lem:witness}. We claim that $b \in \mathcal{Y}$ is locally optimal for type $\beta \in \R^{n}_+$ but not globally optimal. 

Indeed, to see that $b$ is locally optimal, consider its neighbors by moving in any component $K$. There are a few cases. First, if $K \neq H$ and $s_K = 1$, then there is no lower index in component $K$. Consider the neighbor of $b$ with one index higher in component $K$. The price for obtaining this neighboring option is strictly higher while the utility to type $\beta$ remains the same since $\beta_K = 0$. Thus, this cannot be an improvement. 

Second, consider the case where $K \neq H$ and $s_K > 1$. Note that $|b| = r-1$ by construction and hence $p(b) = q(b)$ by the inductive hypothesis. Any downward move in component $K$ would result in some $b^-$ with rank $|b^-| = r-2$, which by the inductive hypothesis must satisfy $p(b^-) = q(b^-)$. It follows immediately that 
\begin{align*}
\beta \cdot (b - b^-) - \big( p(b) - p(b^-) \big) &=  \beta \cdot (b - b^-) - \big( q(b) - q(b^-) \big) \\
& =  \frac{P_K(y^{s_K}_K) - P_K(y^{s_K-1}_K)}{y^{s_K}_{j(K)} - y^{s_K-1}_{j(K)}} \big(y^{s_K}_{j(K)} - y^{s_K-1}_{j(K)}\big) - \big(P_K(y^{s_K}_K) - P_K(y^{s_K-1}_K)\big)   = 0\,.
\end{align*}
Therefore, any such neighbor cannot improve on $b$ for type $\beta$. Now, consider an upward move in component $K$ and denote the neighbor as $b^+$. Note that 
\begin{align*}
\beta \cdot (b - b^+) - \big( p(b) - p(b^+) \big) &\geq  \beta \cdot (b - b^+) - \big( q(b) - q(b^+) \big) \\
& =  \frac{P_K(y^{s_K}_K) - P_K(y^{s_K-1}_K)}{y^{s_K}_{j(K)} - y^{s_K-1}_{j(K)}} \big(y^{s_K}_{j(K)} - y^{s_K+1}_{j(K)}\big) - \big(P_K(y^{s_K}_K) - P_K(y^{s_K+1}_K)\big)  \\
& \geq  \frac{P_K(y^{s_K+1}_K) - P_K(y^{s_K}_K)}{y^{s_K+1}_{j(K)} - y^{s_K}_{j(K)}} \big(y^{s_K}_{j(K)} - y^{s_K+1}_{j(K)}\big) - \big(P_K(y^{s_K}_K) - P_K(y^{s_K+1}_K)\big)  \\
&=  \big(P_K(y^{s_K}_K) - P_K(y^{s_K+1}_K)\big)  - \big(P_K(y^{s_K}_K) - P_K(y^{s_K+1}_K)\big)  = 0\,,
\end{align*}
where the first inequality holds because (i) $p(b) = q(b)$ by the inductive hypothesis and (ii) $|b^+| = (r-1) + 1 = r$ which implies that $\varepsilon(b^+) = p(b^+) -q(b^+) \geq 0$ by \textbf{Step 1}, and the second inequality holds because $\varphi_K$ is discretely convex by \Cref{lem:witness} and $y^{s_K}_{j(K)} < y^{s_K+1}_{j(K)}$. Therefore, any such neighbor also cannot improve on $b$ for type $\beta$. 

Finally, consider the remaining case where $K = H$. First, consider an upward move, which leads to the neighbor $y$. Note that by construction, we have  
\begin{align*}
\beta \cdot (b - y) - \big( p(b) - p(y) \big) 
& =  \frac{p(y) - p(b)}{y^{s_K}_{j(K)} - y^{s_K-1}_{j(K)}} \big(y^{s_K-1}_{j(K)} - y^{s_K}_{j(K)} \big) - \big(p(b) - p(y)\big)  \\
& = \big(p(b) - p(y)\big) - \big(p(b) - p(y)\big) = 0\,.
\end{align*}
Therefore, type $\beta$ is  indifferent between $b$ and its neighbor $y$. Now, consider a downward move to $b^-$ (if exists) in component $K$. Then, since $|b| < r$ and $|b^-| < r$, invoking the inductive hypothesis, we have 
\begin{align*}
\beta \cdot (b - b^-) - \big( p(b) - p(b^-) \big) &=  \beta \cdot (b - b^-) - \big( q(b) - q(b^-) \big) \\
& =  \frac{p(y) -p(b)}{y^{s_K}_{j(K)} - y^{s_K-1}_{j(K)}} \big(y^{s_K-1}_{j(K)} - y^{s_K-2}_{j(K)}\big) - \big(P_K(y^{s_K-1}_K) - P_K(y^{s_K-2}_K)\big)  \\
& \geq  \frac{q(y) -q(b)}{y^{s_K}_{j(K)} - y^{s_K-1}_{j(K)}} \big(y^{s_K-1}_{j(K)} - y^{s_K-2}_{j(K)}\big) - \big(P_K(y^{s_K-1}_K) - P_K(y^{s_K-2}_K)\big)  \\
& =  \frac{P_K(y^{s_K}_K) - P_K(y^{s_K-1}_K)}{y^{s_K}_{j(K)} - y^{s_K-1}_{j(K)}} \big(y^{s_K-1}_{j(K)} - y^{s_K-2}_{j(K)}\big) - \big(P_K(y^{s_K-1}_K) - P_K(y^{s_K-2}_K)\big)  \\
& \geq  \frac{P_K(y^{s_K-1}_K) - P_K(y^{s_K-2}_K)}{y^{s_K-1}_{j(K)} - y^{s_K-2}_{j(K)}} \big(y^{s_K-1}_{j(K)} - y^{s_K-2}_{j(K)}\big) - \big(P_K(y^{s_K-1}_K) - P_K(y^{s_K-2}_K)\big)  \\
& =  \big( P_K(y^{s_K-1}_K) - P_K(y^{s_K-2}_K) \big) - \big(P_K(y^{s_K-1}_K) - P_K(y^{s_K-2}_K)\big) = 0\,,
\end{align*}
where the first inequality holds because (i) the inductive hypothesis implies $p(b) = q(b)$, (ii) \textbf{Step 1} implies $p(y) \geq q(y)$, and (iii) $y^{s_K-1}_{j(K)} > y^{s_K-2}_{j(K)}$, and the second inequality holds because $\varphi_K$ is discretely convex by \Cref{lem:witness} and $y^{s_K-1}_{j(K)} > y^{s_K-2}_{j(K)}$. Therefore, type $\beta$ weakly prefers $b$ over its neighbor $b^-$. Together, combining the three cases, we have that $b$ must be a locally optimal choice for type $\beta$. 

We now argue that $b$ is not globally optimal for $\beta$. Since at $y$, we know that $s_K > 1$ for at least two components $K \in \mathcal{K}$, there exists some $G \neq H \in \mathcal{K}$ such that $s_G > 1$. Consider $c \in \mathcal{Y}$ defined by moving $y$ one index lower in component $G$: for all $K \in \mathcal{K}$, 
\[c_K = \begin{cases}
    y^{s_K - 1}_K &\text{ if $K = G$} \\
    y^{s_K}_K &\text{ otherwise} \,.
\end{cases}
\]
We claim that type $\beta$ strictly prefers $c$ over $b$. Indeed, since $|c| = r - 1$, by the inductive hypothesis, we have $p(c) = q(c)$. Now, note that 
\begin{align*}
\beta \cdot (b - c) - \big( p(b) - p(c) \big) &=  \beta \cdot (b - c) - \big( q(b) - q(c) \big) \\
&=   \frac{p(y) - p(b)}{y^{s_H}_{j(H)} - y^{s_H-1}_{j(H)}} \big(y^{s_H-1}_{j(H)} -  y^{s_H}_{j(H)}\big) +  \frac{P_G(y^{s_G}_{G}) - P_G(y^{s_G-1}_{G})}{y^{s_G}_{j(G)} - y^{s_G-1}_{j(G)}} \big(y^{s_G}_{j(G)} -  y^{s_G-1}_{j(G)}\big) \\
&\qquad -\big(P_H(y^{s_H-1}_H) - P_H(y^{s_H}_H) \big) -\big(P_G(y^{s_G}_G) - P_G(y^{s_G-1}_G) \big)\\
& =   p(b) - p(y) +  \big(P_G(y^{s_G}_{G}) - P_G(y^{s_G-1}_{G})\big)
 -\big(P_H(y^{s_H-1}_H) - P_H(y^{s_H}_H) \big) 
 \\
&\qquad -\big(P_G(y^{s_G}_G) - P_G(y^{s_G-1}_G) \big)\\
 &= q(b) + P_H(y^{s_H}_H)  - P_H(y^{s_H-1}_H)  - p(y)\\
 &=q(y) - p(y) = -\varepsilon(y) < 0\,.
\end{align*}
Therefore, type $\beta$ strictly prefers $c$ over $b$. Thus, $b \in \mathcal{Y}$ cannot be globally optimal for type $\beta$. By the richness assumption on $\Theta$, it follows immediately that the original product menu cannot be grid-searchable, which is a contradiction. 

Together, \textbf{Step 1} and \textbf{Step 2} complete the inductive argument. Therefore, $p(y) = q(y)$ for all $y \in \mathcal{Y}$ and hence $p$ must be block-additive.

\paragraph{Within-Component Searchability.}\hspace{-2mm}To complete the proof of part (i) of \Cref{thm:grid}, it remains to show that for any $K$, the component menu $(\mathcal{Y}_K, p_K, \preceq_K)$ is searchable for all additive linear preferences $\alpha \in \R^n_+$ (searchability for all preferences then follows by \Cref{prop:concave}). For any component $K$, the preferences over the options in the component menu are determined by $\alpha_K \in \R^K_+$. Moreover, for any block-additive price function $p$, we have 
\[P_K(y_K) = p_K(y_K) - p_K(\underline{y}_K)\]
for all $K \in \mathcal{K}$. Therefore, the incremental better-off sets and the strict incremental better-off sets defined for $(\mathcal{Y}_K, p_K, \preceq_K)$ are identical to those defined for $(\mathcal{Y}_K, P_K, \preceq_K)$, which by \Cref{lem:HG} must be nested. By \Cref{thm:main}, it then follows immediately that each component menu is searchable for all additive linear preferences.

\subsubsection{Proof of Part (ii) of \texorpdfstring{\Cref{thm:grid}}{}}
The proof of this direction is straightforward. Let  $(\mathcal{Y}, p, \preceq)$ be a product menu with block-additive pricing where each component menu is searchable. We claim that $(\mathcal{Y}, p, \preceq)$ is grid-searchable on $\Theta$. 

Fix any $\theta \in \Theta$ that has additive preferences and write the utility function as $\sum_{K \in \mathcal{K}} v_K(y_K)$. Then, type $\theta$'s payoff from selecting an option from the product menu is then given by 
\[\sum_K v_K(y_K) - \sum_K p_K(y_K)\,.\]
Fix any $y \in \mathcal{Y}$ such that $y$ is not globally optimal for $\theta$. That is, there exists some $w \in \mathcal{Y}$ such that 
\[\sum_K v_K(y_K) - \sum_K p_K(y_K) < \sum_K v_K(w_K) - \sum_K p_K(w_K)\,.\]
It follows that there exists at least one $K \in \mathcal{K}$ such that 
\[ v_K(y_K) -  p_K(y_K) <  v_K(w_K) -  p_K(w_K)\,.\]
Hence, $y_K$ is also suboptimal for type $\theta$ in the component menu $(\mathcal{Y}_K, p_K, \preceq_K)$. By the searchability of the component menu for type $\theta$, there exists some $y'_K$ that is either the predecessor or the successor to $y_K$ according to $\preceq_K$ such that 
\[ v_K(y_K) -  p_K(y_K) <  v_K(y'_K) -  p_K(y'_K)\,.\]
Then, $y':= (y'_K, y_{-K})$ must be a neighbor of $y$ that strictly improves on $y$ by the additivity of the utility function and the additivity of the price function. Therefore, the product menu $(\mathcal{Y}, p, \preceq)$ is grid-searchable, as desired.

\subsection{Proof of \texorpdfstring{\Cref{cor:grid-multiplegood}}{}}

Throughout, fix the partition $\mathcal{K}$. Recall that any menu contains the outside option $(0,0)$. For any component $K$, let $\mu_K$ denote the marginal
of $\mu$ on $\R^K_+$, which inherits a full-support density and finite first
moments.\footnote{The proof of \Cref{thm:multiplegood} uses only these two properties of the type distribution.} Let $\text{OPT}_K$ denote the optimal revenue among searchable menus in the $|K|$-good monopoly problem with type distribution $\mu_K$, which is attained by \Cref{thm:multiplegood}. We start with a normalization lemma. 

\begin{lemma}\label{lem:normalization}
Let $(\mathcal{N}, \preceq)$ be a finite searchable menu in the $|K|$-good environment, where $\mathcal{N} = \{(q^1, \rho^1), \dots, (q^m, \rho^m)\}$ and $0 = \rho^1 < \rho^2 < \cdots < \rho^m$. Then, there exists a finite searchable menu $\mathcal{N}'$ containing the outside option such that, for almost every $\theta_K$, the payment under $\mathcal{N}'$ is weakly higher than the payment under $\mathcal{N}$. Thus, the revenue under $\mathcal{N}$ is no more than $\emph{OPT}_K$. 
\end{lemma}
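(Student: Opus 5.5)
The plan is to obtain $\mathcal{N}'$ by a single modification: replace the lowest option $(q^1,\rho^1)=(q^1,0)$ with the outside option $(0,0)$, keeping the order and all other options $(q^k,\rho^k)$, $k\ge 2$. So $\mathcal{N}'=\{(0,0),(q^2,\rho^2),\dots,(q^m,\rho^m)\}$. If $m=1$, then $\mathcal{N}'=\{(0,0)\}$ and both menus collect zero payment, so the claim is trivial. Likewise, if $q^1=0$ there is nothing to do, so assume $m\ge 2$ and $q^1\neq 0$.

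\textbf{Step 1 (payments weakly increase a.e.).} Since $q^1\ge 0$ and $\theta_K\ge 0$, the utility of the first option weakly falls: $\theta_K\cdot 0-0\le \theta_K\cdot q^1-0$. All other options are unchanged. Hence any type whose choice under $\mathcal{N}$ is some $k\ge 2$ still finds $k$ optimal under $\mathcal{N}'$. Any type that chose option $1$ under $\mathcal{N}$ paid $0$, and under $\mathcal{N}'$ it pays some $\rho^k\ge 0$. Ties between two distinct options occur on the sets $\{\theta_K:\theta_K\cdot(q-q')=\rho-\rho'\}$. For distinct options these are hyperplanes or empty, because $q=q'$ together with $\rho=\rho'$ would make the options coincide. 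A finite union of such sets is $\mu_K$-null since $\mu_K$ has a density. So up to a null set the selection is unambiguous, and the payment under $\mathcal{N}'$ is weakly higher for a.e.\ $\theta_K$.

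\textbf{Step 2 (searchability is preserved).} This is the step needing care.
\begin{itemize}
\item By \Cref{thm:main} (or \Cref{cor:payoff}(ii)), the searchable finite menu $\mathcal{N}$ satisfies the weak incremental nesting condition $\mathcal{H}^{k+1}\subseteq\mathcal{H}^k$.
\item In $\mathcal{N}'$, only the first increment changes. The old set is $\mathcal{H}^2=\{\theta_K\cdot(q^2-q^1)\ge\rho^2\}$ and the new one is $\mathcal{H}'^2=\{\theta_K\cdot q^2\ge \rho^2\}$. Since $\theta_K\cdot q^2\ge\theta_K\cdot(q^2-q^1)$ on $\R^K_+$, we get $\mathcal{H}^2\subseteq\mathcal{H}'^2$.
\item All $\mathcal{H}^k$ for $k\ge3$ are unchanged. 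Therefore $\mathcal{H}^3\subseteq\mathcal{H}^2\subseteq\mathcal{H}'^2$, and weak nesting still holds.
\item The menu $\mathcal{N}'$ is interior. Since $\rho^k-\rho^{k-1}>0$ for every $k\ge2$, type $0\notin\mathcal{H}'^k$ for all $k$.
\item The environment is regular, as shown in the proof of \Cref{cor:payoff}, with $\R^K_+$ convex with non-empty interior. \Cref{cor:regular} then gives that $\mathcal{N}'$ is searchable, and it contains the outside option.
\end{itemize}

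\textbf{Step 3 (conclusion).} Integrating the a.e.\ inequality from Step 1 against $\mu_K$ gives $\mathrm{Rev}(\mathcal{N})\le\mathrm{Rev}(\mathcal{N}')$. Because $\mathcal{N}'$ is a searchable menu in the $|K|$-good problem, $\mathrm{Rev}(\mathcal{N}')\le \text{OPT}_K$ by the definition of $\text{OPT}_K$, which is attained by \Cref{thm:multiplegood}. I expect the main obstacle to be Step 2, namely ensuring the replacement does not create crossings among the better-off sets. It is resolved by observing that the modified first better-off set can only enlarge, and the first set sits at the top of the nested chain.
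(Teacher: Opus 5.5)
Your proof is correct and follows the paper's argument: you make the same replacement of $(q^1,0)$ by the outside option, observe that only the first incremental better-off set changes and can only enlarge, and use the same almost-everywhere-uniqueness comparison of payments. The only minor difference is that you check weak nesting plus interiority and then invoke \Cref{cor:regular}, whereas the paper notes that $\mathcal{G}^2$ enlarges as well and applies \Cref{thm:main} directly, which avoids the regularity and interiority check.
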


\begin{proof}[Proof of \Cref{lem:normalization}]
Let $\mathcal{N}'$ be obtained from $\mathcal{N}$ by replacing the bottom option
$(q^1, 0)$ with the outside option $(0, 0)$, keeping the order. We verify the incremental
nesting condition for $\mathcal{N}'$. All indices $k \geq 3$ are unchanged. For the index
$k = 2$, the new incremental better-off sets are 
\[\mathcal{H}^2_{\mathcal{N}'} = \big\{\theta_K : \theta_K \cdot q^2 \geq \rho^2\big\}
\supseteq \big\{\theta_K: \theta_K \cdot (q^2 - q^1) \geq \rho^2\big\} =
\mathcal{H}^2_{\mathcal{N}}\,,\]
since $\theta_K \geq 0$ and $q^1 \geq 0$, and similarly
$\mathcal{G}^2_{\mathcal{N}'} \supseteq \mathcal{G}^2_{\mathcal{N}}$. Since $\mathcal{N}$
is searchable, \Cref{thm:main} gives $\mathcal{H}^3 \subseteq \mathcal{H}^2_{\mathcal{N}}
\subseteq \mathcal{H}^2_{\mathcal{N}'}$ and $\mathcal{G}^3 \subseteq
\mathcal{G}^2_{\mathcal{N}}\subseteq \mathcal{G}^2_{\mathcal{N}'}$, so $\mathcal{N}'$
satisfies the incremental nesting condition and hence is searchable by \Cref{thm:main}. For almost every $\theta_K$, the optimal choice in $\mathcal{N}$ is unique. If that choice
is some $(q^k, \rho^k)$ with $k \geq 2$, it remains uniquely optimal in $\mathcal{N}'$, and hence the payment is
unchanged. If instead the choice is the bottom option, the payment under $\mathcal{N}$ is $0$, while every price in $\mathcal{N}'$ is non-negative. The claim follows. 
\end{proof}

We now prove the corollary. Fix any grid-searchable product menu $(\mathcal{Y}, p, \preceq)$ with a strictly increasing price function. By \Cref{lem:reduction}, we may
delete, component by component, every option whose allocation is weakly dominated by that of its predecessor in the same component menu. Now, by the proof of \Cref{thm:grid}, on the resulting menu the price function is exactly block-additive, $p(y) = \sum_{K} p_K(y_K)$, each component menu $(\mathcal{Y}_K, p_K, \preceq_K)$ is
searchable, and each $p_K$ is strictly increasing along $\preceq_K$ (by block additivity and strict monotonicity of $p$). Let $y^1_K$ denote the $\preceq_K$-minimal element of
$\mathcal{Y}_K$ and normalize $\tilde{p}_K := p_K - p_K(y^1_K)$, so that each $\tilde{p}_K$ satisfies the hypotheses of \Cref{lem:normalization}. Since each $p_K$ is
strictly increasing, the minimal price on the grid is given by 
\[\min_{y \in \mathcal{Y}} p(y) = \sum_{K} p_K(y^1_K)\,.\]
Because the original menu contains the outside option and the deletion preserves the payoff of each type, type $\theta = 0$ must obtain a non-negative payoff, i.e., $-\min_y p(y) \geq 0$, and hence $\sum_K p_K(y^1_K) \leq 0$. 

Since values are additive and the price is block-additive, the agent's problem separates. Indeed, for every type $\theta$, an option $y \in \mathcal{Y}$ is optimal if and only if $y_K$
maximizes $\theta_K \cdot y_K - \tilde{p}_K(y_K)$ over $\mathcal{Y}_K$ for every $K \in \mathcal{K}$. Moreover, note that for almost every $\theta$, every component problem
has a unique solution, so that $\chi_K(\theta)$ depends on $\theta$ only through $\theta_K$. Therefore, we have 
\[\text{Rev}(\mathcal{Y}, p, \preceq) = \sum_K \E_{\mu}\big[p_K(\chi_K(\theta))\big] \leq \sum_K \E_{\mu_K}\big[\tilde{p}_K(\chi_K(\theta_K))\big] \leq \sum_K \text{OPT}_K\,,\]
where the first inequality uses $\sum_K p_K(y^1_K) \leq 0$ and the second inequality uses \Cref{lem:normalization} applied to each $(\mathcal{Y}_K, \tilde{p}_K, \preceq_K)$. 

Now, for each $K$, \Cref{thm:multiplegood} applied to the $|K|$-good problem with distribution $\mu_K$ yields an optimal searchable upgrade menu $\mathcal{M}^\star_K$ with at most $|K|$ tiers, the highest tier being the full bundle of the goods in $K$, and revenue $\text{OPT}_K$. By Step 1 of the proof of \Cref{thm:multiplegood}, its prices are
strictly increasing along its order. Now, assemble the product menu $\mathcal{Y}^\star := \prod_K \mathcal{M}^\star_K$ with the block-additive price $p^\star(y) := \sum_K p^\star_K(y_K)$ and the product of the component orders. This menu contains the outside option, its price function is strictly increasing, and it is grid-searchable by part (ii) of \Cref{thm:grid}. Since the agent's choice problem separates across components, its revenue is $\sum_K \text{OPT}_K$, which attains the above upper bound. Note that $\mathcal{Y}^\star$ is a block-separable upgrade menu with the desired tier structure, completing the proof. 

\subsection{Proof of \texorpdfstring{\Cref{cor:grid-ordeal}}{}}
Fix any partition $\mathcal{K}$ of the $n+1$ allocation dimensions and any grid-searchable product menu $(\mathcal{Y}, p, \preceq)$ with a strictly increasing price function. Recall that payments are assumed to be non-negative $p \geq 0$. Also recall that the menu contains the outside option $(0, \dots, 0)$ with $p(0) = 0$ and that we use the notation $(q, z) \in [0,1]\times \R^n_-$ where $z = -y$. 

We first claim that, for every component $K$, the option $0_K$ is the $\preceq_K$-minimal element of $\mathcal{Y}_K$. Indeed, if $0_K$ had a $\preceq_K$-predecessor $r_K$, then strict monotonicity of $p$ would give $p(r_K, 0_{-K}) < p(0, \dots, 0) = 0$, contradicting $p \geq 0$. In particular, \Cref{lem:reduction} never deletes the options $0_K$, and by
the proof of \Cref{thm:grid}, after the modification in \Cref{lem:reduction}, the price function is block-additive: 
\[p(y) = \sum_K P_K(y_K), \qquad P_K(y_K) := p(y_K, 0_{-K}) \geq 0, \qquad
P_K(0_K) = 0\,,\]
where each component menu $(\mathcal{Y}_K, P_K, \preceq_K)$ is searchable, and each $P_K$ is strictly increasing. Let $K^\star \in \mathcal{K}$ denote the component
containing the productive dimension $q \in [0,1]$. 

Now we claim that, in any component $K \neq K^\star$, every type optimally chooses $0_K$. Indeed, fix any such $K$ and any $z_K \in \mathcal{Y}_K$ with $z_K \neq 0_K$. Since payoffs
are additively separable across components and the price function is block-additive, replacing $z_K$ by $0_K$ changes the payoff of type $(v, c)$ by
\[\big(c_K \cdot 0_K - P_K(0_K)\big) - \big(c_K \cdot z_K - P_K(z_K)\big) = - c_K \cdot z_K + P_K(z_K) > 0\,,\]
where the inequality holds because $c_K \geq 0$ and $z_K \leq 0$, and because $P_K(z_K) > P_K(0_K) = 0$ since $0_K$ is the $\preceq_K$-minimal element of $\mathcal{Y}_K$ and $P_K$ is strictly increasing along $\preceq_K$.

Thus, the $\alpha$-Welfare generated by the product menu equals the $\alpha$-Welfare generated by the searchable menu $(\mathcal{Y}_{K^\star}, P_{K^\star}, \preceq_{K^\star})$, viewed as a menu in the screening problem with allocation $q$, money $p$, and the ordeals in the set $K^\star \setminus \{q\}$, under the marginal distribution of $(v, c_{K^\star}, w)$.\footnote{The marginal inherits a full-support density, bounded first moments, and the
bounded support of $w$, which is all that the proof of \Cref{thm:ordeal} uses.} By \Cref{thm:ordeal}, there exists a posted-requirement menu $\{(0, 0, 0), (1, p^\star,
y^\star_{K^\star})\}$ in that problem with weakly higher $\alpha$-Welfare. Now, setting $y^\star_i := 0$ for all $i \notin K^\star$ extends it to a posted-requirement menu of the
full problem with the same objective value. 

Finally, any posted-requirement menu is a binary menu and hence searchable. The constructed one is also clearly grid-searchable with the partition given by $\mathcal{K}$. Moreover, whenever $p^\star > 0$, the price function would be strictly increasing and hence attain the upper bound of the $\alpha$-Welfare within this class. If the objective is profit maximization, then this is automatic and moreover, by the proof of \Cref{thm:ordeal}, no costly screening would be involved and a posted price is optimal. This completes the proof of \Cref{cor:grid-ordeal}.

\subsection{Proof of \texorpdfstring{\Cref{cor:grid-tax}}{}}

We prove that $(i) \implies (ii) \implies (iii) \implies (i)$.

The implication $(ii) \implies (iii)$ is immediate. We prove the other two. Throughout, we write $S(y) := \sum_i y_i - T(y)$ for total disposable income. The  assumption that marginal tax rates are strictly below $1$ is equivalent to $S$ being strictly increasing in each coordinate. Note that given the Lagrangian formulation, it is without loss of generality to suppress the search over the consumption dimension: the payoff $u(c,\theta) - \lambda_\theta c$ is concave in $c$ and the tax does not depend on $c$, so this component is searchable for every type, separable from the income components, and imposes no restriction on $T$. Thus, we focus on characterizing grid-searchability via searching on income grids. 

Let $z_i := \overline{y} - y_i$ so that a grid $\prod_i Y_i$ becomes a grid $\prod_i Z_i$. Moreover, let 
\[\pi(z) := - S\big(\overline{y}\mathbf{1} - z\big)\,.\]
Since $u$ is strictly increasing in consumption, the
budget constraint binds and the multiplier satisfies $\lambda_\theta > 0$. Now, note that type $\theta$'s payoff, after suppressing the consumption dimension, is given by  
\[\lambda_\theta S(y) -  \sum_i h_i\big(y_i, \theta \big)\,.\]
Equivalently, we can write the payoff as 
\[v(z, \theta) - \pi(z)\,, \,\,\, \text{ where } v(z, \theta) := - \frac{1}{\lambda_\theta}\sum_i h_i(\overline{y} - z_i,
\theta)\,.\]
In the $z$-coordinates, the taxation problem is therefore a quasilinear grid problem in the sense of \Cref{sec:grid}. Moreover, note that $v$ is nondecreasing and concave in $z$ because each $h_i$ is nondecreasing and convex in income. Also, note that $\pi(z)$ is strictly increasing since $S(y)$ is strictly increasing.  Finally, a type with linear preferences $c - \sum_i \alpha_i y_i$ has $\lambda_\theta = 1$,
so that her payoff is
\[G_\alpha(y) := S(y) - \alpha \cdot y \qquad \text{ and } \qquad v(z, \theta) = - \sum_i \alpha_i \big(\overline{y} - z_i\big)  = \alpha \cdot z - \overline{y} \sum_i \alpha_i\,.\]
Since an additive constant does not affect how a type ranks options, this is the linear preference with coefficient vector $\alpha$ in the sense of \Cref{sec:grid}. In particular, richness delivers all $\alpha \in \R^n_+$, so that the transformed problem has a rich type space.

\paragraph{Part 1: $(i) \implies (ii)$.}\hspace{-2mm}Suppose $T = \sum_i T_i$ with each $T_i$ convex. Fix any grid. Note that it suffices to establish grid-searchability when each component menu is ordered by decreasing income. In that case, we have that $z_i$ increases along $\preceq_i$ and $\pi$ is strictly increasing along every component order. Since $T$ is separable, so is $\pi$. Indeed, writing 
\[\pi_i(z_i) := z_i + T_i(\overline{y} - z_i) - \overline{y}\,,\] 
the change of variables gives
$\pi(z) = \sum_i \pi_i(z_i)$.  Moreover, note that each component menu is searchable. Indeed, fix any type $\theta$. The payoff over component $i$ is
\[ - \frac{1}{\lambda_\theta}h_i(\overline{y} - z_i,
\theta) - \pi_i(z_i)\]
which is concave in $z_i$ because $T_i$ and $h_i$ are convex. Thus, each type $\theta$'s payoff over component menu options is semi-strictly quasi-concave. Therefore, each component menu is searchable by \Cref{lem:char}. Part (ii) of \Cref{thm:grid} then yields that the product menu is grid-searchable, as desired.

\paragraph{Part 2: $(iii) \implies (i)$.}\hspace{-2mm}We first show that grid-searchability pins down the component orders. 

\begin{lemma}\label{lem:monotone-order}
For any grid, if the induced product menu is grid-searchable under a product order $\preceq$, then each component order $\preceq_i$ coincides with the income order on $Y_i$
or with its reverse.
\end{lemma}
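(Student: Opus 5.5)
The plan is to reduce grid-searchability of the product menu to ordinary searchability of a single component menu, by choosing linear types that pin down every other income coordinate. Then I would run the three-point argument from the proof of \Cref{thm:progressive} on that component.

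\textbf{Step 1 (reduction to one component).} Fix a component $i$ and let $\underline{y}_j := \min Y_j$ for $j \neq i$. Consider linear types with $\alpha_j = 1$ for all $j \neq i$ and arbitrary $\alpha_i \geq 0$, so the payoff is $G_\alpha(y) = S(y) - \alpha\cdot y$.
\begin{itemize}
\item Because marginal tax rates are strictly below $1$, raising $y_j$ changes $S$ by strictly less than the increment in $y_j$. Hence $G_\alpha$ is strictly decreasing in each $y_j$ with $j\neq i$, whatever the other coordinates are.
\item It follows that, for every $y_i$, the profile $(y_i,\underline{y}_{-i})$ strictly beats any profile that differs only in coordinates $j \neq i$. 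So every global maximizer has $y_{-i} = \underline{y}_{-i}$.
\item For the same reason, at any point $(y_i,\underline{y}_{-i})$ no move in a component $j\neq i$ can improve the payoff.
\end{itemize}
Consequently, $(y_i,\underline{y}_{-i})$ is a grid-local maximum if and only if $y_i$ is a local maximum, along $\preceq_i$, of
\[
g_{\alpha_i}(y_i) := S(y_i,\underline{y}_{-i}) - \alpha_i y_i .
\]
It is a global maximum if and only if $y_i$ globally maximizes $g_{\alpha_i}$ on $Y_i$. Grid-searchability therefore gives the local-to-global property for the one-dimensional menu $(Y_i,\preceq_i)$ and all these payoffs.

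\textbf{Step 2 (interval property).} Grid-searchability as defined does not require the set of global maximizers to be an interval, so I cannot quote \Cref{lem:char} verbatim. Instead I would use its $(a)\Rightarrow(b)$ logic directly. Suppose the $\preceq_i$-middle element of some triple $s \prec_i t \prec_i r$ is strictly worse than both ends. Take the best option weakly $\preceq_i$-below $t$ and the best option weakly $\preceq_i$-above $t$. Both are local maxima, so both must be global maxima and hence tie at the global maximum value. The violations built in Step 3 are strict, and I expect to arrange that this tie is excluded, for instance by perturbing $\alpha_i$ slightly. This is the step I regard as the main technical obstacle. Two fallbacks are available: quoting \Cref{prop:interval}, or exploiting the fact that an equality of affine functions in $\alpha_i$ can hold only at finitely many $\alpha_i$, exactly as in \Cref{lem:D}.

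\textbf{Step 3 (triple argument).} Take incomes $a<b<c$ in $Y_i$. I claim $b$ must be $\preceq_i$-between $a$ and $c$.
\begin{itemize}
\item If $a$ is the $\preceq_i$-middle element, take $\alpha_i=0$. Then $g_0$ is strictly increasing in income, so $g_0(a) < \min\{g_0(b),g_0(c)\}$.
\item If $c$ is the $\preceq_i$-middle element, take $\alpha_i \ge 1$. Then $g_{\alpha_i}$ is strictly decreasing in income, so $c$ is strictly worse than both $a$ and $b$.
\end{itemize}
Both cases produce a strict valley at the middle of the order, an open interval of $\alpha_i$ works in each, and Step 2 yields a contradiction. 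Note that these cases need no convexity of $T$, only monotonicity of $S$.

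Finally, a total order on a finite subset of $\mathbb{R}$ in which the income-middle element of every triple is also the order-middle element must coincide with the income order or its reverse. To see this, compare the order positions of consecutive incomes and induct on $|Y_i|$. Sets with $|Y_i|\le 2$ are trivial.
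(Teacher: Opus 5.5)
\textbf{There is a genuine error, but it is easily repaired.} It sits in the monotonicity claim you use twice. You assert that because marginal tax rates are below $1$, ``raising $y_j$ changes $S$ by strictly less than the increment in $y_j$.'' That is backwards. The assumption $T(y'_j,y_{-j})-T(y_j,y_{-j})<y'_j-y_j$ says only that $S$ is strictly \emph{increasing} in $y_j$. Your claim would instead require $T$ to be increasing, i.e.\ strictly positive marginal rates, which the model does not impose.

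With $\alpha_j=1$, a move from $y_j$ to $y'_j$ changes $G_\alpha$ by $-\bigl(T(y'_j,y_{-j})-T(y_j,y_{-j})\bigr)$. This is strictly positive wherever the schedule has a negative marginal rate, such as a phase-in subsidy. So two things fail:
\begin{itemize}
\item[(1)] In Step~1, with $\alpha_j=1$ for $j\neq i$, the global maximizers need not have $y_{-i}=\underline{y}_{-i}$. Moves in components $j\neq i$ can then be improving, and the reduction to a one-dimensional menu breaks down.
\item[(2)] In Step~3, taking ``$\alpha_i\ge 1$'' in your second case does not make $g_{\alpha_i}$ strictly decreasing in income.
\end{itemize}

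The fix is the device the paper uses. The grid is finite, so fix $M$ exceeding every ratio $\bigl(S(y'_k,y_{-k})-S(y_k,y_{-k})\bigr)/(y'_k-y_k)$ over the grid. Then use $\alpha_j>M$ for all $j\neq i$, and $\alpha_i>M$ in the case where $c$ is order-middle. Richness supplies these types, and with this change Steps~1 and~3 go through as written.

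\textbf{Step~2 is not an obstacle.} In both cases of Step~3 the payoff on $Y_i$ is strictly monotone in income, hence injective on $Y_i$, so the global maximizer is unique. The two local maxima you extract on either side of the valley are distinct options and therefore cannot both be global. This is already a contradiction with the local-to-global property. No interval property, perturbation of $\alpha_i$, or appeal to \Cref{prop:interval} is needed.

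\textbf{Comparison with the paper.} Once repaired, your route is essentially the paper's. Both pin the other components with $\alpha_{-i}>M$, and both use $\alpha_i=0$ and $\alpha_i>M$ to make the payoff strictly increasing or strictly decreasing in income.
\begin{itemize}
\item The paper fixes $\hat y_{-i}$ at any profile of local minima along the $\preceq_k$. It then shows that income along $\preceq_i$ has a unique local maximum and a unique local minimum, hence is monotone.
\item You pin $y_{-i}$ at the global minima, which also identifies the global maximizers. You then exclude valleys and peaks over triples and invoke the fact that betweenness-preservation forces monotonicity.
\end{itemize}
``No valley'' and ``no second local maximum'' are equivalent for a sequence of distinct values, so the two arguments differ only in packaging.
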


\begin{proof}[Proof of \Cref{lem:monotone-order}]
Since the grid is finite, we may fix $M > 0$ such that $S(y'_k, y_{-k}) - S(y_k, y_{-k}) < M(y'_k - y_k)$ for every component $k$, every $y_k < y'_k$ in $Y_k$, and every $y_{-k}$. For a linear type $\alpha$, moving within component $k$ from $y_k$ to $y'_k$ changes the payoff by $S(y'_k, y_{-k}) - S(y_k, y_{-k}) - \alpha_k (y'_k - y_k)$. Note that the sign of this change coincides with the sign of $y'_k - y_k$ when $\alpha_k = 0$, and is opposite to it when $\alpha_k > M$.

Fix any component $i$. Consider linear types with $\alpha_k > M$ for every $k \neq i$. Note that for any such type, a profile can be locally optimal only if, in every component $k \neq i$, its income level is a local minimum of income level along $\preceq_k$. Now, fix any profile $\hat{y}_{-i}$ with this property, which exists because every finite
sequence has a local minimum. First, consider any such linear type with $\alpha_i = 0$. Then $(y_i, \hat{y}_{-i})$ is locally optimal if and only if $y_i$ is a local maximum of income levels along $\preceq_i$. If the income levels along $\preceq_i$ had two local maxima, at income levels $a < a'$, then $(a, \hat{y}_{-i})$ and $(a', \hat{y}_{-i})$ would both be locally optimal for this linear type with $\alpha_i = 0$, while the former yields a strictly lower payoff because $S$ is strictly increasing in $y_i$---hence $(a, \hat{y}_{-i})$ would be a local but not a global maximum for the type, contradicting grid-searchability. Now, consider any such linear type with $\alpha_i > M$. A similar argument shows that $(y_i, \hat{y}_{-i})$ is locally optimal if and only if $y_i$ is a local \textit{minimum} of income levels along $\preceq_i$. Therefore, if the income levels along $\preceq_i$ had two local minima, at income levels $b < b'$, then $(b, \hat{y}_{-i})$ and $(b', \hat{y}_{-i})$ would both be locally optimal, while the latter yields a strictly lower payoff because $\alpha_i > M$---again contradicting grid-searchability.

Finally, note that a finite sequence of distinct values with a unique local maximum must be strictly increasing and then strictly decreasing. Now, if both regions were non-degenerate, then the first and the last indices would both be local minima, which we have just ruled out. Therefore, the income levels along $\preceq_i$ are either strictly increasing or strictly decreasing, as desired.
\end{proof}

\textbf{Step 1.} Fix any grid and suppose that the induced product menu is grid-searchable under a product order $\preceq$. By \Cref{lem:monotone-order} and reversing the ordering if necessary, we may assume that each $\preceq_i$ orders $Y_i$ by decreasing income. Then, $z_i$ increases along $\preceq_i$, and $\pi$ is strictly increasing along every component order. Since the transformed problem has a rich type space and $v$ is nondecreasing and concave in $z$, part (i) of \Cref{thm:grid} applies. Moreover, every component menu is
one-dimensional and every adjacent step strictly increases $z_i$, so that no option is
deleted by \Cref{lem:reduction}. Therefore, the proof of that part applies to the menu itself and yields that $\pi$ is exactly block-additive with each component menu
searchable. Now, undoing the change of variables, there
exist functions $\{T^Y_i\}_i$ such that
\[T(y) = \sum_i T^Y_i(y_i) \qquad \text{ for all $y \in Y$}\,,\]
and, for every $i$, the menu $\{(y_i, T^Y_i(y_i)) : y_i \in Y_i\}$ is searchable for all linear types with payoffs $\beta y_i - T^Y_i(y_i)$, where $\beta = 1 - \alpha_i \leq 1$. Consequently, $T^Y_i$ is discretely convex by the proof of 
\Cref{thm:progressive}. 

\textbf{Step 2.} It remains to pass from grids to the tax schedule. Define $T_i(y_i) := T(y_i, 0_{-i}) - T(0)$ for every $i$ and every $y_i \in [0, \overline{y}]$. Now, fix any $y \in [0, \overline{y}]^n$ and apply \textbf{Step 1} to the grid
$\prod_i \{0, y_i\}$: there exist functions $\{\tau_i\}_i$ with $T = \sum_i \tau_i$ on this grid, and evaluating at $0$ and at $(y_i, 0_{-i})$ gives
\[\tau_i(y_i) - \tau_i(0) =  T(y_i, 0_{-i}) - T(0) = T_i(y_i)\]
for all $i$. Therefore, we have 
\[T(y) = \sum_i \tau_i(y_i) = \sum_i \tau_i(0) + \sum_i T_i(y_i) = T(0) + \sum_i
T_i(y_i)\,.\]
Thus, absorbing the constant $T(0)$ into any one of the functions $T_i$, the tax schedule $T$ is additively separable. Finally, fix any $i$ and any $y^1 < y^2 < y^3$ in $[0, \overline{y}]$. Now, apply
\textbf{Step 1} to the grid $Q := \{y^1, y^2, y^3\} \times \prod_{j \neq i} \{0\}$. It yields functions $\{\sigma_k\}_k$ with $T = \sum_k \sigma_k$ on $Q$, where $\sigma_i$ is
discretely convex on $\{y^1, y^2, y^3\}$. Evaluating this decomposition at the points $(y_i, 0_{-i})$ for $y_i \in \{y^1, y^2, y^3\}$ gives
\[\sigma_i(y_i) = T(y_i, 0_{-i}) - \sum_{j \neq i} \sigma_j(0) = T_i(y_i) + T(0) -
\sum_{j \neq i} \sigma_j(0)\,,\]
where the second equality uses the definition of $T_i$. Therefore, $\sigma_i$ and $T_i$ differ by a constant on $\{y^1, y^2, y^3\}$. Since discrete convexity is unaffected by adding a constant, $T_i$ is discretely convex on $\{y^1, y^2, y^3\}$. Thus, we have 
\[\frac{T_i(y^2) - T_i(y^1)}{y^2 - y^1} \;\leq\; \frac{T_i(y^3) - T_i(y^2)}{y^3 - y^2}\,,\]
or, rearranging,
\[T_i(y^2) \;\leq\; \frac{y^3 - y^2}{y^3 - y^1} T_i(y^1) + \frac{y^2 - y^1}{y^3 - y^1}
T_i(y^3)\,.\]
To conclude, fix any $y^1 < y^3$ in $[0, \overline{y}]$ and any $t \in (0,1)$, and apply the above with $y^2 := t y^1 + (1-t) y^3$. Since $\frac{y^3 - y^2}{y^3 - y^1} = t$ and $\frac{y^2 - y^1}{y^3 - y^1} = 1 - t$, the displayed
inequality is then 
\[T_i\big(t y^1 + (1-t) y^3\big) \;\leq\; t\, T_i(y^1) + (1-t)\, T_i(y^3)\,.\]
Since $y^1 < y^3$ and $t \in (0,1)$ were arbitrary, $T_i$ is progressive, as desired.

\section{A Search-Game Foundation}\label{app:one-step}

In this section, we formalize a notion of search simplicity that characterizes searchable menus. Consider a finite menu $(\mathcal{M}, \preceq)$ with size $m$. Consider a pair of extensive-form games with a single player defined as follows. 

In the \textit{\textbf{ascending search game}}, the nodes consist of $(1, \dots, m)$. The starting node is index $1$. At each node $s$ before reaching index $m$, the agent has the choice to either clinch the option $x^s \in \mathcal{M}$, or move to the next node $s + 1$. At node $m$, the agent can choose either to clinch $x^m$ or choose any other option from the menu $\{1, \dots, m-1\}$. 

Symmetrically, in the \textit{\textbf{descending search game}}, the nodes consist of $(m, \dots, 1)$. The starting node is index $m$. At each node $s$ before reaching index $1$, the agent has the choice to either clinch the option $x^s \in \mathcal{M}$, or move to the next node $s - 1$. At node $1$, the agent can choose either to clinch $x^1$ or choose any other option from the menu $\{2, \dots, m\}$. 

Clearly, any dominant strategy in either search game picks an optimal choice from the menu. We adopt the notion of \textit{\textbf{one-step dominance}} from \citet{pycia2023theory}. Formally, we say that a strategy is \textit{\textbf{one-step dominant}} if at each on-path information set $I^\star$, there exists a strategic plan for all subsequent immediate information sets $I$ such that the worst outcome under the one-step-ahead plan is weakly preferred to the best outcome under any deviation at the current information set $I^\star$. We say that an option $x \in \mathcal{M}$ is \textit{\textbf{one-step implementable}} for type $\theta$ if there exists a one-step dominant strategy for type $\theta$ whose outcome is $x$. Following \citet{pycia2023theory}, we say that a search game is \textit{\textbf{one-step simple}} if every type $\theta$ has a one-step dominant strategy. 

We will state our result for regular environments, defined in \Cref{sec:char}. We additionally say that a finite menu $(\mathcal{M}, \preceq)$ is \textit{\textbf{irredundant}} if, for any two distinct options $x \neq x' \in \mathcal{M}$, some type strictly prefers $x'$ to $x$---that is, no option is weakly dominated by another option. If some option were weakly dominated by another, then deleting it would change no type's optimal value and preserve searchability. Thus, irredundancy is without loss of generality for our purposes. Note that irredundancy implies $\mathcal{G}^k \neq \emptyset$ for all $k \in \{2, \dots, m\}$. Recall also that a menu is \textit{\textbf{interior}} if $\mathcal{H}^k \neq \Theta$ for all $k$.  

\begin{prop}\label{prop:search}
Suppose $(\mathcal{M},\preceq)$ is a finite, interior, and irredundant menu in a regular environment. Then, $(\mathcal M,\preceq)$ is searchable if and only if both the ascending and the descending search games are one-step simple.
\end{prop}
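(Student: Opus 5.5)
The plan is to reduce both directions to the semi-strict quasi-concavity characterization in \Cref{lem:char} and the weak-nesting characterization in \Cref{cor:regular}. The intermediate object is a pointwise characterization of one-step simplicity of each search game in terms of $g_\theta(s):=u(x^s,\theta)$.

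\textbf{Step 1: unpack one-step dominance in the ascending game.} Fix $\theta$ and an on-path node $s<m$. Under the Pycia--Troyan definition, the two candidate actions are treated as follows. \emph{Clinching} $x^s$ is one-step dominant iff $g_\theta(s)\ge \max_{t>s} g_\theta(t)$, because the best outcome under the deviation ``continue'' is the best option reachable later. \emph{Continuing} is one-step dominant iff some one-step-ahead plan at $s+1$ has worst outcome at least $g_\theta(s)$. If that plan clinches at $s+1$, this requires $g_\theta(s+1)\ge g_\theta(s)$. A plan that continues again has worst outcome $\min_{t> s+1} g_\theta(t)$ (or a free final choice at node $m$), which I will show is never needed beyond the clinch case. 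The final node is trivially fine.

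Hence a one-step dominant strategy exists iff, for $s^\star$ the first index with $g_\theta(s^\star+1)<g_\theta(s^\star)$, we have $g_\theta(s^\star)\ge g_\theta(t)$ for all $t>s^\star$. Equivalently, the agent climbs weakly and stops at the first strict drop, and that point must dominate everything to its right. The descending game gives the mirror condition. I will write these as conditions (A) and (D) and check them carefully against the definition, including the bookkeeping of which information sets are on path.

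\textbf{Step 2: searchable $\Rightarrow$ both games one-step simple.} By \Cref{lem:char}, each $g_\theta$ is semi-strictly quasi-concave. Once $g_\theta$ strictly drops, it cannot rise again above that level, since otherwise quasi-concavity fails on a triple $s<s+1<t$. A weak plateau followed by a rise is allowed in (A) because the strategy continues through ties. So (A) holds, and by symmetry (D) holds, which yields the one-step dominant strategies.

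\textbf{Step 3: both games one-step simple $\Rightarrow$ searchable.} By \Cref{cor:regular}, using interiority and regularity, it suffices to prove the weak incremental nesting $\mathcal H^{k+1}\subseteq\mathcal H^k$. Suppose $\theta\in\mathcal H^{k+1}\setminus\mathcal H^k$, so $g_\theta(k-1)>g_\theta(k)\le g_\theta(k+1)$.
\begin{itemize}
\item If $g_\theta(k+1)>g_\theta(k)$ strictly, the point is a strict valley. Applying (A) at the first strict drop weakly before $k$ shows that this drop must dominate $x^{k+1}$. Applying (D) at the first strict drop when descending shows the reverse domination. I expect these to combine into a contradiction, perhaps after comparing the two drop points via the chain of weak inequalities between them.
\item The main obstacle is the tie case $g_\theta(k)=g_\theta(k+1)$, where (A) may be satisfied vacuously. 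Here I would use regularity. Since $\mathcal H^k\neq\Theta$ and the indifference set $\{g(k)=g(k+1)\}$ lies in the closure of $\mathcal G^{k+1}$, which is nonempty by irredundancy, there are types $\theta'$ near $\theta$ with $g_{\theta'}(k+1)>g_{\theta'}(k)$. By continuity these types still satisfy $g_{\theta'}(k-1)>g_{\theta'}(k)$. This reduces the tie case to the strict-valley case.
\end{itemize}
Checking that the perturbation keeps the strict inequality on the $k-1$ side is the delicate point, and it is exactly what continuity in $\theta$ buys.
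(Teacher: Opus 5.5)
Your Step~3 breaks down in the strict-valley case, and that case is the heart of the converse. For a single type, conditions (A) and (D) only say that $g_\theta$ is nondecreasing up to $k^-(\theta)$ and nonincreasing from $k^+(\theta)$ on. They impose nothing strictly between the lowest and highest global maximizers. So a type with $g_\theta(k-1)>g_\theta(k)<g_\theta(k+1)$ and global maxima on \emph{both} sides of $k$ satisfies (A) and (D). For example, take $m=3$ and $g_\theta=(1,0,1)$: clinching $x^1$ is one-step dominant in the ascending game and clinching $x^3$ in the descending game, yet $\theta\in\mathcal H^3\setminus\mathcal H^2$. Comparing the two drop points yields nothing, since both are global maximizers with the same value. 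Your argument does close the sub-cases where all optimal indices lie on one side of $k$: if $k^+(\theta)<k$, (D) fails at the rise from $k$ to $k+1$; if $k^-(\theta)>k$, (A) fails at the drop from $k-1$ to $k$. Your tie-case perturbation also validly produces a strict valley. But neither step handles separated global optima, which one-step simplicity for a fixed type genuinely permits.

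The missing idea is a second perturbation that breaks ties among the optima themselves, so that the contradiction is found at a nearby type rather than at $\theta$. Take optimal indices $a<k<b$. Irredundancy makes $\{\theta':u(x^b,\theta')>u(x^a,\theta')\}$ nonempty, so reachability puts $\theta$ in its closure. Intersect this set with the open set on which the valley inequalities persist, and with a neighborhood on which $O(\theta')\subseteq O(\theta)$; such a neighborhood exists by continuity because the menu is finite. The new type still has the valley at $k$ and has lost $a$ from its optimal set. After finitely many iterations the optimal set lies on one side of $k$, and (A) or (D) is violated.

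This is Case (iii) in the paper's proof of \Cref{lem:upgrade}. The paper organizes the converse in exactly this way: one-step simplicity of both games is equivalent to the pseudo quasi-concavity property for every type, and regularity plus irredundancy upgrade that property to weak incremental nesting.

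A smaller point on Step~1: the agent may select any option at node $m$, so continuing keeps all of $\mathcal M$ reachable. Hence clinching $x^s$ requires $x^s$ to be globally optimal, not just better than every $t>s$, and a continue-again plan has worst outcome equal to the global minimum. Your condition (A) survives only because these two requirements coincide along a weakly increasing climb.
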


Intuitively, searchable menus are exactly the ones for which one-step foresight---either clinching the current option, or planning to clinch the adjacent one---is enough for the agent to find an optimal choice.

The proof proceeds in three steps. First, \Cref{lem:onestep} characterizes the one-step dominant strategies in the two search games. Second, \Cref{lem:qc} translates one-step simplicity of the two games into a ``pseudo quasi-concavity'' property of the agent's payoff over the menu: the payoff must be nondecreasing below the lowest optimal option and nonincreasing above the highest optimal option. Third, \Cref{lem:upgrade} shows that, in a regular environment, the ``pseudo quasi-concavity'' property for every type is nevertheless equivalent to the weak incremental nesting condition. Consequently, the result follows from \Cref{cor:regular}. 

Throughout the proof, we fix a type $\theta$ and write 
\[g(s) := u(x^s, \theta) \qquad \text{ for all $s \in S = \{1, \dots, m\}$}\,.\]
For any type $\theta$, we also write $O(\theta) \subseteq S$ for the set of optimal indices for type $\theta$, and let $k^-(\theta)$ and $k^+(\theta)$ denote, respectively, the smallest and the largest element of $O(\theta)$. 

\begin{lemma}\label{lem:onestep}
Fix any type $\theta$ and any $k \in S$. In the ascending search game, $x^k$ is one-step implementable for $\theta$ if and only if $x^k$ is an optimal choice for $\theta$ in $\mathcal{M}$ and 
\[g(1) \leq g(2) \leq \cdots \leq g(k)\,.\]
In the descending search game, $x^k$ is one-step implementable for $\theta$ if and only if $x^k$ is an optimal choice for $\theta$ in $\mathcal{M}$ and 
\[g(k) \geq g(k+1) \geq \cdots \geq g(m)\,.\]
\end{lemma}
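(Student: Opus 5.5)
I would prove the ascending case by unpacking one-step dominance node by node; the descending case then follows by relabeling indices $s \mapsto m+1-s$. Fix $\theta$ and write $g(s)=u(x^s,\theta)$. The game has a simple structure. At each node $s<m$ the agent either clinches $x^s$, which yields the outcome $g(s)$, or continues. At node $m$ she may pick any option.

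The first task is to say exactly when each action is one-step dominant at node $s$.
\begin{itemize}
\item Clinching at $s<m$ is one-step dominant if and only if $g(s)$ weakly exceeds the best outcome under deviation. The deviation is continuing, and its best outcome is $\max_{t>s} g(t)$ together with, via node $m$, $\max_{t<s}g(t)$. So the condition is that $g(s)=\max_S g$, i.e.\ $s$ is optimal.
\item Continuing at $s<m-1$ is one-step dominant if and only if there is a one-step-ahead plan at $s+1$ whose worst outcome beats the best deviation, namely $g(s)$. The relevant one-step-ahead plan is to clinch at $s+1$, with worst outcome $g(s+1)$. Planning to continue again has a worst outcome over all later paths, which is at most $g(s+1)$-type bounds unless further analysis shows otherwise. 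I would check carefully, using Pycia--Troyan's definition, that this worst case includes every option reachable afterwards, including the low-payoff ones. So the condition is $g(s+1)\ge g(s)$.
\item At $s=m-1$, continuing leads to node $m$, where the plan "choose the best" yields $\max_S g\ge g(m-1)$. Continuing is therefore always one-step dominant there.
\item At node $m$, the final choice is simply optimal.
\end{itemize}

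Both directions of the lemma follow from this node-by-node description.

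For sufficiency, suppose $k$ is optimal and $g(1)\le\cdots\le g(k)$. I would take the strategy "continue at $1,\dots,k-1$, clinch at $k$." If $k=m$, the strategy instead chooses $x^m$ at the final node. Each continuation step is justified by $g(s+1)\ge g(s)$, and the clinch step is justified by optimality of $k$. Off-path nodes are irrelevant because the definition only constrains on-path information sets.

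For necessity, suppose a one-step dominant strategy has outcome $x^k$. The outcome is $x^k$ in two situations: either the agent clinches at $k$, or she reaches $m$ and picks $k$. In the first case, optimality of $k$ follows from the clinching condition. The path passes through $1,\dots,k-1$ with "continue", so the continuation condition gives $g(s)\le g(s+1)$ for $s\le k-1$. One subtlety is whether continuing at a node $s<k-1$ could be justified by a plan other than clinching at $s+1$. I would argue it cannot: any plan at $s+1$ has worst outcome at most $g(s+1)$ when clinching is among its possibilities, and a plan of continuing has worst outcome no larger than the minimum over later clinch options. That minimum still needs comparison, so I would show the worst case under continuing is at most $g(s+1)$ in the relevant sense.

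The second case is reaching $m$ and choosing $k<m$. The chosen option must be optimal, and the path continued through every node. Hence $g(1)\le\cdots\le g(m-1)$, which includes the needed chain up to $k$.

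The main obstacle is pinning down precisely the worst outcome of the "continue again" plan under Pycia--Troyan's definition, which fixes only actions at the immediate next information set. That determines whether continuation is justified only by $g(s+1)\ge g(s)$. My first attempt would use the fact that the worst case of continuing ranges over all later outcomes, including returning to the earlier low options at node $m$. Its minimum is then at most $g(s+1)$'s competitors, so clinching at $s+1$ is the best one-step plan.
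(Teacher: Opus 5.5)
Your proposal is correct and follows essentially the same route as the paper. Clinching requires global optimality because every option stays reachable after continuing. Continuing at $s\le m-2$ forces $g(s)\le g(s+1)$, since the ``continue'' plan has worst case $\min_S g$: if that plan works, then already $g(s)=\min_S g\le g(s+1)$. Sufficiency uses the strategy that continues up to $k$ and clinches there. The one loose end is when $x^m$ is clinched at node $m$: the last link $g(m-1)\le g(m)$ then comes from optimality of $x^m$, not from a continuation condition (which, as you note, is vacuous at $m-1$), exactly as the paper handles it.
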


\begin{proof}[Proof of \Cref{lem:onestep}]
We prove the statement for the ascending search game. The argument for the descending search game is symmetric. 

First, note that any strategy in the ascending search game falls into one of two classes: either the agent clinches at some node $j < m$, in which case the outcome is $x^{j}$ and the on-path information sets are $1, \dots, j$; or the agent moves on at every node $s < m$ and selects some option $x^{j}$ at node $m$, in which case the outcome is $x^{j}$ and the on-path information sets are $1, \dots, m$. Note also that from any node $s \leq m$, every option in $\mathcal{M}$ is attainable, since the agent can move on to node $m$ and select it there.

\textbf{Necessity:} Fix a one-step dominant strategy with outcome $x^{j}$. We first claim that $x^{j}$ is an optimal choice for $\theta$. Suppose the agent clinches at node $j < m$. At the on-path information set $j$, the only deviation is to move on to node $j+1$, from which every option in $\mathcal{M}$ is attainable; hence, the best outcome under a deviation is a $\theta$-optimal option, and one-step dominance requires  
\[u(x^{j}, \theta) \geq u(x, \theta) \qquad \text{ for all $x \in \mathcal{M}$}\,.\]
Suppose instead the agent selects $x^{j}$ at node $m$. At the on-path information set $m$, the deviations are exactly the selections of the other options, and one-step dominance requires the same inequality. In either case, $x^{j}$ is an optimal choice for $\theta$. 

We now claim that $g$ is nondecreasing on $\{1, \dots, j\}$. Fix any on-path information set $s$ at which the strategy prescribes ``Continue''. The only deviation at $s$ is to ``Clinch'' $x^s$, whose outcome is $x^s$ with certainty. Consider the possible plans at the immediate successor node $s+1$. If $s + 1 < m$, the plan to ``Clinch'' $x^{s+1}$ yields the outcome $x^{s+1}$ with certainty, whereas the plan to ``Continue'' keeps it feasible to select every option in which case the worst outcome under that plan is a $\theta$-worst option of $\mathcal{M}$. Therefore, regardless of which plan is used, we must have $g(s) \leq g(s+1)$, whenever $s \leq m - 2$.  If the agent clinches at node $j < m$, the on-path information sets at which the strategy prescribes moving on are $s = 1, \dots, j-1$, all of which satisfy $s \leq m-2$; thus, $g$ is nondecreasing on $\{1, \dots, j\}$, as desired. If instead the agent selects $x^{j}$ at node $m$, the above shows that $g$ is nondecreasing on $\{1, \dots, m-1\}$. This proves the claim when $j \leq m - 1$. When $j = m$, we have $g(m) \geq g(m-1)$ because $x^{m}$ is an optimal choice by the previous paragraph, and hence $g$ is nondecreasing on $\{1, \dots, m\}$, as desired. 

\textbf{Sufficiency:} Suppose $x^k$ is an optimal choice for $\theta$ and $g$ is nondecreasing on $\{1, \dots, k\}$. Consider the strategy that selects ``Continue'' at every node $s < k$ and ``Clinch'' $x^k$ at node $k$ (when $k = m$, selects $x^m$ at node $m$). Fix any on-path information set $s < k$. Consider the plan to ``Clinch'' $x^{s+1}$ at node $s+1$ (when $s + 1 = m$, to select $x^{m}$ at node $m$). The outcome under this plan is $x^{s+1}$ with certainty, the only deviation at $s$ yields $x^s$, and $g(s+1) \geq g(s)$ by assumption. At the on-path information set $k$, any deviation attains at best a $\theta$-optimal option, and $u(x^k, \theta)$ is the optimal value by assumption. Therefore, this strategy that yields outcome $x^k$ is one-step dominant.
\end{proof}

\begin{lemma}\label{lem:qc}
Both search games are one-step simple if and only if, for every type $\theta$, $u(x^s, \theta)$ is nondecreasing in $s$ on $\{1, \dots, k^-(\theta)\}$ and nonincreasing in $s$ on $\{k^+(\theta), \dots, m\}$. 
\end{lemma}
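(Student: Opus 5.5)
The plan is to read \Cref{lem:qc} directly off \Cref{lem:onestep}, one type at a time. By definition, a search game is one-step simple exactly when every type has a one-step dominant strategy. Equivalently, every type $\theta$ has some option $x^k$ that is one-step implementable. So both games are one-step simple if and only if, for every $\theta$, two conditions hold. First, there exists $k\in O(\theta)$ with $g(1)\le\cdots\le g(k)$ (ascending game). Second, there exists $k'\in O(\theta)$ with $g(k')\ge\cdots\ge g(m)$ (descending game). It then remains to show that, for a fixed $\theta$, these two existence statements are equivalent to the monotonicity conditions at $k^-(\theta)$ and $k^+(\theta)$ respectively.

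\emph{Ascending game.} Suppose $g$ is nondecreasing on $\{1,\dots,k^-(\theta)\}$. Since $k^-(\theta)\in O(\theta)$, \Cref{lem:onestep} immediately makes $x^{k^-(\theta)}$ one-step implementable, so the ascending game is one-step simple for $\theta$.

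For the converse, suppose some $k\in O(\theta)$ has $g$ nondecreasing on $\{1,\dots,k\}$. Since $k^-(\theta)$ is the smallest optimal index, $k^-(\theta)\le k$. Hence $\{1,\dots,k^-(\theta)\}\subseteq\{1,\dots,k\}$, and $g$ is nondecreasing on the smaller set as well. I will also note that $g(k^-(\theta))=g(k)$, since both indices are optimal, so no issue arises at the endpoint; in fact this is not even needed, because restricting a monotone sequence preserves monotonicity.

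\emph{Descending game.} The argument is symmetric. If $g$ is nonincreasing on $\{k^+(\theta),\dots,m\}$, then $x^{k^+(\theta)}$ is one-step implementable by the descending half of \Cref{lem:onestep}. Conversely, any witness $k'\in O(\theta)$ satisfies $k'\le k^+(\theta)$, so $\{k^+(\theta),\dots,m\}\subseteq\{k',\dots,m\}$ and monotonicity restricts.

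Combining the two games over all types gives the stated equivalence. I do not expect a real obstacle here; all the substantive work is already in \Cref{lem:onestep}. The only point that needs care is observing that one-step simplicity only asks for the \emph{existence} of a one-step dominant strategy per type. The extremal optimal indices $k^\pm(\theta)$ are the canonical witnesses, because the required monotone segments are shortest there. Finiteness of $\mathcal{M}$ guarantees that $O(\theta)$ is nonempty, so $k^\pm(\theta)$ are well defined.
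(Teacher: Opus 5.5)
Your proposal is correct and matches the paper's proof essentially step for step. You use \Cref{lem:onestep} to reduce one-step simplicity of each game, type by type, to the existence of an optimal index with a monotone prefix (resp.\ suffix). You take $k^-(\theta)$ (resp.\ $k^+(\theta)$) as the witness in one direction, and in the other you use $k^-(\theta)\le k$ (resp.\ $k'\le k^+(\theta)$) to restrict the monotonicity.
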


\begin{proof}[Proof of \Cref{lem:qc}]
Fix any type $\theta$ and write $k^- = k^-(\theta)$ and $k^+ = k^+(\theta)$. By \Cref{lem:onestep}, the ascending search game admits a one-step dominant strategy for type $\theta$ if and only if $g$ is nondecreasing on $\{1, \dots, k\}$ for some optimal index $k$. If $g$ is nondecreasing on $\{1, \dots, k^-\}$, this holds with $k = k^-$; conversely, since every optimal index satisfies $k \geq k^-$, if $g$ is nondecreasing on $\{1, \dots, k\}$ for some optimal index $k$, then $g$ is nondecreasing on $\{1, \dots, k^-\}$. Hence the ascending search game admits a one-step dominant strategy for type $\theta$ if and only if $g$ is nondecreasing on $\{1, \dots, k^-\}$. Symmetrically, the descending search game admits a one-step dominant strategy for type $\theta$ if and only if $g$ is nonincreasing on $\{k^+, \dots, m\}$. The claim follows. 
\end{proof}

\begin{lemma}\label{lem:upgrade}
In a regular environment, a finite and irredundant menu $(\mathcal{M}, \preceq)$ satisfies the weak incremental nesting condition if and only if, for every type $\theta$, $u(x^s, \theta)$ is nondecreasing in $s$ on $\{1, \dots, k^-(\theta)\}$ and nonincreasing in $s$ on $\{k^+(\theta), \dots, m\}$. 
\end{lemma}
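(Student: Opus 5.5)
For the forward direction (weak incremental nesting implies the monotonicity property), no regularity is needed; it is a direct unpacking of the definitions. Fix $\theta$ and write $g(s)=u(x^s,\theta)$, with $k^-=k^-(\theta)$ and $k^+=k^+(\theta)$.
\begin{itemize}
\item Since $k^-$ is the smallest optimal index, $g(k^--1)<g(k^-)$ whenever $k^->1$, so $\theta\in\mathcal H^{k^-}$. Weak nesting then gives $\theta\in\mathcal H^{j}$ for all $j\le k^-$, which says exactly that $g$ is nondecreasing on $\{1,\dots,k^-\}$.
\item Since $k^+$ is the largest optimal index, $g(k^++1)<g(k^+)$, so $\theta\notin\mathcal H^{k^++1}$. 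The contrapositive of nesting gives $\theta\notin\mathcal H^{j}$ for all $j\ge k^++1$. Hence $g$ is strictly decreasing on $\{k^+,\dots,m\}$.
\end{itemize}

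For the converse, suppose the monotonicity property holds for every type, and assume for contradiction that some $\theta$ and $k$ satisfy $g(k)\le g(k+1)$ but $g(k-1)>g(k)$. I would first reduce to a strict valley. If $g(k)=g(k+1)$, then $\theta$ lies in the indifference set between $x^k$ and $x^{k+1}$. By irredundancy, the set of types strictly preferring $x^{k+1}$ to $x^k$ is nonempty, so reachability puts $\theta$ in its closure. The set $\{u(x^{k-1},\cdot)>u(x^k,\cdot)\}$ is an open neighborhood of $\theta$ by continuity. Intersecting the two, we obtain a type $\theta'$ with $g_{\theta'}(k-1)>g_{\theta'}(k)<g_{\theta'}(k+1)$. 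So we may assume $k$ is a strict local minimum.

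At a strict local minimum, the case analysis is as follows.
\begin{itemize}
\item If $k^-\ge k+1$, the property requires $g(k-1)\le g(k)$, a contradiction.
\item If $k^+\le k-1$, the property requires $g(k)\ge g(k+1)$, again a contradiction.
\item The remaining case is $k^-\le k-1$ and $k^+\ge k+1$, with $k$ not optimal: there are optimal options strictly on both sides of $k$.
\end{itemize}

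The main obstacle is this last case, and regularity is what handles it. I would perturb $\theta$ so that its optimal set lies entirely on one side of $k$. Let $U$ be the open set of types for which two kinds of inequality hold:
\begin{itemize}
\item the strict valley inequalities at $k$, and
\item the strict inequalities $u(x^j,\cdot)<u(x^{k^+},\cdot)$ for every non-optimal index $j\notin O(\theta)$.
\end{itemize}
Then $U$ is open by continuity and finiteness, and for every type in $U$ the optimal set is contained in $O(\theta)$.

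Next, pick an optimal $j<k$ and an optimal $j'>k$. Irredundancy together with reachability yields a type in $U$ that strictly prefers $x^{j'}$ to $x^j$. This removes $j$ from the optimal set while adding a new open strict inequality, and we shrink $U$ accordingly. Iterating over the finitely many optimal indices below $k$, each time pitting one against $k^+$ (or against whichever upper option remains optimal), we reach a type whose optimal set lies strictly above $k$ while $k$ is still a strict local minimum. This falls under the first bullet of the case analysis, giving the desired contradiction.

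The delicate point is bookkeeping during the iteration: at each step the new type must stay in the intersection of all previously imposed open sets. Reachability only guarantees density near points of the indifference set, so at each step I apply it at the current type. That type is indifferent between its remaining optima, since they share the maximal value.
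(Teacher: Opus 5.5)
Your proof is correct and is essentially the paper's argument: the same split according to where the non-optimal index $k$ sits relative to $k^-(\theta)$ and $k^+(\theta)$, and the same iterative use of irredundancy, reachability and continuity to break ties between optima on opposite sides of $k$ until the optimal set is one-sided. The only structural difference is that your up-front reduction to a strict valley absorbs the paper's separate treatment of the case $k>k^+(\theta)$, and it replaces the paper's invariant ``$u(x^{b_i},\theta_i)>u(x^k,\theta_i)$ for some $b_i>k$'' by $u(x^{k+1},\cdot)>u(x^k,\cdot)$.

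One small correction. Nothing in your iteration guarantees that an upper optimum survives, so the terminal type may have all its optima strictly below $k$ rather than above. Because the strict valley is preserved throughout, that outcome is ruled out by your second bullet rather than your first.
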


\begin{proof}[Proof of \Cref{lem:upgrade}]
Suppose first that $(\mathcal{M}, \preceq)$ satisfies the weak incremental nesting condition, and fix any type $\theta$. Since $\mathcal{H}^{k+1} \subseteq \mathcal{H}^{k}$ for all $k$, there exists some $j \in S$ such that $\theta \in \mathcal{H}^k$ for all $k \leq j$ and $\theta \not\in \mathcal{H}^k$ for all $k > j$. By the definition of the incremental better-off sets, this means  
\[g(1) \leq \cdots \leq g(j) \quad \text{ and } \quad g(j) > \cdots > g(m)\,.\]
In particular, the maximal payoff is attained at $j$, and hence $k^+(\theta) = j$ and $k^-(\theta) \leq j$ (so $g$ stays flat from $k^-(\theta)$ to $k^+(\theta) = j$). Therefore, we have that $g$ is nondecreasing on $\{1, \dots, k^-(\theta)\}$ and nonincreasing on $\{k^+(\theta), \dots, m\} = \{j, \dots, m\}$, as desired. 

Conversely, suppose that, for every type $\theta$, $g$ is nondecreasing on
$\{1, \dots, k^-(\theta)\}$ and nonincreasing on $\{k^+(\theta), \dots, m\}$---we refer to this as the \textit{\textbf{pseudo quasi-concavity property}}. We first note an auxiliary observation: since the menu is finite and $u(x, \,\cdot\,)$ is continuous in $\theta$, for every type $\hat{\theta}$ there exists an open set $B(\hat{\theta})$ containing $\hat{\theta}$ such that 
\[O(\theta') \subseteq O(\hat{\theta}) \text{ for all } \theta' \in B(\hat{\theta})\,.\]
Indeed, when $O(\hat{\theta}) \neq S$, take the set of types whose payoffs from every option differ from those of $\hat{\theta}$ by strictly less than
\[\frac{1}{2}\min_{j \notin O(\hat{\theta})} \big\{ \max_{s} u(x^s, \hat{\theta}) \,\, -  \,\,u(x^j, \hat{\theta}) \big\} > 0\,,\]
and when $O(\hat{\theta}) = S$, take $B(\hat{\theta}) = \Theta$. 

Now, fix any $k \in \{2, \dots, m-1\}$ and any $\theta \in \mathcal{H}^{k+1}$. We claim that $\theta \in \mathcal{H}^{k}$. Suppose for contradiction that this is not the case, and thus
\[g(k+1) \geq g(k) \quad \text{ and } \quad g(k-1) > g(k)\,.\]
Since $g(k) < g(k-1) \leq \max_s g(s)$, the index $k$ is not optimal for $\theta$. We consider three cases. 

\textbf{Case (i):} Suppose $k < k^-(\theta)$. Then $\{k-1, k\} \subseteq \{1, \dots, k^-(\theta)\}$, and hence $g(k-1) \leq g(k)$ by the pseudo quasi-concavity property, a contradiction. 

\textbf{Case (ii):} Suppose $k > k^+(\theta)$. Then $\{k, k+1\} \subseteq \{k^+(\theta), \dots, m\}$, and hence $g(k) \geq g(k+1)$ by the pseudo quasi-concavity property, and hence $g(k) = g(k+1)$. Since the menu is irredundant, $\mathcal{G}^{k+1} \neq \emptyset$, and
the reachability condition applied to $x^{k+1}$ and $x^{k}$ gives
$\theta \in \cl(\mathcal{G}^{k+1})$. Since $\theta \in \cl(\mathcal{G}^{k+1})$ and $B(\theta)$ is an open set containing $\theta$, the intersection $\mathcal{G}^{k+1} \cap B(\theta)$ is non-empty. Fix any $\theta' \in \mathcal{G}^{k+1} \cap B(\theta)$. Then, $k^+(\theta') \leq k^+(\theta) < k$ because $O(\theta') \subseteq O(\theta)$, and hence $\{k, k+1\} \subseteq \{k^+(\theta'), \dots, m\}$, while $u(x^{k+1}, \theta') > u(x^{k}, \theta')$. This contradicts the pseudo quasi-concavity property for type $\theta'$. 

\textbf{Case (iii):} Suppose $k^-(\theta) < k < k^+(\theta)$. We first construct a finite sequence of types $\theta_0  = \theta,\, \theta_1,\, \theta_2,\, \dots$ with strictly shrinking optimal sets, $O(\theta_{i+1}) \subsetneq O(\theta_i) \subseteq O(\theta)$, such that each $\theta_i$ satisfies the two properties 
\[u(x^{k-1}, \theta_i) > u(x^{k}, \theta_i) \qquad \text{and} \qquad
u(x^{b_i}, \theta_i) > u(x^{k}, \theta_i) \, \text{ for some index } b_i > k\,.\]
Both properties hold for $\theta_0 = \theta$ with $b_0 := k^+(\theta)$. Now, we describe an iterative construction. Consider any $\theta_i$ that satisfies the two properties with $O(\theta_i)$ containing indices on both sides of $k$. Fix any $a \in O(\theta_i)$ with $a < k$ and any $b \in O(\theta_i)$ with $b > k$. Since the menu is irredundant, the set $V := \{\theta' : u(x^{b}, \theta') >
u(x^{a}, \theta')\}$ is non-empty, and since both $x^a$ and $x^b$ are optimal for $\theta_i$, type $\theta_i$ is indifferent between them. Therefore, the reachability condition applied to $x^b$ and $x^a$ gives $\theta_i \in \cl(V)$. The two properties, with $b_{i+1} := b$, are strict inequalities satisfied by $\theta_i$ (note that $u(x^{b}, \theta_i) > u(x^{k}, \theta_i)$ because $b$ is optimal for $\theta_i$ while $k$ is not)---hence, by continuity, they define an open set containing $\theta_i$. Intersecting this set with $B(\theta_i)$, we may choose $\theta_{i+1} \in V$ satisfying both properties and $O(\theta_{i+1}) \subseteq O(\theta_i)$. Moreover, $a \notin O(\theta_{i+1})$ because $u(x^{b}, \theta_{i+1}) > u(x^{a}, \theta_{i+1})$, so that $O(\theta_{i+1}) \subsetneq O(\theta_i)$, as desired. 

Since the optimal sets are non-empty and strictly shrink, after finitely many steps we must reach a type $\theta^\star = \theta_i$ whose optimal indices lie entirely on one side of $k$ (recall that $k \notin O(\theta_i) \subseteq O(\theta)$ throughout). If $O(\theta^\star) \subseteq \{k+1, \dots, m\}$, then $\{k-1, k\} \subseteq \{1, \dots, k^-(\theta^\star)\}$. But then $u(x^{k-1}, \theta^\star) > u(x^{k}, \theta^\star)$ contradicts the pseudo quasi-concavity property for $\theta^\star$. If $O(\theta^\star) \subseteq \{1, \dots, k-1\}$, then $k^+(\theta^\star) < k < b_i$, and hence $\{k, b_i\} \subseteq \{k^+(\theta^\star),\dots, m\}$. But then $u(x^{b_i}, \theta^\star) > u(x^{k}, \theta^\star)$ contradicts the pseudo quasi-concavity property for $\theta^\star$. 

We obtain a contradiction in all cases. Therefore, $\mathcal{H}^{k+1} \subseteq \mathcal{H}^{k}$ for all $k$, as desired. 
\end{proof}

\begin{proof}[Completion of Proof of \Cref{prop:search}]
Suppose first that $(\mathcal{M}, \preceq)$ is searchable. By \Cref{lem:char},
$u(x^s, \theta)$ is semi-strictly quasi-concave, and hence quasi-concave, in $s$ for every type $\theta$. Now, fix any type $\theta$ and any $s < t \leq k^+(\theta)$. If $t = k^+(\theta)$, then $g(t) \geq g(s)$ because $x^{k^+(\theta)}$ is an optimal choice. If $t < k^+(\theta)$, then quasi-concavity applied to $s < t < k^+(\theta)$ gives $g(t) \geq \min \{g(s), g(k^+(\theta)) \} = g(s)$. Thus, $g$ is nondecreasing on $\{1, \dots, k^+(\theta)\}$. A symmetric argument shows that $g$ is nonincreasing on $\{k^-(\theta), \dots, m\}$. In particular, we must have that $g$ is nondecreasing on $\{1, \dots, k^-(\theta)\}$ and nonincreasing on $\{k^+(\theta), \dots, m\}$. Therefore, both search games are one-step simple by \Cref{lem:qc}. Conversely, suppose that both search games are one-step simple. By \Cref{lem:qc}, for every type $\theta$, we have that $g$ is nondecreasing on $\{1, \dots, k^-(\theta)\}$ and nonincreasing on $\{k^+(\theta), \dots, m\}$. Since the environment is regular and the menu is finite and irredundant, \Cref{lem:upgrade} then implies that $(\mathcal{M}, \preceq)$
satisfies the weak incremental nesting condition. Since the menu is also interior, \Cref{cor:regular} implies that $(\mathcal{M}, \preceq)$ is searchable, concluding the proof. 
\end{proof}

\section{Local-to-Global Property Implies Interval Property in Regular Environments}\label{app:interval}

The analysis in \Cref{app:one-step} also shows that, under the hypotheses of \Cref{prop:search}, condition (i) in the definition of searchability---every locally optimal choice is globally optimal---by itself implies condition (ii)---the set of globally optimal choices is an interval. In this sense, the interval requirement in the definition of searchability is not an additional restriction in regular environments. 

\begin{prop}\label{prop:interval}
Suppose $(\mathcal{M},\preceq)$ is a finite, interior, and irredundant menu in a regular environment. If, for every type $\theta \in \Theta$, every local maximum of $u(x^s, \theta)$ is a global maximum, then $(\mathcal{M}, \preceq)$ is searchable. 
\end{prop}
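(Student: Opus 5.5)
The plan is to route the statement through the chain already built for \Cref{prop:search}: local-to-global $\Rightarrow$ pseudo quasi-concavity $\Rightarrow$ weak incremental nesting (\Cref{lem:upgrade}) $\Rightarrow$ searchability (\Cref{cor:regular}). The last two arrows are available verbatim under the standing hypotheses (finite, interior, irredundant menu in a regular environment). So the only new work is the first arrow. We must show that if every local maximum of $g(s)=u(x^s,\theta)$ is global, then $g$ is nondecreasing on $\{1,\dots,k^-(\theta)\}$ and nonincreasing on $\{k^+(\theta),\dots,m\}$.

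For that first arrow, fix $\theta$ and suppose, for contradiction, that $g$ fails to be nondecreasing on $\{1,\dots,k^-\}$. Then there is some $s<k^-$ with $g(s)>g(s+1)$.

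\begin{itemize}
\item Consider the restriction of $g$ to $\{1,\dots,s\}$ and let $j\le s$ be a maximizer of $g$ on that set.
\item If several indices tie, take $j$ to be the largest one. Then $g(j)\ge g(j-1)$ whenever $j-1$ exists.
\item On the right, if $j=s$ then $g(j)>g(s+1)$ by assumption. If $j<s$, then $g(j)\ge g(j+1)$ because $j+1\le s$ and $j$ maximizes $g$ on $\{1,\dots,s\}$.
\end{itemize}

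Hence $j$ is a local maximum in the finite-menu sense (compare with adjacent indices). By hypothesis it is then a global maximum, so $j\in O(\theta)$. But $j\le s<k^-(\theta)=\min O(\theta)$, a contradiction.

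The symmetric argument handles the block $\{k^+,\dots,m\}$. There we take the smallest maximizer of $g$ on $\{s+1,\dots,m\}$, given some $s\ge k^+$ with $g(s)<g(s+1)$. This establishes the pseudo quasi-concavity property for every $\theta$.

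With pseudo quasi-concavity in hand, \Cref{lem:upgrade} gives the weak incremental nesting condition, using only regularity, finiteness and irredundancy. Interiority then lets \Cref{cor:regular} upgrade this to searchability, which includes the interval property (ii). Equivalently, one could note that pseudo quasi-concavity is exactly the criterion in \Cref{lem:qc}, so both search games are one-step simple, and then invoke \Cref{prop:search}. Either route concludes.

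The main obstacle is not in this note but is inherited. Local-to-global alone does not imply quasi-concavity type by type: a single type may have two separated global maxima with a valley between them. The argument must therefore avoid claiming pointwise semi-strict quasi-concavity and instead pass through the weaker pseudo quasi-concavity. That weaker property is the only thing the local argument delivers. The perturbation step in \Cref{lem:upgrade}, which uses reachability to push $\theta$ toward types with strictly smaller optimal sets, is what excludes such valleys in the aggregate. The one step I would double-check is the tie-breaking choice of $j$, so that the left-neighbor comparison is weak and the local-maximum definition for finite menus is satisfied exactly.
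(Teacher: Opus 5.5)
Your proposal is correct and follows the paper's proof essentially verbatim. A maximizer of $g$ on $\{1,\dots,s\}$ (respectively on the upper block) is a local, hence global, maximum, which contradicts the minimality of $k^-(\theta)$ (respectively the maximality of $k^+(\theta)$), and then \Cref{lem:upgrade} and \Cref{cor:regular} finish the argument. The tie-breaking step you flagged is harmless but unnecessary: any maximizer $j$ on $\{1,\dots,s\}$ already satisfies $g(j)\ge g(j-1)$, because $j-1$ lies in that set.
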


\begin{proof}[Proof of \Cref{prop:interval}]
We follow the same notation as in \Cref{app:one-step}. Fix any type $\theta$. We claim that $g$ is nondecreasing on $\{1, \dots, k^-(\theta)\}$ and nonincreasing on $\{k^+(\theta), \dots, m\}$. Suppose for contradiction that $g(s) > g(s+1)$ for some $s \in \{1, \dots, k^-(\theta) - 1\}$. Let $t^\star$ be a maximizer of $g$ on $\{1, \dots, s\}$. If $t^\star < s$, then every index adjacent to $t^\star$ lies in $\{1, \dots, s\}$, and hence $g(t^\star)$ weakly exceeds the values of $g$ at the adjacent indices. If $t^\star = s$, the same conclusion holds because $g(s) > g(s+1)$. In either case, $t^\star$ is a local maximum of $g$ and hence, by hypothesis, a global maximum. But $t^\star \leq s < k^-(\theta)$, contradicting the minimality of $k^-(\theta)$ among the maximizers of $g$. Therefore, $g$ must be nondecreasing on $\{1, \dots, k^-(\theta)\}$. A symmetric argument shows that $g$ must be nonincreasing on $\{k^+(\theta), \dots, m\}$. Since this holds for every type $\theta$, \Cref{lem:upgrade} implies that $(\mathcal{M}, \preceq)$ satisfies the weak incremental nesting condition. Then, by \Cref{cor:regular}, $(\mathcal{M}, \preceq)$ is searchable, as desired. 
\end{proof}

\end{document}